\newtheorem{theorem}{Theorem}
\newtheorem{corollary}[theorem]{Corollary}
\newtheorem{lemma}[theorem]{Lemma}
\newtheorem{proposition}[theorem]{Proposition}
\newtheorem{definition}[theorem]{Definition}
\title{One-Switch Discount Functions}
\author{Nina Anchugina}
\affil{\small Department of Mathematics, University of Auckland}
\affil[ ]{\small \tt n.anchugina@auckland.ac.nz}
\date{February 2017}
\begin{document}

	\maketitle
		
	\bigskip
	\noindent 
	\begin{comment}
	{\bf Highlights}  
	\begin{itemize}
	\item The linear times exponential discount function is introduced.
	\item It is shown to satisfy the one-switch property.
	\item Preferences with this discount function exhibit strictly increasing impatience.
	\end{itemize}
	\end{comment}
	\par\bigskip
	
	\noindent
	{\bf Abstract.}  Bell \cite{bell1988one} introduced the one-switch property for preferences over sequences of dated outcomes. This property concerns the effect of adding a common delay to two such sequences: it says that the preference ranking of the delayed sequences is either independent of the delay, or else there is a unique delay such that one strict ranking prevails for shorter delays and the opposite strict ranking for longer delays. 
	For preferences that have a discounted utility (DU) representation, Bell \cite{bell1988one} argues that the only discount functions consistent with the one-switch property are sums of exponentials. This paper proves that discount functions of the linear times exponential form also satisfy the one-switch property. We further demonstrate that preferences which have a DU representation with a linear times exponential discount function exhibit increasing impatience (\cite{takeuchi2011non}).  We also clarify an ambiguity in the original Bell \cite{bell1988one} definition of the one-switch property by distinguishing a weak one-switch property from the (strong) one-switch property. We show that the one-switch property and the weak one-switch property definitions are equivalent in a continuous-time version of the Anscombe and Aumann \cite{anscombe1963definition} setting.
	
	\par\bigskip\bigskip
	
	\noindent
	{\bf Keywords:}  Discount function, One-switch property, Time preferences, Linear times exponential, Increasing impatience. \\
	{\bf JEL Classification:}  D90.
	
	\bigskip \bigskip \bigskip
	\vfill
	\noindent
%	$^*$ I thank my supervisor Matthew Ryan for valuable suggestions and detailed comments.  Financial support from the University of Auckland is gratefully acknowledged.	
	\setcounter{page}{0}
	\thispagestyle{empty}
	\newpage

%\title{One-switch discount functions}
%\date{July 2016}
%\begin{document}
%\maketitle
%\setcounter{page}{0}

%\thispagestyle{empty}

\bigskip

In this paper we analyse the one-switch property for intertemporal preferences introduced by Bell \cite{bell1988one}. The one-switch property was initially formulated for preferences over lotteries \cite{bell1988one}. 
It says that the preference ranking of any two lotteries is either independent of wealth, or else there is a unique level of wealth such that one strict ranking prevails for lower wealth levels and the opposite strict ranking for higher wealth levels.
For preferences with an expected utility representation, the one-switch property restricts the form of the Bernoulli utility function.
As demonstrated by Bell \cite{bell1988one}, the utility functions that satisfy this property are the quadratic, the sum of exponentials, the linear plus exponential and the linear times exponential. The properties of these functions, and their possible applications, have been extensively investigated in risk theory
%(see, for example, \cite{bell2000utility} and \cite{bell2001strong}). 
(see, for example, \cite{abbas2011one}, \cite{abbas2012methods}, \cite{bell2000utility} and \cite{bell2001strong}). 
However, it is less well known that Bell \cite{bell1988one} also defined an analogous one-switch property for preferences over sequences of dated outcomes. 
In this case, the one-switch property concerns the effect of adding a common delay to two sequences of dated outcomes: it says that the preference ranking of the delayed sequences is either independent of the delay, or else there is a unique delay such that one strict ranking prevails for shorter delays and the opposite strict ranking for longer delays.
Bell \cite{bell1988one} claims that if preferences have a discounted utility representation, then the only discount functions consistent with the one-switch property are sums of exponentials. 

However, in this paper we show that the one-switch property is also compatible with another form of discount function: the linear times exponential. To the best of our knowledge, this type of discount function has not been previously used in intertemporal context. As we demonstrate in the paper, linear times exponential discount functions exhibit strictly increasing impatience (II). While strictly II has not been a very frequent experimental observation \cite{frederick2002time}, some recent empirical findings support this type of impatience \cite{bleichrodt2016measurement, sayman2009investigation, takeuchi2011non}. We also analyse the distinction between the weak one-switch property and the (strong) one-switch property in the time preference context. This distinction is based on whether weak or strict preferences are reversed at the switch point. While this distinction is inconsequential in the risk set-up with an expected utility representation, matters are not so clear for the intertemporal context, even assuming a discounted utility representation. We show, however, that these
two properties are equivalent in a set-up analogous to that of Anscombe and Aumann \cite{anscombe1963definition}. 

The paper is organized as follows. We start by giving some preliminaries in Section \ref{section:prelimOSP}. 
Section \ref{section:OSP} is devoted to revising Bell's characterisation \cite[Proposition 2]{bell1988one} of the discount functions that exhibit the one-switch property. We first discuss an ambiguity in Bell's \cite{bell1988one} definition of this property, and distinguish a standard (strong) version from an alternative ``weak" one-switch property. We show that the discount functions consistent with the (standard) one-switch property are those which have the sum of exponentials or the linear time exponential form. We also explore the relationship between the one-switch property restricted to preferences over single dated outcomes and the impatience properties of such preferences.
In Section \ref{section:w1s} we study the weak one-switch property. In the context of expected utility preferences over lotteries, where the one-switch property refers to the effect of wealth level on the ranking of two lotteries, the results in \cite{bell1988one} imply that the weak one-switch property is equivalent to the standard one. In the intertemporal context, we establish that the equivalence also holds if we endow $X$ with a mixture set structure and work in an environment similar to of Anscombe and Aumann in \cite{anscombe1963definition}. Finally, Section \ref{section:discOSP} summarizes the results.

\section{Preliminaries}\label{section:prelimOSP}
Consider preferences over sequences of dated outcomes. We work in a continuous time environment throughout this paper. Points in time are elements of the set $T = [0, \infty)$, where the present time corresponds to $t=0$. The set of outcomes is initially assumed to be the interval $X = [0, \infty)$, though we will re-define $X$ to be an arbitrary mixture set in Section \ref{section:w1s}. 

Let ${\mathcal A}_n=\{ \ ({\bf x},{\bf t}) \in X^n\times T^n \ \vert \ t_1<t_2<\ldots<t_n \ \}$ be the set of sequences with $n$ dated outcomes.
Define the set of alternatives ${\mathcal A}$ as follows: ${\mathcal  A}=\cup_{n=1}^{\infty}{\mathcal A}_n.$ The set ${\mathcal A}$ consists of all sequences of finitely many dated outcomes. Elements of ${\mathcal A_1}\subseteq{\mathcal  A}$ are called dated outcomes.

Consider a preference order $\succcurlyeq$ on the set of alternatives ${\mathcal A}$. 

We say that $U$ is a {\em discounted utility (DU) representation} for $\succcurlyeq$, if $U$ represents $\succcurlyeq$ and there exist $(u, D)$, such that  $u \colon X \to \mathbb{R}$ is a utility function (continuous, strictly increasing with $u(0)=0$), $D \colon T \to (0, 1]$ is a discount function (strictly decreasing, $D(0)=1$ and $\lim_{t\to \infty}D(t)=0$) and 
\[
U({\bf x, t}) = \sum_{i=1}^{n} D(t_i)u(x_i)
\]
for all $n$ and every $({\bf x, t}) \in {\mathcal A}_n$. 
Necessary and sufficient conditions for a DU representation were provided by Harvey \cite[Theorem 2.1]{harvey1995proportional}. 

We denote the set of positive integers $\{1, 2, 3, \ldots\}$ by $\mathbb{N}$, and the set of non-negative integers $\{0, 1, 2, 3, \ldots\}$ by $\mathbb{N}_0$, so $\mathbb{N} \subset \mathbb{N}_0$.

\section{The one-switch property}\label{section:OSP}
\subsection{One-switch discount functions}\label{section:OSDiscountFunctions}

For any sequence of dated outcomes $({\bf x}, {\bf t}) \in {\mathcal A}_n$ and any delay $\sigma>0$, let 
\[
	({\bf x}, {\bf t}+\sigma)=({\bf x}, (t_1+\sigma, \ldots, t_n+\sigma))
\]
denote the delayed sequence.
\begin{definition}[\cite{bell1988one}] \label{1s}
	We say that the preferences $\succcurlyeq$ on ${\mathcal A}$ exhibit the {\em one-switch property} if for every pair $({\bf x, t}), ({\bf y, s}) \in {\mathcal A}$ the ranking of $({\bf x}, {\bf t}+\sigma)$ and $({\bf y}, {\bf s}+\sigma)$ is either independent of $\sigma$, or there exists $\sigma^*\geq 0$, such that  
	\begin{align*}
		({\bf x}, {\bf t}+\sigma) &\succ ({\bf y}, {\bf s}+\sigma), \text{ for any } \sigma<\sigma^*,\\
		({\bf x}, {\bf t}+\sigma) &\prec ({\bf y}, {\bf s}+\sigma), \text{ for any } \sigma>\sigma^*
	\end{align*}
	or
	\begin{align*}
		({\bf x}, {\bf t}+\sigma) &\prec({\bf y}, {\bf s}+\sigma), \text{ for any } \sigma<\sigma^*,\\
		({\bf x}, {\bf t}+\sigma) &\succ ({\bf y}, {\bf s}+\sigma), \text{ for any } \sigma>\sigma^*.
	\end{align*}
\end{definition}
It is worth mentioning that Bell's \cite{bell1988one} original verbal definition of the one-switch property uses the ambiguous word ``preferred'', which does not specify whether the preference order is used in a strong or weak sense. Bell's Lemma 3 \cite{bell1988one} implicitly suggests that weak preference is intended, but this Lemma also shows that either interpretation leads to the same restriction on expected utility preferences. Abbas and Bell \cite{abbas2015ordinal} introduce a formal definition which is explicit about preference being strict. Therefore, we define the one-switch property using strict preference order. 

The one-switch property can be stated in a weaker variant, as follows:
\begin{definition}[\cite{bell1988one}]
	We say that the preferences $\succcurlyeq$ on ${\mathcal A}$ exhibit the {\em weak one-switch property} if for every pair $({\bf x, t}), ({\bf y, s}) \in {\mathcal A}$ the ranking of $({\bf x}, {\bf t}+\sigma)$ and $({\bf y}, {\bf s}+\sigma)$ is either independent of $\sigma$, or there exists $\sigma^*\geq 0$, such that  
	\begin{align*}
		({\bf x}, {\bf t}+\sigma) &\succcurlyeq ({\bf y}, {\bf s}+\sigma), \text{ for any } \sigma<\sigma^*,\\
		({\bf x}, {\bf t}+\sigma) &\preccurlyeq ({\bf y}, {\bf s}+\sigma), \text{ for any } \sigma>\sigma^*
	\end{align*}
	or
	\begin{align*}
		({\bf x}, {\bf t}+\sigma) &\preccurlyeq({\bf y}, {\bf s}+\sigma), \text{ for any } \sigma<\sigma^*,\\
		({\bf x}, {\bf t}+\sigma) &\succcurlyeq ({\bf y}, {\bf s}+\sigma), \text{ for any } \sigma>\sigma^*.
	\end{align*}
\end{definition}
In other words, there do not exist $({\bf x, t}), ({\bf y, s}) \in {\mathcal A}$ and $\sigma, \varepsilon$ with $0<\sigma< \varepsilon$ such that 
\begin{align*}
	({\bf x}, {\bf t}) &\succ ({\bf y}, {\bf s}), \\
	({\bf x}, {\bf t}+\sigma) &\prec ({\bf y}, {\bf s}+\sigma), \\
	({\bf x}, {\bf t}+\varepsilon) &\succ ({\bf y}, {\bf s}+\varepsilon),
\end{align*}
or with all strict preferences reversed.

%In the intertemporal context it is not known whether this definition is equivalent (given a DU representation) to an alternative ``weak" version that uses the weak preference relation.
In the intertemporal context it is not known whether this alternative ``weak" version is equivalent (given a DU representation) to Definition \ref{1s}. This question will be investigated is Section \ref{section:w1s}, where we adapt Bell's Lemma 3 \cite{bell1988one} to the temporal setting. We demonstrate that the one-switch property and the weak one-switch property are equivalent in an intertemporal version of the Anscombe and Aumann (AA) \cite{anscombe1963definition} environment similar to that investigated in \cite{anchugina2016simple}.

If preferences $\succcurlyeq$ on ${\mathcal A}$ have a DU representation $(u, D)$, then the one-switch property means that for any $({\bf x, t}), ({\bf y, s}) \in {\mathcal A}$ the function
\[
\Delta(\sigma)= \sum_{i=1}^nD(t_i+\sigma)u(x_i)-\sum_{j=1}^mD(s_j+\sigma)u(y_j)
\] 
either has constant sign or else there is some $\sigma^{*}\geq 0$ such that $\Delta(\sigma)=0$ if and only if $\sigma=\sigma^{*}$ and $\Delta(\sigma')\Delta(\sigma'')>0$ if and only if $\sigma'\neq \sigma^{*}$ and $\sigma ''\neq \sigma^{*}$ are on the same side of $\sigma^{*}$.
That is,
\begin{align}\label{eqn:signcondition}
	\begin{split}
		&sign\left(\Delta(\sigma)\right)=const \text{ for all }\sigma\geq 0, \text{ or else} \\
		&\text{there exists }\sigma^{*} \text{ such that } \Delta(\sigma^*)=0 \text{ and }\\
		&\Delta(\sigma')\Delta(\sigma'')>0 \text{ if }\sigma', \sigma''>\sigma^*\text{ or }\sigma', \sigma''<\sigma^* \text{ and }\\
		&\Delta(\sigma')\Delta(\sigma'')<0 \text{ if } \sigma'<\sigma^*<\sigma'' \text{ or }\sigma''<\sigma^*<\sigma'.
	\end{split}
\end{align}
Figure \ref{fig:difference} provides an illustration of $\Delta(\sigma)$ for preferences which exhibit the one-switch property and have a DU representation. Note that only the sign of $\Delta(\sigma)$ is relevant.
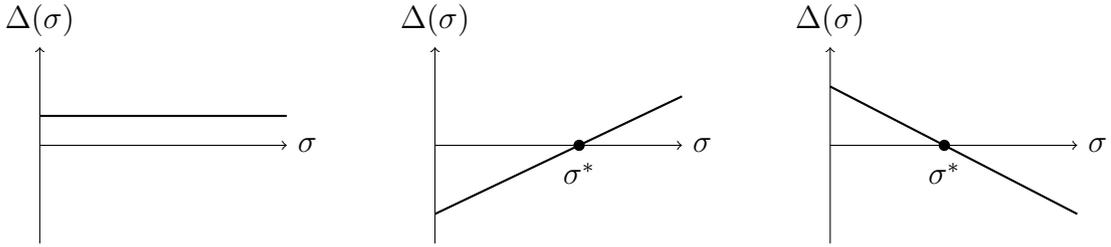
\begin{figure}[!htb] 
	\begin{minipage}{\linewidth}
		\centering
		\begin{tikzpicture}[
		scale=1.3,
		every node/.style={color=black},
		dot/.style={circle,fill=black,minimum size=4pt,inner sep=0pt,
			outer sep=-1pt},
		]
		\draw[name path=xline, ->] (0, 0)--(2.5,0) node(xline)[right] {$\sigma$}; 
		\draw[->] (0,-1)--(0,1) node(yline)[above] {$\Delta(\sigma)$};
		\draw[thick,name path=line,-] (0,0.3)--(2.5,0.3);
		%diagram next
		\begin{scope}[xshift=4cm]
		% axis
		\draw[name path=xline, ->] (0, 0)--(2.5,0) node(xline)[right] {$\sigma$}; 
		\draw[->] (0,-1)--(0,1) node(yline)[above] {$\Delta(\sigma)$};
		\draw[thick, name path=line,-] (0,-0.7)--(2.5,0.5);
		% Intersection of lines
		\path [name intersections={of=xline and line,by=E1}];
		\node [circle, fill=black,inner sep=1.5pt,label=-90:$\sigma^*$] at (E1) {};
		\end{scope}
		%diagram 3
		\begin{scope}[xshift=8cm]
		% axis
		\draw[name path=xline, ->] (0, 0)--(2.5,0) node(xline)[right] {$\sigma$}; 
		\draw[->] (0,-1)--(0,1) node(yline)[above] {$\Delta(\sigma)$};
		\draw[thick,name path=line,-] (0,0.6)--(2.5,-0.7);
		% Intersection of lines
		\path [name intersections={of=xline and line,by=E2}];
		\node [circle, fill=black,inner sep=1.5pt,label=-90:$\sigma^*$] at (E2) {};
		\end{scope}
		\end{tikzpicture}
	\end{minipage}\\
	\caption{Possible behaviour of $\Delta(\sigma)$ for the preferences which exhibit the one-switch property and have a DU representation}
	\label{fig:difference}
\end{figure}

%We say that $\Delta \colon \mathbb{R}_+ \to \mathbb{R}$ satisfies the one-switch property if it satisfies \eqref{eqn:signcondition}.

Note that the assumed properties of $u$ imply $u(X)= [0, \bar{u}]$ for some $\bar{u}>0$ or $u(X) = \mathbb{R}_+$. We say that a discount function $D$ satisfies the {\em extended} one-switch property if the function $\Delta \colon \mathbb{R}_+ \to \mathbb{R}$ defined by
\begin{equation} \label{eqn:strongOSP}
	\Delta(\sigma)= \sum_{i=1}^n D(t_i+\sigma)v_i-\sum_{j=1}^m D(s_j+\sigma)w_j
\end{equation}
satisfies \eqref{eqn:signcondition} for any $n, m$, any ${\bf t} \in T^n, {\bf s} \in T^m$ and any ${\bf v} \in \mathbb{R}_+^n, {\bf w} \in \mathbb{R}_+^m$.

\begin{lemma}\label{lemma:OSPonlyD}
	$D$ satisfies the extended one-switch property if and only if there exists $\bar{u}>0$ such that \eqref{eqn:strongOSP} satisfies \eqref{eqn:signcondition} for any $n, m$, any ${\bf t} \in T^n, {\bf s} \in T^m$ and any ${\bf v} \in [0, \bar{u}]^n, {\bf w} \in [0, \bar{u}]^m$.
\end{lemma}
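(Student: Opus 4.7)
The plan is to prove this by a simple homogeneity argument, since $\Delta(\sigma)$ is linear in the combined vector $({\bf v},{\bf w})$.

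One direction is immediate: if $D$ satisfies the extended one-switch property, then \eqref{eqn:signcondition} holds for all ${\bf v} \in \mathbb{R}_+^n,\ {\bf w} \in \mathbb{R}_+^m$, so in particular it holds when we pick any $\bar{u}>0$ and restrict to ${\bf v}\in[0,\bar{u}]^n,\ {\bf w}\in[0,\bar{u}]^m$.

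For the nontrivial direction, I would first record the key observation: the right-hand side of \eqref{eqn:strongOSP} is homogeneous of degree one in $({\bf v},{\bf w})$. Concretely, if we replace $({\bf v},{\bf w})$ by $(c{\bf v},c{\bf w})$ for some $c>0$, the resulting function is simply $c\Delta(\sigma)$, so the sign of $\Delta(\sigma)$ is unchanged for every $\sigma$. Hence the truth of condition \eqref{eqn:signcondition} depends only on the ray through $({\bf v},{\bf w})$, not on its scale.

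Assuming the condition holds for some fixed $\bar{u}>0$ and all ${\bf v}\in[0,\bar{u}]^n,\ {\bf w}\in[0,\bar{u}]^m$, take arbitrary ${\bf v}\in\mathbb{R}_+^n$, ${\bf w}\in\mathbb{R}_+^m$. Put $M=\max\{\max_i v_i,\ \max_j w_j\}$. If $M=0$, then $\Delta\equiv 0$ and \eqref{eqn:signcondition} holds trivially (the sign is constantly zero). Otherwise set $c=\bar{u}/M>0$; then $c{\bf v}\in[0,\bar{u}]^n$ and $c{\bf w}\in[0,\bar{u}]^m$, so by hypothesis the function $c\Delta(\cdot)$ satisfies \eqref{eqn:signcondition}. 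Dividing by the positive constant $c$ preserves signs at every $\sigma$, so $\Delta(\cdot)$ itself satisfies \eqref{eqn:signcondition}. Since $n,m,{\bf t},{\bf s},{\bf v},{\bf w}$ were arbitrary, $D$ satisfies the extended one-switch property.

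There is no serious obstacle here; the lemma is essentially a scale-invariance remark, and the only point requiring a sentence of care is the degenerate case $M=0$, which is handled by noting that \eqref{eqn:signcondition} permits the constant-sign alternative with sign zero.
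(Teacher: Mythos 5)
Your proposal is correct and uses essentially the same idea as the paper: the paper's proof also rests on the observation that $\Delta$ is homogeneous of degree one in $({\bf v},{\bf w})$, scaling a hypothetical violator by some $\lambda\in(0,1)$ into the box $[0,\bar{u}]^{n}\times[0,\bar{u}]^{m}$ to reach a contradiction, whereas you argue directly by scaling an arbitrary $({\bf v},{\bf w})$ down by $c=\bar{u}/M$. Your explicit treatment of the degenerate case $M=0$ is a small point of care that the paper's version passes over silently.
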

\begin{proof}
	{\em ``Only If".} This part is straightforward.
	
	{\em ``If".} Suppose that there exists $\bar{u}>0$ such that $\Delta:\mathbb{R}_{+}\rightarrow \mathbb{R}$ defined by \eqref{eqn:strongOSP} satisfies \eqref{eqn:signcondition} for any $n, m$, any ${\bf t} \in T^n, {\bf s} \in T^m$ and any ${\bf v} \in [0, \bar{u}]^n, {\bf w} \in [0, \bar{u}]^m$. The proof is by contradiction. Assume that there is some $n', m'$, ${\bf t'} \in T^{n'}, {\bf s'} \in T^{m'}$ and some ${\bf v'} \in \mathbb{R}_+^{n'}, {\bf w'} \in \mathbb{R}_+^{m'}$ such that the function $\Delta^* \colon \mathbb{R}_+ \to \mathbb{R}$ defined by
	\begin{equation*}
		\Delta^*(\sigma)=\sum_{i=1}^{n'} D(t'_i+\sigma) v'_i-\sum_{j=1}^{m'} D(s'_j+\sigma) w'_j.
	\end{equation*}
	violates property \eqref{eqn:signcondition}. Then the function $\lambda \Delta^*$ will also violate \eqref{eqn:signcondition} for any $\lambda>0$.
	Let $\lambda \in (0, 1)$ be such that $\lambda {\bf v'} \in [0, \bar{u}]^{n'}$ and $\lambda {\bf w'} \in [0, \bar{u}]^{m'}$. %Then 
	%\begin{equation*}
	%\Delta^*(\sigma)=\sum_{i=1}^{n'} D(t'_i+\sigma)\lambda v'_i+\sum_{j=1}^{m'} D(s'_j+\sigma)\lambda w'_j.
	%\end{equation*}
	This is a contradiction to the initial assumption that \eqref{eqn:strongOSP} satisfies \eqref{eqn:signcondition} for any $n, m$, any ${\bf t} \in T^n, {\bf s} \in T^m$ and any ${\bf v} \in [0, \bar{u}]^n, {\bf w} \in [0, \bar{u}]^m$.
\end{proof}

In other words, Lemma \ref{lemma:OSPonlyD} states that given preferences with a DU representation, the range of $u$ is irrelevant to whether or not the preferences satisfy the one-switch property. It follows that the one-switch property does not impose any additional restrictions on the shape of $u$. In other words, given a DU representation $(u, D)$ for $\succcurlyeq$, the one-switch property restricts only $D$. We therefore say that a discount function, $D$, exhibits the one-switch property if there is some utility function, $u$, such that the preferences with DU representation $(u,D)$ exhibit the one-switch property.
Bell \cite[Proposition 8]{bell1988one} argues that sums of exponentials are the only discount functions compatible with the one-switch property. However, we will demonstrate in this section that linear times exponential discount functions also have the one-switch property. 

The following proposition gives the restrictions on the parameters of linear times exponential and sum of exponential functions under which they satisfy the properties of a discount function.
\begin{proposition}\label{discfun}
	\begin{enumerate}[(a)]
		\item The linear times exponential function $D(t) = (c_1 + c_2t)e^{r_1t}$ is a discount function if and only if $c_1=1, -r_1 \geq c_2 \geq 0 \text{ and } r_1<0$; i.e., $D(t) = (1 + c t)e^{-rt}$, where $r \geq c \geq 0$ and $r>0$.
		%$c_1=1, c_2=c, r_1=-r$ with $c \geq 0, r>c$
		\item The sum of exponentials function $D(t) = c_1e^{r_1t} + c_2e^{r_2t}$, where $r_1\neq r_2$, is a discount function if and only if
		\begin{itemize} 
			\item $c_2=1-c_1, r_1<r_2<0$ and $\frac{r_2}{r_2-r_1}\leq c_1<1$, or 
			\item $c_2=1-c_1, r_2<r_1<0$ and $0<c_1\leq \frac{r_2}{r_2-r_1}$.
		\end{itemize}
		Equivalently, $D(t) = ae^{-bt} + (1-a)e^{-(b+c)t}$, where $a, b, c>0$, $a\leq b/c+1$.
	\end{enumerate}
\end{proposition}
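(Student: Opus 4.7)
My plan is to check, for each family in turn, the three defining requirements of a discount function---$D(0)=1$, $\lim_{t\to\infty}D(t)=0$, and strict monotonicity on $T$---and translate each into a condition on the parameters. Positivity $D(t)>0$ then follows automatically, since a strictly decreasing function that starts at $1$ and tends to $0$ cannot take non-positive values.

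For part (a), evaluating $D(t)=(c_1+c_2t)e^{r_1t}$ at $t=0$ forces $c_1=1$, and vanishing of the limit at infinity forces $r_1<0$, because a nonzero polynomial factor times $e^{r_1t}$ tends to zero only when $r_1<0$. The key step is to differentiate and factor
\[
D'(t)=e^{r_1t}\bigl[(c_2+r_1)+c_2r_1t\bigr].
\]
Since $e^{r_1t}>0$, the sign of $D'$ matches the sign of the affine factor in brackets. For strict decrease this factor must be $\leq 0$ on $[0,\infty)$ without vanishing on any interval, which, upon inspecting its value at $t=0$ and its slope, is equivalent to $c_2+r_1\leq 0$ together with $c_2r_1\leq 0$. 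Combined with $r_1<0$, these give $0\leq c_2\leq -r_1$, matching the stated conditions. The boundary case $c_2=-r_1$ yields $D'(t)=-r_1^{2}t\,e^{r_1t}$, which vanishes only at $t=0$, so strict decrease is preserved.

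For part (b), evaluating at $t=0$ gives $c_1+c_2=1$, and vanishing at infinity (assuming $c_1,c_2\neq 0$, otherwise $D$ reduces to a single exponential) requires $r_1,r_2<0$. By relabelling it suffices to treat the case $r_1<r_2<0$. Factoring
\[
D'(t)=e^{r_2t}\bigl[c_1r_1e^{(r_1-r_2)t}+c_2r_2\bigr]
\]
and observing that $e^{(r_1-r_2)t}$ is strictly monotone on $[0,\infty)$, the bracketed expression is itself strictly monotone in $t$ and slides between its endpoint values $c_1r_1+c_2r_2$ at $t=0$ and $c_2r_2$ as $t\to\infty$. Strict decrease of $D$ is therefore equivalent to both endpoint values being $\leq 0$, and substituting $c_2=1-c_1$ this rearranges to $c_1\leq 1$ and $c_1\geq r_2/(r_2-r_1)$. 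The strict inequality $c_1<1$ comes from $c_2\neq 0$, and the boundary $c_1=r_2/(r_2-r_1)$ makes $D'(0)=0$ while leaving the bracket strictly negative for $t>0$, so strict monotonicity still holds. The equivalent parameterisation $D(t)=ae^{-bt}+(1-a)e^{-(b+c)t}$ with $a,b,c>0$ and $a\leq b/c+1$ is obtained by setting $-b=\max(r_1,r_2)$, $c=|r_1-r_2|$, and identifying $a$ with the coefficient of $e^{-bt}$; this substitution turns the constraint $c_1\geq r_2/(r_2-r_1)$ into $a\leq b/c+1$ in both symmetric cases, collapsing them into the single stated form.

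The main subtlety is handling the boundary cases where $D'$ vanishes at an isolated point---namely $c_2=-r_1$ in part (a) and $c_1r_1+c_2r_2=0$ in part (b)---since these are precisely the thresholds where strict monotonicity is preserved even though pointwise $D'<0$ fails, and they require an extra argument to distinguish them from nearby cases where $D'$ would change sign and $D$ would fail to be monotone.
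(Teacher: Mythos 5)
Your proof is correct and follows essentially the same route as the paper: verify the defining properties of a discount function one at a time, factor out the (positive) exponential in $D'$, and read off the parameter restrictions from the sign of the remaining factor, including the boundary cases where $D'$ vanishes at an isolated point. Your two small departures --- observing that positivity is automatic once $D$ is strictly decreasing with $D(0)=1$ and $\lim_{t\to\infty}D(t)=0$, and collapsing the two sum-of-exponentials cases into one by relabelling $(c_1,r_1)\leftrightarrow(c_2,r_2)$ --- are harmless streamlinings of the paper's argument rather than a different method.
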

\begin{proof}
	We need to find the parameters of linear times exponential functions and sums of exponentials such that the following four properties are satisfied: 
	\begin{enumerate}
		\item $D(0)=1$, 
		\item $D(t)>0$ for all $t$, 
		\item $D(t)$ is strictly decreasing, and 
		\item $\lim_{t \to \infty}D(t)=0$.
	\end{enumerate}
	{\em (a) Linear times exponential.}
	%{\em Necessity.} 
	Assume that $D(t) = (c_1 + c_2t)e^{r_1t}$. 
	Obviously, $D(0)=1$ if and only if $c_1=1$.
	
	Next, to satisfy the first and the second conditions simultaneously, that is, to have $D(0)=1$ and $D(t)>0$ for all $t$, it is necessary and sufficient that $c_1=1$ and $c_2\geq 0$ (since $e^{r_1t}>0$ for all $t$).
	
	To check whether $D(t)$ is strictly decreasing, consider its first order derivative:
	\[
	D'(t) \ =\ c_2e^{r_1t}+(1+c_2t)r_1e^{r_1t} \ =\ e^{r_1t}\left(c_2+r_1+c_2r_1t\right).
	\]
	Since $e^{rt}>0$ for all $t$, the sign of the derivative depends on the sign of the linear expression $c_2+r_1+c_2r_1t$.  
	Therefore, $D(t)$ is strictly decreasing if and only if 
	\begin{align*}
		c_2+r_1 &\leq 0, \text{ and} \\
		c_2+r_1+c_2r_1t&<0 \text{ for all } t>0.
	\end{align*}
	This condition is equivalent to requiring that one of the following two conditions is satisfied: 
	\begin{itemize}
		\item $c_2+r_1=0$ and $c_2r_1<0$,
		\item $c_2+r_1<0$ and $c_2r_1\leq 0$.
	\end{itemize}
	Since $c_2 \geq 0$, in the first case we have $c_2=-r_1>0$. In the second case $0 \leq c_2<-r_1$. We can summarize both cases as follows:
	\[
	-r_1 \geq c_2 \geq 0 \text{ and } r_1<0. 
	\]
	Therefore, the first three properties of discount functions are satisfied if and only if $c_1=1, -r_1 \geq c_2 \geq 0 \text{ and } r_1<0$. 
	
	Finally, the limit of the linear times exponential function is
	\[
	\displaystyle \lim_{t \to \infty}D(t) \ =\ \displaystyle \lim_{t \to \infty}\left(1 + c_2 t\right)e^{r_1t} \ =\ \displaystyle \lim_{t \to \infty}\frac{1 + c_2t}{e^{-r_1t}},
	\] 
	If $r_1=c_2=0$ this limit is 1. This case is ruled out since from the previous step $r_1<0$. Then, by L'Hopital's rule we have
	\[
	\lim_{t \to \infty}D(t) \ =\ \lim_{t \to \infty}\frac{1 + c_2t}{e^{-r_1t}} \ =\ \lim_{t \to \infty} \frac{c_2}{-r_1e^{-r_1t}} \ =\ 0.
	\] 
	Therefore, $D(t) = (c_1 + c_2t)e^{r_1t}$ satisfies all four properties of discount functions if and only if $c_1=1, -r_1 \geq c_2 \geq 0 \text{ and } r_1<0$. Denote $c=c_2$ and $r=-r_1$. Then we have  $D(t) = (1 + c t)e^{-rt}$, where $r \geq c \geq 0$ and $r > 0$.
	
	{\em (b) Sums of exponentials.} The proof is analogous to \cite[Proposition 8]{bell1988one} and is given here for completeness. Assume that $D(t) = c_1e^{r_1t} + c_2e^{r_2t}$ with $r_1 \neq r_2$.
	
	The condition $D(0)=1$ is satisfied if and only if $c_1+c_2=1$.
	
	We must also have $D(t)>0$ for all $t>0$. Note that 
	\[
	D(t) \ =\ c_1e^{r_1t} + c_2e^{r_2t} \ =\ e^{r_1t}\left(c_1 + c_2e^{(r_2-r_1)t}\right)>0 \text{ for all } t>0 
	\]
	if and only if $c_1 + c_2e^{(r_2-r_1)t}>0$ for all $t>0$ (since $e^{r_1t}>0$ for all $t>0$). From the first condition we know that $c_2=1-c_1$. Substituting this expression to the inequality we must have $c_1 + (1-c_1)e^{(r_2-r_1)t}>0$ for all $t>0$. Therefore, the first two properties of discount functions are satisfied if and only if $c_2=1-c_1$ and one of the following two conditions holds: 
	\begin{enumerate}[(i)]
		\item $r_{1}<r_{2}$ and $c_{1}\leq 1$; or 
		\item $r_{1}>r_{2}$ and $c_{1}\geq 0$.
	\end{enumerate}
	Next, it is necessary to have $D(t)$ strictly decreasing. Consider its first order derivative
	\[
	D'(t) \ =\ c_1r_1e^{r_1t}+c_2r_2e^{r_2t} \ =\ e^{r_1t}\left(c_1r_1+c_2r_2e^{(r_2-r_1)t}\right).
	\]
	Since $e^{r_1t}>0$ for all $t>0$, the function $D(t)$ is strictly decreasing if and only if 
	\begin{align*}
		c_1r_1+c_2r_2 &\leq 0, \text{ and } \\
		c_1r_1+c_2r_2e^{(r_2-r_1)t}&<0 \text{ for all } t>0.
	\end{align*}
	Recall that from the first two conditions we have $c_2=1-c_1$ and [(i) or (ii)] holds.
	Therefore, $D(0)=1, D(t)>0$ for all $t>0$ and $D(t)$ is strictly decreasing if and only if all of the following conditions hold:
	\begin{align*}
		&c_1r_1+(1-c_1)r_2 \leq 0,\\
		&c_1r_1+(1-c_1)r_2e^{(r_2-r_1)t}<0 \text{ for all } t>0,\\
		&c_2=1-c_1, \text{ and   [(i) or (ii)] holds}.
	\end{align*}
	Note that 
	\begin{align*}
		c_1r_1+(1-c_1)r_2e^{(r_2-r_1)t}& \ =\ c_1r_1+(1-c_1)r_2e^{(r_2-r_1)t}+c_1r_1e^{(r_2-r_1)t}-c_1r_1e^{(r_2-r_1)t} \\
		& \ =\ \left(c_1r_1+(1-c_1)r_2\right)e^{(r_2-r_1)t}+c_1r_1\left(1-e^{(r_2-r_1)t}\right).
	\end{align*}
	Therefore, we must have
	\begin{align}
		\label{eqn:first}
		&c_1r_1+(1-c_1)r_2 \leq 0,\\ 
		\label{eqn:second}
		&\left(c_1r_1+(1-c_1)r_2\right)e^{(r_2-r_1)t}+c_1r_1\left(1-e^{(r_2-r_1)t}\right)<0 \text{ for all } t>0,\\ 
		\label{eqn:third}
		&c_2=1-c_1, \text{ and [(i) or (ii)] holds}. 
	\end{align}

	Consider case (i) of \eqref{eqn:third}. Then condition \eqref{eqn:first} holds if and only if $c_1\geq\frac{r_2}{r_2-r_1}$. Given \eqref{eqn:first}, condition \eqref{eqn:second} holds if and only if $(1-c_1)r_2<0$, which is equivalent -- in case (i) -- to $c_1<1$ and $r_2<0$. Thus, in case (i), the first three properties of a discount function are satisfied if and only if $c_2=1-c_1, r_1<r_2<0$ and $\frac{r_2}{r_2-r_1}\leq c_1<1$. Since $r_1<r_2<0$, the fourth property is also satisfied. 
	
	Next, consider case (ii) of \eqref{eqn:third}. The argument is similar to case (i). The condition \eqref{eqn:first} holds if and only if $c_1\leq\frac{r_2}{r_2-r_1}$. Given \eqref{eqn:first}, condition \eqref{eqn:second} holds if and only if $c_1r_1<0$, which is equivalent -- in case (ii) -- to $c_1>0$ and $r_1<0$. Therefore, in case (ii), the first three properties of a discount function are satisfied if and only if $c_2=1-c_1, r_2<r_1<0$ and $0<c_1\leq \frac{r_2}{r_2-r_1}$. Since $r_2<r_1<0$, the fourth property is also satisfied. 
	
	In case (i) of \eqref{eqn:third} let $a=c_2, b=-r_2,$ and $c=r_2-r_1$. It follows that $a, b, c>0$. Since $r_2/r_2-r_1 \leq  c_1 <1 $ it requires that $0<a\leq b/c+1$. We then have $D(t) = ae^{-bt} + (1-a)e^{-(b+c)t}$, where $a, b, c>0$, $a\leq b/c+1$.
	
	Analogously, in case (ii) of \eqref{eqn:third} let $a=c_1, b=-r_1,$ and $c=r_1-r_2$.  We then obtain the same functional from and parameter restrictions as in case (i), that is, $D(t) = ae^{-bt} + (1-a)e^{-(b+c)t}$, where $a, b, c>0$, $a\leq b/c+1$.
\end{proof}
The following proposition demonstrates that these two types of discount function are compatible with the one-switch property. 
\begin{proposition} \label{two}
	Suppose that the preference order $\succcurlyeq$ on ${\mathcal A}$ has a DU representation $(u, D)$, where 
	\begin{itemize}
		\item $D(t) = (1 + c t)e^{-rt}$, where $r\geq c \geq 0$ and $r>0$, or
		\item $D(t) = ae^{-bt} + (1-a)e^{-(b+c)t}$, where $a, b, c>0$, and $a\leq b/c+1$. 		
	\end{itemize}
	Then $\succcurlyeq$ exhibits the one-switch property.
\end{proposition}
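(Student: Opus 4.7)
The plan is to substitute each candidate $D$ directly into the expression
\[
\Delta(\sigma)=\sum_{i=1}^{n} D(t_i+\sigma)u(x_i)-\sum_{j=1}^{m} D(s_j+\sigma)u(y_j)
\]
and show that, after a common positive factor is pulled out, the sign of $\Delta(\sigma)$ is governed by an elementary function whose sign pattern on $[0,\infty)$ matches condition~\eqref{eqn:signcondition}.

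For the linear times exponential case $D(t)=(1+ct)e^{-rt}$, I would exploit
\[
D(t_i+\sigma)=e^{-r\sigma}\bigl((1+ct_i)+c\sigma\bigr)e^{-rt_i}
\]
to rewrite $\Delta(\sigma)=e^{-r\sigma}(A+B\sigma)$, where $A$ and $B$ are the obvious constants determined by the two sequences and by $u$. Since $e^{-r\sigma}>0$, the sign of $\Delta$ equals the sign of the affine function $A+B\sigma$. On $[0,\infty)$ this is either identically zero, of constant nonzero sign (when $B=0$, or when $B\neq 0$ but the unique root $-A/B$ is negative), or it changes sign exactly once at $\sigma^{*}=-A/B\geq 0$. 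All three possibilities are consistent with~\eqref{eqn:signcondition}.

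For the sum of exponentials case $D(t)=ae^{-bt}+(1-a)e^{-(b+c)t}$, an analogous factoring yields
\[
\Delta(\sigma)=e^{-b\sigma}\bigl(A+Be^{-c\sigma}\bigr),
\]
with $A, B$ again constants in the data. The auxiliary function $g(\sigma)=A+Be^{-c\sigma}$ has derivative $-Bce^{-c\sigma}$, which has constant sign, so (when $B\neq 0$) $g$ is strictly monotone and thus has at most one zero; if that zero lies in $[0,\infty)$ the sign of $\Delta$ flips there exactly once. If $B=0$ then $g\equiv A$, and $\Delta$ has constant sign. Again~\eqref{eqn:signcondition} is satisfied.

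The main obstacle, such as it is, is only bookkeeping: the coefficient $1-a$ in the sum-of-exponentials form can be negative (since Proposition~\ref{discfun}(b) permits $a>1$), and the data-dependent constants $A$ and $B$ may be of any sign or zero. Neither of these weakens the argument, because the sign-change count for an affine function $A+B\sigma$ or a strictly monotone function $A+Be^{-c\sigma}$ depends only on monotonicity, not on the individual signs of $A$ and $B$. By Lemma~\ref{lemma:OSPonlyD} this elementary analysis of $\Delta$ is in fact all that is required.
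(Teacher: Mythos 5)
Your proposal is correct and follows essentially the same route as the paper's proof: factor out the positive exponential $e^{-r\sigma}$ (resp.\ $e^{-b\sigma}$) and observe that the remaining factor is affine in $\sigma$ (resp.\ strictly monotone in $\sigma$, being of the form $A+Be^{-c\sigma}$), hence has at most one sign change on $[0,\infty)$. Your explicit treatment of the degenerate cases ($B=0$, negative root) is a slightly more careful write-up of the same argument, not a different one.
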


\begin{proof}
	The proof adapts Bell's argument \cite[Proposition 2]{bell1988one} to the time preference framework. 
	We need to prove that for any $({\bf x, t})$, $({\bf y, s}) \in {\mathcal A}$ the following function changes sign at most once:
	\[
	\Delta(\sigma)= \sum_{i=1}^nD(t_i+\sigma)u(x_i)-\sum_{j=1}^mD(s_j+\sigma)u(y_j).
	\]
	
	{\em (a) Linear times exponential.} Consider the discount function $D(t)=(1+ct)e^{-rt}$, where $r\geq c\geq 0$ and $r>0$. Then 
	\[
	\Delta(\sigma)  \ =\  \sum_{i=1}^n (1+ct_i+c\sigma)e^{-rt_i}e^{-r\sigma}u(x_i)-\sum_{j=1}^m (1+cs_j+c\sigma)e^{-rs_j}e^{-r\sigma}u(y_j).
	\]
	Rearranging,
	\begin{align*}
		\Delta(\sigma)  \ =\ & e^{-r\sigma} \left(\sum_{i=1}^n e^{-rt_i}u(x_i) - \sum_{j=1}^m e^{-rs_j}u(y_j) \right)\\
		&+e^{-r\sigma}c \left(\sum_{i=1}^n t_ie^{-rt_i}u(x_i) - \sum_{j=1}^ms_j e^{-rs_j}u(y_j) \right)\\
		&+e^{-r\sigma} c \sigma \left(\sum_{i=1}^n e^{-rt_i}u(x_i) - \sum_{j=1}^m e^{-rs_j}u(y_j) \right).
	\end{align*}
	Let 
	\[
	A \ =\ \sum_{i=1}^n e^{-rt_i}u(x_i) - \sum_{j=1}^m e^{-rs_j}u(y_j),
	\]
	and 
	\[
	B \ =\ \sum_{i=1}^n t_ie^{-rt_i}u(x_i) - \sum_{j=1}^m s_j e^{-rs_j}u(y_j).
	\]
	Then 
	\[
	\Delta(\sigma) \ =\ Ae^{-r\sigma}+cBe^{-r\sigma}+cAe^{-r\sigma}\sigma.
	\]
	This expression can be rewritten as follows
	\[
	\Delta(\sigma) \ =\ e^{-r\sigma}\left(A+cB+cA\sigma\right).
	\]
	Since $e^{-r\sigma}>0$, the sign of $\Delta(\sigma)$ equals the sign of $A+cB+cA\sigma$. Since the latter is linear its sign is constant or else changes once at a unique $\sigma$ value.  
	%If $c=0$, then $D(t)=e^{-rt}$, where $r>0$. It is straightforward that the exponential function satisfies the zero-switch property, and, therefore, the one-switch property.
	
	{\em (b) Sums of exponentials.} Consider the function $D(t) = ae^{-bt} + (1-a)e^{-(b+c)t}$, where $a, b, c>0$, and $a\leq b/c+1$. Then
	\begin{align*}
		\Delta(\sigma) \ =\ &\sum_{i=1}^n \left(ae^{-bt_i}e^{-b\sigma}+(1-a)e^{-(b+c)t_i}e^{-(b+c)\sigma}\right) u(x_i)\\
		&-\sum_{j=1}^m\left(ae^{-bs_j}e^{-b\sigma}+(1-a)e^{-(b+c)s_j}e^{-(b+c)\sigma}\right)u(y_j).
	\end{align*}
	It can be rearranged so that			
	\begin{align*}
		\Delta(\sigma) \ =\ & ae^{-b\sigma}\left(\sum_{i=1}^n e^{-bt_i}u(x_i)-\sum_{j=1}^m e^{-bs_j}u(y_j)\right)\\
		&+(1-a)e^{-(b+c)\sigma} \left(\sum_{i=1}^n e^{-(b+c)t_i} u(x_i) - \sum_{j=1}^m e^{-(b+c)s_j}u(y_j) \right).
	\end{align*}
	Denote 
	\[
	\tilde{A} \ =\ \sum_{i=1}^n e^{-bt_i}u(x_i)-\sum_{j=1}^m e^{-bs_j}u(y_j),
	\]
	and 
	\[
	\tilde{B} \ =\ \sum_{i=1}^n e^{-(b+c)t_i} u(x_i) - \sum_{j=1}^m e^{-(b+c)s_j}u(y_j).
	\]
	Then 
	\[
	\Delta(\sigma) \ =\ a\tilde{A}e^{-b\sigma}+(1-a)\tilde{B}e^{-(b+c)\sigma}.
	\]
	This expression can be factorized as follows
	\[
	\Delta(\sigma) \ =\ e^{-b\sigma}\left(a\tilde{A}+(1-a)\tilde{B}e^{-c\sigma}\right).
	\]
	Since $e^{-b\sigma}>0$, the sign of $\Delta(\sigma)$ equals the sign of $a\tilde{A}+(1-a)\tilde{B}e^{-c\sigma}$. Therefore, $\Delta(\sigma)$ is either constant or else changes once at unique $\sigma$ value.
\end{proof}

Since Proposition \ref{two} establishes that linear times exponential discount functions and sum of exponentials discount functions satisfy the one-switch property, they must also satisfy the weak one-switch property.

\subsection{The one-switch property for dated outcomes and monotonic impatience}\label{section:OSPandImpatience}

In this section we consider preferences $\succcurlyeq$ on the set ${\mathcal A_1}$ of dated outcomes. When the preferences $\succcurlyeq$ are restricted to ${\mathcal A_1}$, then a DU representation becomes $U(x,t)=D(t)u(x)$ for any $(x, t) \in {\mathcal A_1}$. Necessary and sufficient conditions for this representation are given in \cite{fishburn1982time}. We assume that  $\succcurlyeq$ satisfy Fishburn and Rubinstein's axioms \cite{fishburn1982time} throughout this section. 

%		See Theorem \ref{Theorem:FR} in Section \ref{Section:notation} of Chapter \ref{Introduction}.
\begin{definition}
	We say that $\succcurlyeq$ exhibits the {\em one-switch property for dated outcomes} if $\succcurlyeq$ exhibits the one-switch property on $\mathcal A_1$.%
	\footnote{The related concept of an ordinal one-switch utility function is introduced in \cite{abbas2015ordinal}.}
\end{definition}
Obviously, if $\succcurlyeq$ exhibits the one-switch property on $\mathcal A$, it implies that $\succcurlyeq$ also exhibits the one-switch property for dated outcomes.

Consider the following notions of decreasing and increasing impatience.

\begin{definition}[\cite{prelec2004decreasing}]
	We say that $\succcurlyeq$ exhibits {\em [strictly] decreasing impatience}, if
	for all $(x, t), (y, s) \in {\mathcal A_1}$ such that $0<x<y$ and for all $t<s$:
	if $(x, t) \sim (y, s)$ then for any $\sigma>0$ we have 
	\begin{equation}\label{DIdef}
		(x, t+\sigma) \  [\prec] \preccurlyeq (y, s+\sigma).
	\end{equation}
	We say that $\succcurlyeq$ exhibits
	\begin{itemize}
		\item {\em [strictly] increasing impatience}, if the preference in \eqref{DIdef} is reversed; 
		\item {\em stationarity, or constant impatience}, if the preference in \eqref{DIdef} is replaced by indifference.%
		%			\footnote{Prelec \cite{prelec2004decreasing} introduced the notions of decreasing and strictly decreasing impatience, but not [strictly] increasing impatience. The definition of {\em strictly} increasing impatience given here corresponds to Takeuchi's  definition of increasing impatience \cite{takeuchi2011non}.}
	\end{itemize}
\end{definition}
When preferences have a DU representation, these properties only restrict the discount function. The next proposition follows directly from the definition.
\begin{proposition}\label{ratio}
	Suppose that $\succcurlyeq$ restricted to ${\mathcal A_1}$ has a DU representation. Then $\succcurlyeq$ exhibits [strictly] DI if and only if 
	\begin{equation}\label{eq:rat}
		\frac{D(t)}{D(t+\sigma)} \ [>] \geq \frac{D(s)}{D(s+\sigma)}, \ \text{for all} \ t, s \ \text{such that} \ t<s, \text{ and every }\sigma>0.
	\end{equation}
	Furthermore, $\succcurlyeq$ exhibits
	\begin{itemize}
		\item {[strictly]} II if and only if the inequality in \eqref{eq:rat} is reversed;
		\item constant impatience if and only if the inequality in \eqref{eq:rat} is replaced by the equality.
	\end{itemize}
\end{proposition}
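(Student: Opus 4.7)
The plan is to argue directly from the definition, translating the indifference $(x,t)\sim(y,s)$ and the subsequent comparison $(x,t+\sigma)\,\square\,(y,s+\sigma)$ into conditions on ratios of $D$ via the DU representation $U(x,t)=D(t)u(x)$.

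First I would record the key algebraic equivalences. Since $u$ is strictly increasing with $u(0)=0$, for $0<x<y$ both $u(x)$ and $u(y)$ are positive, so
\[
(x,t)\sim(y,s)\ \Longleftrightarrow\ D(t)u(x)=D(s)u(y)\ \Longleftrightarrow\ \frac{u(y)}{u(x)}=\frac{D(t)}{D(s)},
\]
and likewise $(x,t+\sigma)\preccurlyeq(y,s+\sigma)$ iff $u(y)/u(x)\geq D(t+\sigma)/D(s+\sigma)$, with the strict version for $\prec$. The "if" direction of Proposition \ref{ratio} is then immediate: assuming the ratio inequality $D(t)/D(t+\sigma)\geq D(s)/D(s+\sigma)$ (equivalently $D(t)/D(s)\geq D(t+\sigma)/D(s+\sigma)$), any indifference $(x,t)\sim(y,s)$ with $0<x<y$ and $t<s$ gives $u(y)/u(x)=D(t)/D(s)\geq D(t+\sigma)/D(s+\sigma)$, i.e.\ $(x,t+\sigma)\preccurlyeq(y,s+\sigma)$. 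The strictly DI case proceeds identically, replacing the weak inequalities with strict ones.

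For the converse (DI $\Rightarrow$ ratio inequality) I would fix $t<s$ and $\sigma>0$ and exhibit a witness pair $(x,y)$ with $0<x<y$ satisfying $(x,t)\sim(y,s)$, so that the hypothesis of DI applies. Since $D$ is strictly decreasing, $D(t)/D(s)>1$, so I need to realize this ratio as $u(y)/u(x)$ with $0<x<y$. Pick any $x>0$ small enough that $u(x)\cdot D(t)/D(s)$ lies in $u(X)$; this is possible whether $u(X)=\mathbb{R}_+$ (trivially) or $u(X)=[0,\bar u]$ (take $u(x)<\bar u\cdot D(s)/D(t)$, which is available by continuity of $u$ and $u(0)=0$). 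Setting $y=u^{-1}\!\bigl(u(x)D(t)/D(s)\bigr)$ gives $y>x$ and $(x,t)\sim(y,s)$. DI then yields $(x,t+\sigma)\preccurlyeq(y,s+\sigma)$, which rearranges to the required ratio inequality; strict DI yields the strict version.

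The II and constant-impatience equivalences are handled by exactly the same argument with the inequality reversed or replaced by equality, respectively. The only mild obstacle is the bookkeeping in the converse, where one must verify that the ratio $D(t)/D(s)$ is attainable as $u(y)/u(x)$ regardless of whether $u$ has bounded or unbounded range; this is a one-line continuity argument and no deeper issue arises, which is why the proposition is noted to follow directly from the definitions.
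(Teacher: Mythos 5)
Your proposal is correct and is exactly the direct verification the paper has in mind: the paper omits a proof, stating only that the proposition ``follows directly from the definition,'' and your translation of $(x,t)\sim(y,s)$ into $u(y)/u(x)=D(t)/D(s)$ together with the witness-pair construction for the converse is the intended argument. The only point worth noting is that $u(X)$ is in fact a half-open interval $[0,\bar u)$ when $u$ is bounded (since $X=[0,\infty)$ is not compact), but your choice of $u(x)<\bar u\,D(s)/D(t)$ handles this without modification.
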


The following proposition provides a complete characterisation.
\begin{proposition}[\cite{prelec2004decreasing}, \cite{anchugina2016aggregating}]\label{prelec} 
	Suppose that $\succcurlyeq$ restricted to ${\mathcal A_1}$ has a DU representation.%
	%		\footnote{When the preferences $\succcurlyeq$ are restricted to ${\mathcal A_1}$, then a DU representation becomes $U(x,t)=D(t)u(x)$ for any $(x, t) \in {\mathcal A_1}$. Necessary and sufficient conditions for this representation are given in \cite{fishburn1982time}. 
	%		See Theorem \ref{Theorem:FR} in Section \ref{Section:notation} of Chapter \ref{Introduction}.}% 
	Then $\succcurlyeq$ exhibits 
	\begin{itemize}
		\item {[strictly]} DI if and only if $D(t)$ is [strictly] log-convex,%
		\footnote{A function $f\colon I \to \mathbb{R}$ is called {\em log-convex} if $f(x)>0$ for all $x\in I$ and $\ln(f)$ is convex; and {\em strictly log-convex} if $f(x)>0$ for all $x\in I$  and $\ln(f)$ is strictly convex. We say that a function $f\colon I \to \mathbb{R}$ is {\em [strictly] log-concave}, if $1/f$ is [strictly] log-convex.} 
		\item {[strictly]} II if and only if $D(t)$ is [strictly] log-concave,
		\item constant impatience if and only if $D(t)=e^{-rt}$ with $r>0$.
	\end{itemize}
\end{proposition}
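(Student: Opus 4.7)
The plan is to derive Proposition \ref{prelec} directly from Proposition \ref{ratio}, which already translates each impatience property into a condition on ratios $D(t)/D(t+\sigma)$. The remaining work is to recognise each ratio condition as a log-convexity, log-concavity, or linearity statement about $\ln D$.

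Setting $\phi=\ln D$ (well-defined since $D>0$), taking logarithms in \eqref{eq:rat} and multiplying by $-1$ converts the DI condition $D(t)/D(t+\sigma) \geq D(s)/D(s+\sigma)$ for $t<s$, $\sigma>0$ into
\[
\phi(t+\sigma)-\phi(t) \;\leq\; \phi(s+\sigma)-\phi(s).
\]
That is, the first-difference map $x\mapsto \phi(x+\sigma)-\phi(x)$ is non-decreasing in $x$ for every $\sigma>0$. I would then invoke the standard characterisation that this property is equivalent to convexity of $\phi$. One direction (convex $\Rightarrow$ monotone first differences) is a short three-slopes-type argument: for $x<y$ and $h>0$ write each of $\min(y,x+h)$ and $\max(y,x+h)$ as an appropriate convex combination of the endpoints $\{x,y+h\}$ or $\{y,x+h\}$, and apply the convexity inequality. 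For the converse, monotone first differences instantly yield midpoint convexity (take $y=x+h$), and because $D$ is monotone so is $\phi$, hence $\phi$ is measurable; measurable midpoint-convex functions are convex. Replacing $\leq$ with $<$ throughout captures the strictly DI case and corresponds exactly to strict log-convexity of $D$. The II case is identical with all inequalities reversed, giving log-concavity.

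For constant impatience, the ratio condition in Proposition \ref{ratio} holds with equality, which after taking logs says that $\phi(t+\sigma)-\phi(t)$ depends only on $\sigma$. Setting $t=0$ and using $\phi(0)=\ln D(0)=0$, this difference equals $\phi(\sigma)$, so
\[
\phi(t+\sigma) \;=\; \phi(t)+\phi(\sigma) \qquad \text{for all } t,\sigma\ge 0,
\]
which is Cauchy's functional equation. Since $D$ is monotone, $\phi$ is measurable, so $\phi(t)=\phi(1)\,t$. The discount-function properties ($D$ strictly decreasing, $D(t)\to 0$) force $\phi(1)<0$; writing $r=-\phi(1)>0$ gives $D(t)=e^{-rt}$, as required.

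The main technical point is the convexity characterisation, specifically the upgrade from midpoint convexity to convexity in the absence of any a priori continuity assumption on $D$. The escape route is precisely that $D$ is monotone, hence $\phi$ is monotone and measurable, and measurable midpoint-convex functions are automatically convex; I would pause briefly on this when invoking the characterisation. Everything else is routine algebra and a short appeal to the measurable solutions of Cauchy's equation.
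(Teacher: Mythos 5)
Your argument is correct. Note that the paper itself omits the proof of Proposition \ref{prelec}, deferring to Prelec's original argument and the working paper, so there is no in-text proof to compare against; what you have written is a sound, self-contained version of exactly that standard route: reduce to the ratio condition of Proposition \ref{ratio}, read it as monotonicity of the increments of $\ln D$, and upgrade the resulting midpoint convexity to convexity (and Cauchy's equation to linearity) via measurability of the monotone function $\ln D$. You correctly flag the only delicate step yourself; the one further point worth a sentence is the boundary of the domain, since Sierpi\'nski-type results give convexity on the open interval $(0,\infty)$, and extending the convexity inequality to combinations involving the endpoint $t=0$ uses that $\ln D$ is decreasing (so $\ln D(\varepsilon)\leq \ln D(0)$ as $\varepsilon\downarrow 0$). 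This is routine and does not affect the validity of the proof.
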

Proposition \ref{prelec} extends \cite[Corollary 1]{prelec2004decreasing} to increasing impatience and strictly decreasing (and strictly increasing) impatience. The proof is omitted here, since it only requires a minor adjustment of Prelec's original proof.%
\footnote{The proof of the proposition can also be found in the working paper \cite{anchugina2016aggregating}.}  

The following lemma re-expresses the definition of the one-switch property for dated outcomes in terms of a common advancement ($\sigma<0$) and a common delay ($\sigma>0$) applied to a pair of dated outcomes between which the decision-maker is indifferent. 
\begin{lemma}\label{bothways}
	Let $(\hat{x}, \hat{t}), (\hat{y}, \hat{s}) \in {\mathcal A_1}$ such that $0<\hat{x}<\hat{y}$, $0< \hat{t}<\hat{s}$ and $(\hat{x}, \hat{t}) \sim (\hat{y}, \hat{s})$. Then $\succcurlyeq$ exhibits the one-switch property for dated outcomes only if either
	\begin{enumerate}[(i)]
		\item $(\hat{x}, \hat{t}+\sigma) \sim (\hat{y}, \hat{s}+\sigma)$ for all $\sigma\geq -\hat{t}$, or 
		\item $(\hat{x}, \hat{t}+\sigma) \succ (\hat{y}, \hat{s}+\sigma)$ for $-\hat{t}\leq\sigma<0$, and $(\hat{x}, \hat{t}+\sigma) \prec (\hat{y}, \hat{s}+\sigma)$ for $\sigma>0$, or
		\item $(\hat{x}, \hat{t}+\sigma) \prec (\hat{y}, \hat{s}+\sigma)$ for $-\hat{t}\leq\sigma<0$, and $(\hat{x}, \hat{t}+\sigma) \succ (\hat{y}, \hat{s}+\sigma)$ for $\sigma>0$.
	\end{enumerate}
\end{lemma}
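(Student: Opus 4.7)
The plan is to introduce the function $P\colon[-\hat t,\infty)\to\{\succ,\sim,\prec\}$ defined by letting $P(\sigma)$ record the ranking of $(\hat x,\hat t+\sigma)$ versus $(\hat y,\hat s+\sigma)$; by hypothesis $P(0)$ is indifference, and the lemma's three conclusions amount to pinning down the sign of $P$ separately on $[-\hat t,0)$ and on $(0,\infty)$. The only tool I would use is Definition~\ref{1s} applied either to the base pair $(\hat x,\hat t),(\hat y,\hat s)$ or, crucially, to a \emph{shifted} base pair $(\hat x,\hat t+\sigma_1),(\hat y,\hat s+\sigma_1)$ with $\sigma_1\in[-\hat t,0)$; such a shifted pair always lies in $\mathcal A_1$.

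The first step records a key observation obtained by applying Definition~\ref{1s} to any pair in $\mathcal A_1$ between which the decision-maker is indifferent. Concretely, suppose $P(\sigma_0)$ is indifference for some $\sigma_0\geq -\hat t$. The one-switch property for the shifted pair at $\sigma_0$, with its own delay $\tau\geq 0$, controls $P(\sigma_0+\tau)$. No switch point $\tau^*>0$ is admissible, since it would force strict preference at $\tau=0<\tau^*$ and contradict $P(\sigma_0)=\mathord\sim$. Hence $P(\sigma_0+\tau)$ is constant on $\tau>0$, equal to one of $\sim,\succ,\prec$. Applied at $\sigma_0=0$, this forces $P$ to be constant on $(0,\infty)$, and the three possible values of that constant correspond respectively to conclusions (i), (iii), and (ii) of the lemma.

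The second step extends the behaviour to $\sigma_1\in[-\hat t,0)$ in the first case, where the constant sign on $(0,\infty)$ is $\sim$. Suppose for contradiction that $P(\sigma_1)$ is strict. Applying Definition~\ref{1s} to the shifted pair at $\sigma_1$ gives a ranking equal to $P(\sigma_1+\tau)$ for $\tau\geq 0$; this ranking is strict at $\tau=0$ but indifferent at $\tau=-\sigma_1>0$ and at every larger $\tau$, which is not achievable by either a constant ranking or a single switch with strict preferences on both sides of some $\tau^*$. Hence $P(\sigma_1)=\mathord\sim$ for every $\sigma_1\in[-\hat t,0)$, giving conclusion (i).

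For the remaining cases, suppose the constant sign on $(0,\infty)$ is $\succ$. The subcase $P(\sigma_1)=\mathord\sim$ is ruled out by the key observation applied at $\sigma_0=\sigma_1$: it would force $P$ constant on $(\sigma_1,\infty)$, contradicting $P(0)=\mathord\sim\ne\mathord\succ=P(\sigma)$ for $\sigma>0$. The subcase $P(\sigma_1)=\mathord\succ$ is ruled out by applying Definition~\ref{1s} to the shifted pair at $\sigma_1$: the pattern $\succ$ at $\tau=0$, $\sim$ at $\tau=-\sigma_1$, $\succ$ for $\tau>-\sigma_1$ forces the unique switch to occur at $\tau^*=-\sigma_1$, whence the sign for $\tau>\tau^*$ must be $\prec$, contradicting the standing assumption. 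This leaves $P(\sigma_1)=\mathord\prec$, which is conclusion (iii); the case of constant $\prec$ on $(0,\infty)$ is symmetric and yields conclusion (ii). The main obstacle is purely conceptual: Definition~\ref{1s} only controls non-negative delays of a fixed base pair, so advancements of $(\hat x,\hat t)$ are invisible from it, and the shifted-pair trick is what converts every advancement of the original pair into a non-negative delay of a different pair in $\mathcal A_1$.
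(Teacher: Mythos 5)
Your proposal is correct and follows essentially the same route as the paper: first apply Definition \ref{1s} to the original indifferent pair to pin down a constant ranking on $(0,\infty)$, then treat each advancement $\sigma_1\in[-\hat t,0)$ by re-basing at the advanced pair and deriving a double-switch (or switch-without-reversal) contradiction. The paper's cases (i$^*$)--(iii$^*$) and its $\hat\tau=\hat t-\mu^*$ construction are exactly your shifted-pair trick, so no substantive difference remains to report.
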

\begin{proof}
	Let $(\hat{x}, \hat{t}), (\hat{y}, \hat{s}) \in {\mathcal A_1}$ such that $0<\hat{x}<\hat{y}$, $0< \hat{t}<\hat{s}$ and $(\hat{x}, \hat{t}) \sim (\hat{y}, \hat{s})$. Assume also that $\succcurlyeq$ exhibits the one-switch property for dated outcomes. Therefore, by definition of the one-switch property we have
	\begin{enumerate}[(i$^*$)]
		\item $(\hat{x}, \hat{t}+\sigma) \sim (\hat{y}, \hat{s}+\sigma)$ for all $\sigma>0$, or 
		\item $(\hat{x}, \hat{t}+\sigma) \prec (\hat{y}, \hat{s}+\sigma)$ for all $\sigma>0$, or
		\item $(\hat{x}, \hat{t}+\sigma) \succ (\hat{y}, \hat{s}+\sigma)$ for all $\sigma>0$.
	\end{enumerate}
	We need to analyse the situation when $-\hat{t}\leq \sigma<0$. The proof is by contradiction.
	
	In case (i$^*$), assume that there exists $\mu^*$ such that $0<\mu^*<\hat{t}$ and $(\hat{x}, \hat{t}-\mu^*) \prec (\hat{y}, \hat{s}-\mu^*)$. Let $\hat{\tau}=\hat{t}-\mu^*>0$ and $\hat{\rho}=\hat{s}-\mu^*>0$. Using this notation, we obtain
	\begin{align*}
		(\hat{x}, \hat{\tau}) &\prec (\hat{y}, \hat{\rho}), \\
		(\hat{x}, \hat{\tau}+\mu^*+\sigma) &\sim (\hat{y}, \hat{\rho}+\mu^*+\sigma), \text{ for all } \sigma\geq 0.
	\end{align*}
	
	This contradicts the one-switch property for dated outcomes, so (i) follows. If $(\hat{x}, \hat{t}-\mu^*) \succ (\hat{y}, \hat{s}-\mu^*)$
	the proof is analogous.
	
	In case (ii$^*$), assume that there exists $\mu^*$ such that $0<\mu^*<\hat{t}$ and $(\hat{x}, \hat{t}-\mu^*) \preccurlyeq (\hat{y}, \hat{s}-\mu^*)$. With the same notation as in the previous case $\hat{\tau}=\hat{t}-\mu^*>0$ and $\hat{\rho}=\hat{s}-\mu^*>0$ gives us
	\begin{align*}
		(\hat{x}, \hat{\tau}) &\preccurlyeq (\hat{y}, \hat{\rho}), \\
		(\hat{x}, \hat{\tau}+\mu^*) &\sim (\hat{y}, \hat{\rho}+\mu^*),\\
		(\hat{x}, \hat{\tau}+\mu^*+\sigma) &\prec (\hat{y}, \hat{\rho}+\mu^*+\sigma), \text{ for all } \sigma>0,
	\end{align*}
	which is a contradiction.
	
	In case (iii$^*$) the proof is symmetric to case (ii$^*$).
\end{proof}
The relation between impatience properties and the one-switch property for dated outcomes is established in the following lemma.

\begin{lemma}\label{1sismon}
	Suppose that $\succcurlyeq$ has a DU representation $(u, D)$. Then $\succcurlyeq$ exhibits the one-switch property for dated outcomes if and only if  $\succcurlyeq$ also exhibits either stationarity or strictly DI or strictly II.
\end{lemma}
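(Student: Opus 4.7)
My plan is to reduce both directions of the equivalence to properties of the ratio function $h_d(u) := D(u)/D(u+d)$, since for any $(x,t),(y,s)\in\mathcal A_1$ with $u(x)>0$ and (after relabelling) $t\le s$, writing $d=s-t$ and $c=u(y)/u(x)$ gives
\[
  \Delta(\sigma) \;=\; u(x)\,D(s+\sigma)\bigl[h_d(t+\sigma)-c\bigr],
\]
so $\mathrm{sign}\,\Delta(\sigma)=\mathrm{sign}\bigl(h_d(t+\sigma)-c\bigr)$. This single identity tethers the one-switch property on $\mathcal A_1$ to the pointwise monotonicity of $h_d$ in $u$, and Proposition~\ref{ratio} records that the impatience of $\succcurlyeq$ is encoded by exactly that same monotonicity.

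For the $(\Leftarrow)$ direction I would split into three cases. Under stationarity, $D(t)=e^{-rt}$ factors out of $\Delta$ as $e^{-r\sigma}[D(t)u(x)-D(s)u(y)]$, which has constant sign; under strictly II or strictly DI, Proposition~\ref{ratio} makes $h_d$ strictly monotonic in $u$, so $\sigma\mapsto h_d(t+\sigma)-c$ is strictly monotonic and changes sign at most once, giving the one-switch property. For the $(\Rightarrow)$ direction, the first step uses that $u$ is continuous, strictly increasing with $u(0)=0$ to arrange $c=u(y)/u(x)$ to take any positive real value; the sign condition \eqref{eqn:signcondition} applied to $h_d(\cdot)-c$ for every $c>0$ and every starting point $t\ge 0$ then forces $h_d$ to be either strictly monotonic or globally constant on $[0,\infty)$, because any interior local extremum or partial constant plateau of $h_d$ admits some $c$ producing two sign changes of $\Delta$.

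The remaining and most delicate step is to align the type of $h_d$ across different delays $d$. I would first rule out the hybrid case of a constant $h_{d_0}\equiv k$ coexisting with a strictly monotonic $h_{d_1}$: from $D(u+d_0)=D(u)/k$ a direct calculation yields $h_{d_1}(u+d_0)=h_{d_1}(u)$ for every $u$, making $h_{d_1}$ periodic with period $d_0$, which is incompatible with strict monotonicity. Consequently either every $h_d$ is constant, in which case $D(u+d)=D(u)D(d)$ and the multiplicative Cauchy equation together with continuity, $D(0)=1$ and $D$ strictly decreasing gives $D(t)=e^{-rt}$, i.e.~stationarity; or every $h_d$ is strictly monotonic, in which case I fix $0\le u_1<u_2$ and apply the intermediate value theorem to the continuous, nowhere-zero function $\phi(d):=h_d(u_2)-h_d(u_1)$ to conclude that $\phi$ has constant sign on $(0,\infty)$, so all the $h_d$ are strictly increasing (giving strictly II by Proposition~\ref{ratio}) or all are strictly decreasing (giving strictly DI). The main obstacle is precisely this alignment: the periodicity argument quarantines the constant case, and the IVT applied to $\phi(d)$ rules out a genuinely mixed scenario in which different delays would give different monotonicity directions, which is the crux of the $(\Rightarrow)$ implication.
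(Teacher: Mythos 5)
Your argument is correct, but it takes a genuinely different route from the paper's. The paper never isolates the ratio $h_d(u)=D(u)/D(u+d)$ as the central object: its ``only if'' direction first extends the one-switch property to negative delays (Lemma \ref{bothways}), fixes a single indifferent pair $(\hat x,\hat t)\sim(\hat y,\hat s)$, splits into three cases according to how that pair behaves under advancement and delay, and then propagates the resulting ratio inequality from pairs straddling $\hat t$ to arbitrary $t'<s'$ through a further three-sub-case argument, at each step constructing a fresh indifferent pair via continuity of $u$ and re-applying the one-switch property; its ``if'' direction is likewise a purely preference-based contradiction argument. You instead observe that $\mathrm{sign}\,\Delta(\sigma)=\mathrm{sign}\bigl(h_d(t+\sigma)-c\bigr)$ with $c=u(y)/u(x)$ ranging over all of $(0,\infty)$, so the one-switch property says exactly that each level $c$ is crossed at most once and, unless $h_d$ is globally constant, never attained on a nondegenerate interval; this forces each $h_d$ to be strictly monotone or constant, and you align the type across delays by the periodicity trick and the intermediate value theorem applied to $d\mapsto h_d(u_2)-h_d(u_1)$. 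Your route is shorter and more conceptual: it dispenses with Lemma \ref{bothways} and with negative delays entirely, and it makes transparent why the \emph{strong} one-switch property yields \emph{strict} monotonicity of $h_d$ (a partial plateau of $h_d$ at level $c$ produces indifference on an interval adjacent to strict preference, which Definition \ref{1s} forbids --- note this is a degenerate zero-interval violation of \eqref{eqn:signcondition} rather than literally ``two sign changes,'' so phrase it that way). The paper's route, working only with preference statements, keeps the ``if'' direction free of the DU representation, a point it explicitly flags. Two small things to tidy in a final write-up: dispose of the degenerate cases $t=s$ and $u(x)=0$ separately (both give $\Delta$ of constant sign), and note that the IVT step requires continuity of $D$, which the paper's definition of a discount function does not state explicitly but which its own proof (and Harvey's representation theorem) also presupposes.
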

\begin{proof}
	{\em ``Only If".} Assume that $\succcurlyeq$ exhibits the one-switch property for dated outcomes. 
	
	Consider some $(\hat{x}, \hat{t}), (\hat{y}, \hat{s}) \in {\mathcal A_1}$ such that $0<\hat{x}<\hat{y}$, $0< \hat{t}<\hat{s}$ and $(\hat{x}, \hat{t}) \sim (\hat{y}, \hat{s})$. To see that we can always find such a pair, suppose that $0<\hat{x}<\hat{y}$, $\hat{t}<\hat{s}$ and $(\hat{x}, \hat{t}) \succ (\hat{y}, \hat{s})$. Then it follows by the continuity of $u$ and the fact that $D$ is strictly decreasing that there exists $t' \in (\hat{t}, \hat{s})$ such that $(\hat{x}, t') \sim (\hat{y}, \hat{s})$. Alternatively, suppose that $0<\hat{x}<\hat{y}$, $\hat{t}<\hat{s}$ and $(\hat{x}, \hat{t}) \prec (\hat{y}, \hat{s})$. Then it follows by the continuity of $u$ and the fact that $D$ is strictly decreasing that there exists $x' \in (\hat{x}, \hat{y})$ such that $(x', \hat{t})\sim (\hat{y}, \hat{s})$.
	
	It follows by the one-switch property for dated outcomes and Lemma \ref{bothways} that either
	\begin{enumerate}[{\em {Case} 1.}]
		\item $(\hat{x}, \hat{t}+\sigma) \sim (\hat{y}, \hat{s}+\sigma)$ for all $\sigma\geq -\hat{t}$, or 
		\item $(\hat{x}, \hat{t}+\sigma) \succ (\hat{y}, \hat{s}+\sigma)$ for $-\hat{t}\leq\sigma<0$, and $(\hat{x}, \hat{t}+\sigma) \prec (\hat{y}, \hat{s}+\sigma)$ for $\sigma>0$, or
		\item $(\hat{x}, \hat{t}+\sigma) \prec (\hat{y}, \hat{s}+\sigma)$ for $-\hat{t}\leq\sigma<0$, and $(\hat{x}, \hat{t}+\sigma) \succ (\hat{y}, \hat{s}+\sigma)$ for $\sigma>0$.
	\end{enumerate}
	We will analyse each case separately.
	
	{\em Case 1.} Note that letting $\alpha=\sigma+\hat{t}\geq 0$ and $\hat{\sigma}=\hat{s}-\hat{t}>0$ we have
	\begin{equation*}
		(\hat{x}, \alpha) \sim (\hat{y}, \hat{\sigma}+\alpha) \text{ for all } \alpha \geq 0.
	\end{equation*}
	Using the DU representation it follows that
	\begin{equation}\label{eqn:start}
		\frac{u(\hat{x})}{u(\hat{y})}=\frac{D(\alpha+\hat{\sigma})}{D(\alpha)} \text{ for all } \alpha \geq 0.
	\end{equation}
	Consider some $t'<s'$. Then \eqref{eqn:start} implies
	\begin{equation*}
		\frac{u(\hat{x})}{u(\hat{y})} \ =\ \frac{D(t'+\hat{\sigma})}{D(t')} \ =\ \frac{D(s'+\hat{\sigma})}{D(s')}.
	\end{equation*}
	Rearranging, 
	\begin{equation}\label{eqn:disc}
		\frac{D(s')}{D(t')}=\frac{D(s'+\hat{\sigma})}{D(t'+\hat{\sigma})}.
	\end{equation}
	By continuity we can choose $x'<y'$ such that 
	\begin{equation} \label{eqn:sim}
		\frac{D(s')}{D(t')}=\frac{u(x')}{u(y')}.
	\end{equation}
	Hence, it follows from \eqref{eqn:disc}, \eqref{eqn:sim} that 
	\[
	(x', t') \sim (y', s') \text{ and } (x', t'+\hat{\sigma}) \sim (y', s'+\hat{\sigma}).
	\]
	Then the one-switch property implies that 
	\begin{equation*}
		\frac{D(s')}{D(t')}=\frac{D(s'+\mu)}{D(t'+\mu)} \text{ for all }\mu>0.
	\end{equation*}
	Since $t'<s'$ were arbitrary, it follows by Proposition \ref{ratio} that $\succcurlyeq$ exhibits constant impatience.
	
	{\em Case 2.} Defining $\alpha, \hat{\sigma}$ as for Case 1, we have 
	\begin{align*}
		(\hat{x}, \alpha) &\succ (\hat{y}, \hat{\sigma}+\alpha) \text{ for } 0\leq\alpha<\hat{t}, \text{ and}\\
		(\hat{x}, \alpha) &\prec (\hat{y}, \hat{\sigma}+\alpha) \text{ for } \sigma>\hat{t}. 
	\end{align*}
	Therefore, using the DU representation
	\begin{align*}
		\frac{u(\hat{x})}{u(\hat{y})} &> \frac{D(\hat{\sigma}+\alpha)}{D(\alpha)} \text{ for } 0\leq\alpha<\hat{t}, \text{ and}\\
		\frac{u(\hat{x})}{u(\hat{y})} &< \frac{D(\hat{\sigma}+\alpha)}{D(\alpha)}\text{ for } \alpha>\hat{t}. 
	\end{align*}
	Hence,
	\begin{equation*}
		\frac{D(t'+\hat{\sigma})}{D(t')}<\frac{u(\hat{x})}{u(\hat{y})}<\frac{D(s'+\hat{\sigma})}{D(s')} \text{ for any } t'<\hat{t}<s'.
	\end{equation*}
	Rearranging 
	\begin{equation}\label{eqn:in}
		\frac{D(s')}{D(t')}<\frac{D(s'+\hat{\sigma})}{D(t'+\hat{\sigma})} \text{ for any } t'<\hat{t}<s'.
	\end{equation}
	By continuity we can choose $x'<y'$ such that 
	\begin{equation}\label{eqn:si}
		\frac{D(s')}{D(t')}=\frac{u(x')}{u(y')}.
	\end{equation}
	It follows from \eqref{eqn:in}, \eqref{eqn:si} and the one-switch property that 
	\begin{equation} \label{eqn:keyrat}
		\frac{D(s'-\mu)}{D(t'-\mu)}<\frac{D(s')}{D(t')}<\frac{D(s'+\mu)}{D(t'+\mu)}
	\end{equation}
	for any $\mu>0$ whenever $t'<\hat{t}<s'$.
	
	Consider some $t'<s'$. There are three possible sub-cases:
	\begin{enumerate}[(a)]	
		\item $t'<\hat{t}<s'$, \label{a}
		\item $t'<s'\leq \hat{t}$, and \label{b}
		\item $\hat{t}\leq t'<s'$. \label{c}
	\end{enumerate}
	We will show that, in each of these three sub-cases,		
	\begin{equation} \label{eqn:sub-case}
		\frac{D(s')}{D(t')}<\frac{D(s'+\mu)}{D(t'+\mu)} \text{ for all }\mu>0.
	\end{equation}
	From Proposition \ref{ratio} we may then conclude that the preferences exhibit strict DI.
	
	In sub-case \eqref{a}, it follows directly from \eqref{eqn:keyrat} that \eqref{eqn:sub-case} holds. 
	
	In sub-case \eqref{b} choose $\varepsilon>0$ such that $s'+\varepsilon<\hat{t}<t'+\varepsilon$. Let $t''=t'+\varepsilon$ and $s''=s'+\varepsilon$. It follows from \eqref{eqn:keyrat} that 
	\begin{equation}
		\frac{D(s''-\sigma)}{D(t''-\sigma)}<\frac{D(s'')}{D(t'')}<\frac{D(s''+\sigma)}{D(t''+\sigma)}
	\end{equation}
	for all $\sigma>0$. Let $\sigma= \varepsilon$. Then we have
	\begin{equation}\label{eqn:eq1}
		\frac{D(s')}{D(t')}<\frac{D(s'+\varepsilon)}{D(t'+\varepsilon)}.
	\end{equation}
	By continuity we can choose $x'<y'$ such that 
	\begin{equation}\label{eqn:eq2}
		\frac{D(s')}{D(t')}=\frac{u(x')}{u(y')}.
	\end{equation}
	Therefore, it follows from \eqref{eqn:eq1}, \eqref{eqn:eq2} and the one-switch property that \eqref{eqn:sub-case} holds.
	
	In sub-case \eqref{c} choose $\varepsilon>0$ such that $t'-\varepsilon<\hat{t}<s'-\varepsilon$. Let $t''=t'-\varepsilon$ and $s''=s'-\varepsilon$. Then it follows from \eqref{eqn:keyrat} that
	\begin{equation}
		\frac{D(s''-\sigma)}{D(t''-\sigma)}<\frac{D(s'')}{D(t'')}<\frac{D(s''+\sigma)}{D(t''+\sigma)}
	\end{equation}
	for all $\sigma>0$. Let $\sigma= \varepsilon$. Then we have
	\begin{equation}\label{eqn:c1}
		\frac{D(s'-\varepsilon)}{D(t'-\varepsilon)}<\frac{D(s')}{D(t')}.
	\end{equation}
	Choose $x'<y'$ such that 
	\begin{equation}\label{eqn:c2}
		\frac{D(s')}{D(t')}=\frac{u(x')}{u(y')}.
	\end{equation}
	Therefore, \eqref{eqn:sub-case} follows by \eqref{eqn:c1}, \eqref{eqn:c2} and the one-switch property. 
	
	Therefore, in Case 2 $\succcurlyeq$ exhibits strictly DI.
	
	Finally, in {\em Case 3} we may show that $\succcurlyeq$ exhibits strictly II by a symmetric proof to that for Case 2.

	{\em ``If".}
	Suppose that there are some $x, y, t, s$ with $t\leq s$ and some $\sigma^*\geq 0$ such that $(x, t+\sigma^*)\sim (y, s+\sigma^*)$.
	It suffices to show that either
	\begin{equation}
		(x, t+\sigma)\sim (y, s+\sigma) \text{ for all } \sigma,
	\end{equation} 
	or
	\begin{align}
		(x, t+\sigma) &\succ (y, s+\sigma) \text{ for all } \sigma< \sigma^*, \text{ and }\\
		(x, t+\sigma) &\prec (y, s+\sigma) \text{ for all } \sigma> \sigma^*.
	\end{align}  
	If $t=s$ then $(x, t+\sigma^*)\sim (y, t+\sigma^*)$. It follows by monotonicity that $x=y$. Hence, $\succcurlyeq$ satisfies the one-switch property for dated outcomes.
	
	Assume that $t<s$ and $\succcurlyeq$ exhibits constant impatience. It follows that $(x, t+\sigma^*+\sigma')\sim (y, s+\sigma^*+\sigma')$ for all $\sigma'>0$, or $(x, t+\sigma)\sim (y, s+\sigma)$ for all $\sigma>\sigma^*$. To show that $\succcurlyeq$ satisfies the one-switch property for dated outcomes we need to demonstrate that $(x, t+\sigma)\sim (y, s+\sigma)$ for all $\sigma<\sigma^*$ such that $t+\sigma\geq 0$, or $(x, t+\sigma^*-\sigma')\sim (y, s+\sigma^*-\sigma')$ for all $\sigma'>\sigma^*$ such that $t+\sigma^*-\sigma'\geq 0$. The proof is by contradiction. Suppose that there exists $\sigma''>\sigma^*$ such that, say, $(x, t+\sigma^*-\sigma'')\succ (y, s+\sigma^*-\sigma'')$ with $t+\sigma^*-\sigma''\geq 0$. By continuity and impatience there exist $t'>t$ such that $(x, t'+\sigma^*-\sigma'')\sim (y, s+\sigma^*-\sigma'')$. Then, since $\succcurlyeq$ exhibits constant impatience it follows that $(x, t'+\sigma^*-\sigma''+\gamma)\sim (y, s+\sigma^*-\sigma''+\gamma)$ for all $\gamma>0$. Let $\gamma=\sigma''>0$. Then we obtain $(x, t'+\sigma^*) \sim (y, s+\sigma^*)$. Since $t'>t$ it implies by impatience that $(x, t'+\sigma^*) \prec (x, t+\sigma^*)$. Hence, $(x, t+\sigma^*) \succ (y, s+\sigma^*)$, a contradiction.
	
	Suppose that $t<s$ and $\succcurlyeq$ exhibits strictly DI. It follows that $(x, t+\sigma^*+\alpha) \prec (y, s+\sigma^*+\alpha)$ for all $\alpha>0$, or $(x, t+\sigma) \prec (y, s+\sigma)$ for all $\sigma>\sigma^*$. We now need to show that $(x, t+\sigma) \succ (y, s+\sigma)$ for all $\sigma<\sigma^*$, or $(x, t+\sigma^*-\sigma') \succ (y, s+\sigma^*-\sigma')$ for all $\sigma'>\sigma^*$ such that $t+\sigma^*-\sigma'\geq 0$. The proof is by contradiction. Suppose there exist $\sigma''>\sigma^*$ such that $(x, t+\sigma^*-\sigma'') \preccurlyeq (y, s+\sigma^*-\sigma'')$ and  $t+\sigma^*-\sigma''\geq 0$.
	
	First consider $(x, t+\sigma^*-\sigma'') \sim (y, s+\sigma^*-\sigma'')$ with $\sigma''>\sigma^*$ and $t+\sigma^*-\sigma''\geq 0$. Then, since $\succcurlyeq$ satisfy strictly DI it follows that $(x, t+\sigma^*-\sigma''+\gamma) \prec (y, s+\sigma^*-\sigma''+\gamma)$ for all $\gamma>0$. Let $\gamma=\sigma''>0$. Then we have  $(x, t+\sigma^*) \prec (y, s+\sigma^*)$, which is a contradiction.  
	Secondly, consider $(x, t+\sigma^*-\sigma'') \prec (y, s+\sigma^*-\sigma'')$ with $\sigma''>\sigma^*$ and $t+\sigma^*-\sigma''\geq 0$. It follows by continuity and impatience that there exist $s'>s$ such that $(x, t+\sigma^*-\sigma'') \sim (y, s'+\sigma^*-\sigma'')$. Hence, since $\succcurlyeq$ exhibit strictly DI it implies that $(x, t+\sigma^*-\sigma''+\gamma) \prec (y, s'+\sigma^*-\sigma''+\gamma)$ for all $\gamma>0$ with $\sigma''>\sigma^*$ and $t+\sigma^*-\sigma''\geq 0$. Let $\gamma=\sigma''$. Then we have $(x, t+\sigma^*) \prec (y, s'+\sigma^*)$. Since $s'>s$ it follows by impatience that $(y, s'+\sigma^*) \prec (y, s+\sigma^*)$, therefore, $(x, t+\sigma^*) \prec (y, s+\sigma^*)$, a contradiction.
	
	If we assume that $\succcurlyeq$ exhibits strictly II, the proof is analogous.
\end{proof}
While the assumption of a DU representation is not necessary for the {\em ``If"} part of the proof, it is essential for our proof of the {\em ``Only If"} part. The necessity of a DU representation for the {\em ``Only If"} result of Lemma \ref{1sismon} remains an open question.

It was demonstrated in the working paper \cite{anchugina2016aggregating}, that for differentiable discount functions a strictly increasing time preference rate corresponds to strictly II, while a strictly decreasing time preference rate corresponds to strictly DI.% 
	\footnote{The {\em time preference rate}, $r(t)$, is defined as follows: $r(t)\ =\  -\frac{D'(t)}{D(t)}$.}
The proof is along the lines of \cite[Lemma 11]{anchugina2016aggregating}. We use this result to prove the following proposition.

\begin{proposition}
	Suppose that $\succcurlyeq$ has a DU representation $(u, D)$.
	\begin{itemize}
		\item If $D(t) = (1 + c t)e^{-rt}$, where $r\geq c\geq 0$ and $r>0$, then $\succcurlyeq$ exhibits strictly II when $c>0$ and $\succcurlyeq$ exhibits stationarity when $c=0$.
		\item If $D(t) = ae^{-bt} + (1-a)e^{-(b+c)t}$, where $a, b, c>0$, $a\leq b/c+1$, then $\succcurlyeq$ exhibits strictly DI when $a<1$, strictly II when $1<a\leq b/c+1$ and stationarity when $a=1$.
	\end{itemize}
\end{proposition}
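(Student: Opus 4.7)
The plan is to compute the time preference rate $r(t) = -D'(t)/D(t)$ for each discount function and invoke the cited result that a strictly increasing (respectively, strictly decreasing) $r(t)$ corresponds to strictly II (respectively, strictly DI). The case $r(t)\equiv r$ corresponds to $D(t)=e^{-rt}$, and by Proposition \ref{prelec} this characterises stationarity. So the whole proposition reduces to monotonicity analysis of two explicit one-variable functions.

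For the linear times exponential $D(t)=(1+ct)e^{-rt}$, a direct differentiation gives
\[ r(t) \ =\ -\frac{D'(t)}{D(t)} \ =\ r - \frac{c}{1+ct}. \]
When $c=0$, $r(t)\equiv r$, so preferences are stationary. When $c>0$, $c/(1+ct)$ is strictly decreasing in $t$, so $r(t)$ is strictly increasing and preferences exhibit strictly II.

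For the sum of exponentials $D(t)=ae^{-bt}+(1-a)e^{-(b+c)t}$, writing $\alpha(t)=ae^{-bt}$ and $\beta(t)=(1-a)e^{-(b+c)t}$, an analogous computation yields
\[ r(t) \ =\ \frac{b\,\alpha(t)+(b+c)\beta(t)}{\alpha(t)+\beta(t)} \ =\ b + \frac{c}{1+\frac{a}{1-a}e^{ct}}. \]
If $a=1$ then $D(t)=e^{-bt}$ and preferences are stationary. If $a<1$, the coefficient $a/(1-a)$ is positive, so the denominator above is strictly increasing in $t$, making $r(t)$ strictly decreasing; hence preferences exhibit strictly DI.

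The delicate sub-case is $1<a\leq b/c+1$. Here $a/(1-a)<0$, yet Proposition \ref{discfun} ensures $D(t)=\alpha(t)+\beta(t)>0$ for all $t$, which forces $1+\tfrac{a}{1-a}e^{ct}<0$ throughout. As $t$ grows, $\tfrac{a}{1-a}e^{ct}$ moves further below $-1$, so this negative denominator becomes even more negative, and its reciprocal (still negative) strictly increases toward $0$. Therefore $r(t)$ strictly increases and preferences exhibit strictly II. The main obstacle is precisely this sign bookkeeping for $a>1$; the key observation is that the parameter restriction $a\leq b/c+1$ supplied by Proposition \ref{discfun} keeps $\alpha(t)+\beta(t)$ strictly positive, which pins down the sign of the denominator and lets the monotonicity argument go through cleanly.
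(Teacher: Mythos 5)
Your proof is correct and follows essentially the same route as the paper: compute the time preference rate $-D'(t)/D(t)$ for each family and invoke the cited equivalence between its strict monotonicity and strictly II/DI (with the constant-rate case giving stationarity). The only cosmetic difference is in the sum-of-exponentials case, where the paper differentiates the rate and reduces the sign of the derivative to that of $c^2e^{-ct}a(a-1)$, while you rewrite the rate as $b + c\bigl(1+\tfrac{a}{1-a}e^{ct}\bigr)^{-1}$ and argue monotonicity directly, using $D(t)>0$ from Proposition \ref{discfun} to pin down the sign of the denominator when $a>1$; the two computations are equivalent.
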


\begin{proof}
	{\em (a) Linear times exponential.} Assume first that $c > 0$. Since $D'(t)=e^{-rt}(c-r-crt)$, the time preference rate is:
	\[ 
	-\frac{D'(t)}{D(t)} \ =\ \frac{e^{-rt}(crt-c+r)}{(1+ct)e^{-rt}} \ =\ \frac{r(1+ct)-c}{1+ct} \ =\ r-\frac{c}{1+ct}.
	\]
	The derivative of time preference rate is
	$c^2(1+ct)^{-2}>0$, therefore, linear times exponential discount function exhibits strictly increasing impatience. 
	Otherwise, if $c=0$, then $D(t)=e^{-rt}$ and the preferences $\succcurlyeq$ exhibit stationarity (see, for example, \cite{fishburn1982time}).
	
	{\em (b) Sums of exponentials.} The time preference rate is
	\begin{align*} 
		-\frac{D'(t)}{D(t)} \ =\ &\frac{abe^{-bt}+(1-a)(b+c)e^{-(b+c)t}}{ae^{-bt}+(1-a)e^{-(b+c)t}}=\frac{e^{-bt} \left[ab+(1-a)(b+c)e^{-ct}\right]}{e^{-bt}\left[a+(1-a)e^{-ct}\right]}\\
		\ =\ &\frac{ab+(1-a)(b+c)e^{-ct}}{a+(1-a)e^{-ct}}.
	\end{align*}
	The derivative of time preference rate is
	\begin{gather*} 
		\left[\frac{ab+(1-a)(b+c)e^{-ct}}{a+(1-a)e^{-ct}}\right]' \\
		\ =\ \frac{-c(1-a)(b+c)e^{-ct}\left[a+(1-a)e^{-ct}\right]+\left[ab+(1-a)(b+c)e^{-ct}\right]c(1-a)e^{-ct}}{\left[a+(1-a)e^{-ct}\right]^2}.
	\end{gather*}
	
	The sign of the derivative depends on the sign of the numerator of this expression:
	\[
		Q(t) = -c(1-a)(b+c)e^{-ct}\left[a+(1-a)e^{-ct}\right]+\left[ab+(1-a)(b+c)e^{-ct}\right]c(1-a)e^{-ct}.
	\]
	Simplifying $Q(t)$:
	\[
		Q(t)=c(1-a)e^{-ct}\left[ab+(1-a)(b+c)e^{-ct}-(b+c)(a+(1-a)e^{-ct})\right]=c^2e^{-ct}a(a-1).
	\]
	Recall that $a>0$ and $a \leq b/c+1$. Therefore, $Q(t)=0$ if $a=1$, $Q(t)$ is strictly negative if $0<a<1$ and $Q(t)$ is strictly positive if $1<a\leq b/c+1$ and $a\neq 1$. Hence, the time preference rate is constant if $a=1$, strictly decreasing%
	%\footnote{See also Lemma \ref{Lemma:rate} in Section \ref{Section:IndexDI} of Chapter \ref{Aggregation}.} 
	if $0<a<1$ and strictly increasing if $1<a \leq b/c+1$. This in turn implies that $\succcurlyeq$ exhibit stationarity if $a=1$, strictly DI if $0<a<1$ and strictly II if $1<a\leq b/c+1$.
\end{proof}
A linear times exponential discount function and two sum of exponentials discount functions, with their associated rates of time preference, are illustrated in Figure \ref{graph:Sumex}.

\noindent
\begin{figure}[h]
	%\vspace{-4.5mm}
	\begin{minipage}{\linewidth}
		\makebox[\linewidth]{
			\includegraphics{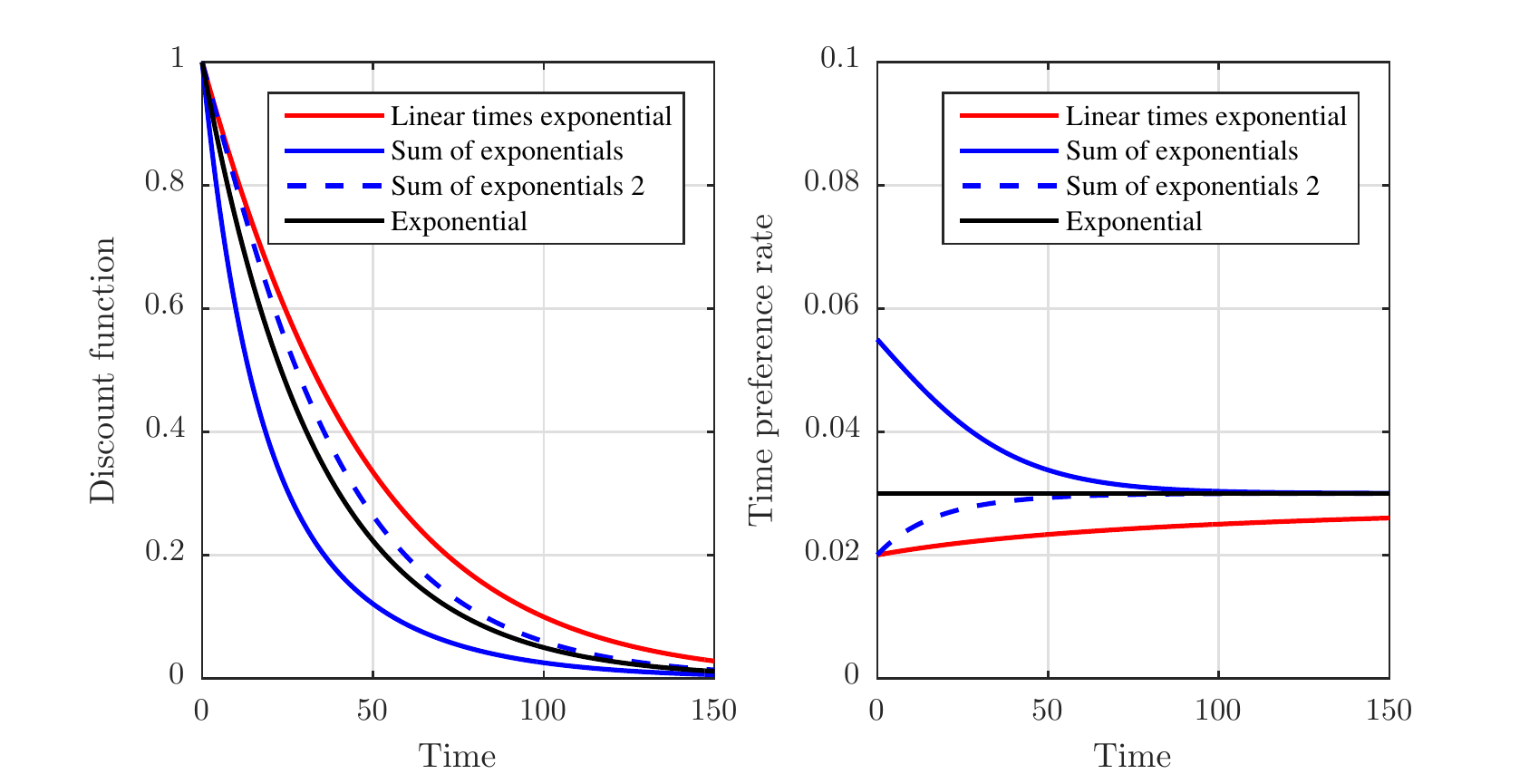}}\
		%	\captionof{figure}{Discount functions -- linear times exponential $D(t)=(1+0.01t)e^{-0.03t}$, exponential $D(t)=e^{-0.03t}$, 
		%				sum of exponentials $D(t)= 0.5e^{0.03t} + 0.5e^{0.08t}$, sum of exponentials 2 with different parameters $D(t)= 1.2e^{0.03t} - 0.2e^{0.08t}$ --
		%				and their associated rates of time preference}
		\captionof{figure}{Linear times exponential discount function $D(t)=(1+0.01t)e^{-0.03t}$, exponential discount function $D(t)=e^{-0.03t}$, 
			sum of exponentials $D(t)= 0.5e^{0.03t} + 0.5e^{0.08t}$, sum of exponentials $D(t)= 1.2e^{0.03t} - 0.2e^{0.08t}$
			and their associated rates of time preference}
		\label{graph:Sumex}
	\end{minipage}\\
\end{figure}

It is worth mentioning that Bell's \cite{bell1988one} definitions of the terms ``decreasing impatience" and ``increasing impatience" are different from the ones used here. Bell's definitions are given below:
\begin{definition}[\cite{bell1988one}]
	Let $\succcurlyeq$ on ${\mathcal A_1}$ have a DU representation with a discount function $D$. Then we say that preferences $\succcurlyeq$ exhibit DI$^*$ [II$^*$] if 
	\[
	D(s+t) > [<] \ D(s)D(t) \text{ for any } s, t>0.
	\] 
\end{definition}
Note that Bell's \cite{bell1988one} DI$^*$ (II$^*$) corresponds to strict log-superadditivity (log-\linebreak subadditivity) of $D$. Obviously, strict log-superadditivity (strict log-subadditivity) is a special case of strict log-convexity (strict log-concavity). Therefore, Bell's definitions of DI$^*$ and II$^*$ are weaker properties than our strictly DI and strictly II. Bell \cite[Proposition 8]{bell1988one} specifies the parameter values for a sum of exponentials discount function such that $\succcurlyeq$ exhibit DI$^*$/II$^*$:
\begin{proposition}[{\cite[Proposition 8]{bell1988one}}] \label{prop:belldiiistar}
	Let $D(t)=ae^{-bt}+(1-a)e^{-(b+c)t}$, where $a, b, c >0$ and $a\leq 1+b/c$. Then it is DI$^*$ if $a<1$ and II$^*$ if $1<a\leq1+b/c$.%
	\footnote{Bell \cite[Proposition 8]{bell1988one} uses a strict inequality $a<1+b/c$ to guarantee that $D'(t)<0$ when $t=0$. However, if $D'(t)=0$ when $t=0$ and $D'(t)<0$ for all $t>0$, then $D(t)$ is strictly decreasing on $[0, \infty)$. Therefore, we allow $a\leq  1+b/c$, since this weak inequality is consistent with the properties of a discount function.}
\end{proposition}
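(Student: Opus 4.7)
The plan is to deduce the proposition from the computation in the immediately preceding proposition, together with a standard convexity argument. First I would rewrite the conclusions in terms of $g:=\ln D$: since $D(0)=1$ gives $g(0)=0$, the statement DI$^{*}$ is precisely strict superadditivity $g(s+t)>g(s)+g(t)$ for all $s,t>0$, and II$^{*}$ is the reverse inequality.

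Next I would invoke the preceding proposition, whose proof showed that for the sum-of-exponentials discount function the time preference rate $r(t)=-D'(t)/D(t)=-g'(t)$ is strictly decreasing on $[0,\infty)$ when $0<a<1$ and strictly increasing when $1<a\le 1+b/c$. Equivalently, $g''(t)>0$ throughout $[0,\infty)$ in the first case (so $g$ is strictly convex) and $g''(t)<0$ throughout $[0,\infty)$ in the second (so $g$ is strictly concave). This also matches Proposition~\ref{prelec}, which identifies strict log-convexity with strictly DI and strict log-concavity with strictly II.

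It remains to pass from strict log-convexity (respectively, concavity) of $D$ combined with the normalisation $g(0)=0$ to the desired strict super/subadditivity. Fix $s,t>0$ and set $\lambda=s/(s+t)\in(0,1)$; writing $s=\lambda(s+t)+(1-\lambda)\cdot 0$ and applying strict convexity of $g$ yields $g(s)<\lambda g(s+t)+(1-\lambda)g(0)=\lambda g(s+t)$, and symmetrically $g(t)<(1-\lambda)g(s+t)$. Adding these inequalities gives $g(s)+g(t)<g(s+t)$, i.e.\ $D(s)D(t)<D(s+t)$, which is DI$^{*}$. The strictly log-concave case is identical with all inequalities reversed and yields II$^{*}$.

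There is no substantive obstacle: the analytical work has already been done in the preceding sign analysis of $Q(t)=c^{2}e^{-ct}a(a-1)$, and the convexity-to-superadditivity passage under the normalisation $g(0)=0$ is essentially a textbook lemma. If one preferred a self-contained route, one could instead compute $\ln D(s+t)-\ln D(s)-\ln D(t)$ directly and control its sign using the same convexity information, but routing through the already-computed time preference rate is the cleanest presentation.
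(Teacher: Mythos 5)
Your proof is correct. The paper does not actually supply a proof of this proposition --- it is simply cited from Bell's Proposition 8 --- but your route (reusing the sign analysis of $Q(t)=c^{2}e^{-ct}a(a-1)$ to get strict log-convexity of $D$ for $a<1$ and strict log-concavity for $1<a\leq 1+b/c$, and then passing from strict convexity of $g=\ln D$ with $g(0)=0$ to strict super/subadditivity via $g(s)<\tfrac{s}{s+t}\,g(s+t)$ and $g(t)<\tfrac{t}{s+t}\,g(s+t)$) is precisely the reasoning the paper's surrounding remarks gesture at when they observe that strict log-convexity (log-concavity) implies strict log-superadditivity (log-subadditivity) under the normalisation $D(0)=1$.
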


Comparing Proposition \ref{prop:belldiiistar} to Proposition \ref{1sismon}, it is easy to see that strictly II and II$^*$ have the same implications for the parameters for a sum of exponentials discount function (and similarly for strictly DI and DI$^*$). It is also straightforward to observe that the restrictions imposed by the properties of strictly II and II$^*$ on the parameters for a linear times exponential discount function coincide.
%\ \cor{It is worth mentioning that Proposition \ref{prop:belldiiistar} is a special case of Proposition \ref{1sismon}. }

\subsection{Representation of preferences with the one-switch \\ property}\label{section:representation}

We first observe that constant impatience is equivalent to the zero-switch property:
\begin{definition}
	We say that $\succcurlyeq$ on $\mathcal{A}$ exhibit the zero-switch property if for every pair $({\bf x, t}), ({\bf y, s}) \in {\mathcal A}$ the ranking of $({\bf x}, {\bf t}+\sigma)$ and $({\bf y}, {\bf s}+\sigma)$ is independent of $\sigma$.
\end{definition}
It follows from this definition that if $\succcurlyeq$ on ${\mathcal A}$ exhibit the zero-switch property, then for any $({\bf x, t}), ({\bf y, s}) \in {\mathcal A}$, if there exists $\hat{\sigma} \geq 0$ such that $({\bf x}, {\bf t}+\hat{\sigma}) \sim ({\bf y}, {\bf s}+\hat{\sigma})$, then $({\bf x}, {\bf t}+\sigma) \sim ({\bf y}, {\bf s}+\sigma)$ for any $\sigma\geq 0$.

The following proposition amends \cite[Proposition 8]{bell1988one}.
\begin{proposition}\label{bellanalogue}
	Let $\succcurlyeq$ on ${\mathcal A}$ have a DU representation $(u, D)$.%\cor{, where $D$ is twice continuously differentiable}.%
	%		\footnote{\cor{The assumption of differentiability is not a strong restriction since a function which is monotone and finite on a real interval has a derivative almost everywhere \cite{wheeden2015measure}.}}
	Then $\succcurlyeq$ exhibit the one-switch property only if $D(t)$ has one of the following forms:
	\begin{itemize}
		\item $D(t)=ae^{-bt}+(1-a)e^{-(b+c)t}$, with $a, b, c>0$ and $a\leq b/c+1$, 
		\item $D(t) = (1 + c t)e^{-rt}$, where $r \geq c \geq 0$ and $r>0$.
	\end{itemize}
\end{proposition}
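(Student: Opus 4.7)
My plan is to adapt Bell's argument \cite{bell1988one} for his Proposition 2, which shows that the one-switch property forces $D$ to satisfy a second-order linear homogeneous ODE with constant coefficients, and then to complete the classification by carefully handling the repeated-root case of the associated characteristic equation. This case, which Bell apparently overlooks, is precisely what yields the linear times exponential form.

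First I would invoke Lemma \ref{lemma:OSPonlyD} to reduce the one-switch property to a statement about $D$ alone: for every finite collection of times $t_i, s_j \geq 0$ and non-negative coefficients $v_i, w_j$, the function $\Delta(\sigma) = \sum_i v_i D(t_i+\sigma) - \sum_j w_j D(s_j+\sigma)$ has at most one sign change on $[0,\infty)$. Since any real linear combination $\sum_k c_k D(\tau_k+\sigma)$ can be split into positive and negative parts, this is equivalent to saying that every nontrivial real linear combination of shifts of $D$ has at most one sign change.

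The second and hardest step is to show that the space spanned by $\{D(\cdot+\tau): \tau \geq 0\}$ has dimension at most two. Suppose for contradiction that $D(\cdot+\tau_1), D(\cdot+\tau_2), D(\cdot+\tau_3)$ were linearly independent for some $0 \leq \tau_1 < \tau_2 < \tau_3$. Then one can prescribe two interior zeros and solve the resulting linear system for coefficients, producing a linear combination with at least two sign changes and contradicting the one-switch property. Because this finite-dimensional span is translation invariant, $D$ must then satisfy a second-order linear ODE with constant coefficients, $D''(t) + \alpha D'(t) + \beta D(t) = 0$.

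Finally I would solve this ODE and classify. The characteristic polynomial $\lambda^2+\alpha\lambda+\beta$ produces (i) two distinct real roots, giving $D(t)=c_1 e^{r_1 t}+c_2 e^{r_2 t}$; (ii) a repeated real root, giving $D(t)=(c_1+c_2 t)e^{rt}$; or (iii) complex conjugate roots, giving an oscillatory $D(t)=e^{\alpha t}(c_1\cos\beta t+c_2\sin\beta t)$. Case (iii) is incompatible with $D$ being strictly positive and strictly decreasing on $[0,\infty)$, so it is excluded. Applying Proposition \ref{discfun} to the two surviving cases then extracts the parameter ranges and delivers exactly the two forms in the statement. The main obstacle is the dimensionality argument in step two, where one must carefully exploit the freedom to pick non-negative coefficients of arbitrary magnitude to rule out all higher-dimensional spans.
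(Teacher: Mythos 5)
Your proposal is essentially correct, but it reaches the conclusion by a genuinely different route from the paper. The paper's proof first invokes Lemma \ref{1sismon} to establish that the preferences exhibit strictly DI, strictly II or stationarity; this is what guarantees the existence of strictly positive utilities $u(\alpha),u(\beta),u(\gamma)$ solving the system \eqref{eqn:discrete}, i.e.\ two sequences that remain indifferent under two distinct delays. The one-switch property then forces indifference for \emph{all} delays, yielding, for each step size $\sigma>0$, a homogeneous second-order linear \emph{difference} equation for $D$ along the grid $t_0+n\sigma$; the three discrete solution families are classified, the oscillatory one discarded by monotonicity, and the continuous forms recovered by letting $\sigma\to 0$ (justified by appeal to Bell's argument and Rad\'o's theorem). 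You instead bound the dimension of $\mathrm{span}\{D(\cdot+\tau):\tau\ge 0\}$ by two via a linear-algebra/zero-prescription argument, then pass from translation invariance of that span to a constant-coefficient second-order ODE. Your route buys a real simplification: it bypasses the impatience trichotomy and the explicit construction of the indifferent streams, and it replaces the delicate $\sigma\to 0$ limit by a single appeal to the classification of finite-dimensional translation-invariant subspaces of continuous functions (Anselone--Korevaar type results), which you should cite explicitly since it plays the same load-bearing role that Rad\'o's theorem plays in the paper. Two small points to tighten: first, prescribing two zeros of a nontrivial combination of three independent translates does not literally produce ``two sign changes''---but it does not need to, since condition \eqref{eqn:signcondition} already forbids a non-constant-sign $\Delta$ from having more than one zero, so two distinct zeros of a not-identically-zero combination is itself the contradiction; second, you should note that the one-dimensional case ($D(\cdot+\tau)$ proportional to $D$ for all $\tau$) gives the pure exponential, which is absorbed into the linear-times-exponential form with $c=0$. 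With those remarks added, your argument is a valid and arguably cleaner alternative proof.
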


We adapt Bell's \cite{bell1988one, bell1995risk} method of proof for the risk (expected utility) context to our time preference framework. The required adaptation is non-trivial. The main reason is that probabilities sum up to one, whereas utilities of outcomes do not. In the original Bell proof \cite{bell1988one} this property of probabilities was used to obtain a system of two equations in two variables. The conditions under which the solutions of this system exist are well-known. In the time preference setting, we use Lemma \ref{1sismon} to obtain two sequences of dated outcomes that are indifferent at two different points of delay. As a result of which we obtain a homogeneous second order linear difference equation. The solutions of this equation extended to continuous time give us three types of functions. We further show that only linear times exponential and sums of exponentials (with suitable parameter restrictions) satisfy the one-switch property and the properties of a discount function (using Proposition \ref{discfun}). It is also worth mentioning that in the risk setting Bell \cite{bell1988one, bell1995risk} obtains a third order difference equation, rather than the second order difference equation that we obtain in the time preference framework.

%It is also worth mentioning that in the time preference setting we obtain a second order difference equation, rather than the third order difference equation in \cite{bell1988one, bell1995risk}.

\begin{proof} 	
	Fix some $\sigma>0$. Suppose we can find $u(\alpha), u(\beta), u(\gamma)>0$ such that
	\begin{equation} \label{eqn:discrete}
		u(\alpha)+u(\beta)D(2\sigma)=u(\gamma)D(\sigma) \text{ and } u(\alpha)D(\sigma)+u(\beta)D(3\sigma)=u(\gamma)D(2\sigma) 
	\end{equation}
	Then, since $\succcurlyeq$ has a DU representation, it implies that 
	\[
	({\bf x}, {\bf t}) \sim ({\bf y}, {\bf s}) \text{ and } ({\bf x}, {\bf t}+\sigma) \sim ({\bf y}, {\bf s}+\sigma), 
	\]
	where $({\bf x}, {\bf t})=((\alpha,\beta),(0,2\sigma))$, $({\bf y}, {\bf s})=(\gamma, \sigma)$ and $({\bf x}, {\bf t}+\sigma)=((\alpha,\beta),(\sigma,3\sigma))$, $({\bf y}, {\bf s}+\sigma)=(\gamma, 2\sigma)$.
	
	We first show that for any $D(\sigma), D(2\sigma), D(3\sigma)$ we can always find $u(\alpha), u(\beta), u(\gamma)>0$ such that \eqref{eqn:discrete} holds.
	The first equation of \ref{eqn:discrete} implies $u(\alpha)=u(\gamma)D(\sigma)-u(\beta)D(2\sigma)$, which we may substitute into the second equation as follows:
	\[
	\left(u(\gamma)D(\sigma)-u(\beta)D(2\sigma)\right) D(\sigma)+u(\beta)D(3\sigma)=u(\gamma)D(2\sigma).
	\] 
	Simplifying 
	\begin{equation}\label{eqn:simp}
		u(\gamma)(D(\sigma)^2-D(2\sigma))+u(\beta)\left(D(3\sigma)-D(\sigma)D(2\sigma)\right)=0.
	\end{equation}
	Since preferences $\succcurlyeq$ satisfy the one-switch property they will also satisfy the one-switch property for dated outcomes, and hence it follows by Lemma \ref{1sismon} that $\succcurlyeq$ exhibits strictly DI, strictly II or stationarity.
	Assume first that $\succcurlyeq$ exhibits strictly DI. Then, by Proposition \ref{ratio} we have
	\[
	\frac{1}{D(\sigma)} > \frac{D(\sigma)}{D(2\sigma)}>\frac{D(2\sigma)}{D(3\sigma)}>\cdots
	\]
	Therefore, $D(\sigma)^2-D(2\sigma)<0$ and $D(3\sigma)-D(\sigma)D(2\sigma)>0$, and hence, by continuity it is always possible to find $u(\beta), u(\gamma)>0$ such that \eqref{eqn:simp} holds.
	
	Since $\frac{u(\beta)}{u(\gamma)}=\frac{D(2\sigma)-D(\sigma)^2}{D(3\sigma)-D(\sigma) D(2\sigma)}$, we have
	\[
	\frac{u(\beta)}{u(\gamma)}-\frac{D(\sigma)}{D(2\sigma)} \ =\ \frac{D(2\sigma)-D(\sigma)^2}{D(3\sigma)-D(\sigma) D(2\sigma)}-\frac{D(\sigma)}{D(2\sigma)} \ =\ \frac{D(2\sigma)^2-D(\sigma)D(3\sigma)}{D(2\sigma)\left(D(3\sigma)-D(\sigma) D(2\sigma)\right)}<0.
	\]
	Hence, $u(\beta)D(2\sigma) -u(\gamma)D(\sigma)<0$, which implies that 
	\begin{equation*}
		u(\alpha)=u(\gamma)D(\sigma)-u(\beta)D(2\sigma)>0. 
	\end{equation*}	
	If $\succcurlyeq$ exhibits strictly II the argument is analogous with the inequalities reversed. 
	
	In the case when $\succcurlyeq$ exhibits stationarity, we have $D(\sigma)^2-D(2\sigma)=0$ and $D(3\sigma)-D(\sigma)D(2\sigma)=0$, therefore \eqref{eqn:simp} holds for any $u(\beta), u(\gamma)$. Hence, we can choose $u(\beta), u(\gamma)>0$ such that $u(\alpha)=u(\gamma)D(\sigma)-u(\beta)D(2\sigma)>0$.
	
	Therefore, we have found two sequences of dated outcomes $({\bf x}, {\bf t}), ({\bf y}, {\bf s})\in {\mathcal A}$ such that 
	\[
	({\bf x}, {\bf t}+t) \sim ({\bf y}, {\bf s}+t), \text{ where } t =0, \sigma. 
	\]
	Then, since $\succcurlyeq$ satisfies the one-switch property it must be true that 
	\[
	({\bf x}, {\bf t}+t) \sim ({\bf y}, {\bf s}+t) \text{ for any } t\geq 0.
	\]
	In particular, $u(\alpha)D(t)+u(\beta)D(t+2\sigma)=u(\gamma)D(t+\sigma)$.%, where $\tau \in \{0, \sigma, 2\sigma, \ldots\}$. 
	
	Since $u(\gamma)\neq 0$, we can write $D(t+2\sigma)+aD(t+\sigma)+bD(t)=0$, where $a=-u(\gamma)/u(\beta)$, $b=u(\alpha)/u(\beta)$. 
	
	For some $\sigma>0$ and some $t_0\geq 0$ let $D_{n}^{(t_0, \sigma)}=D(t_0+n\sigma)$, $n\in {\mathbb N}_0$. We then obtain a homogeneous second order linear difference equation
	\begin{equation}\label{eqn:secodk}
		D_{n+2}^{(t_0, \sigma)}+aD_{n+1}^{(t_0, \sigma)}+bD_n^{(t_0, \sigma)}=0.
	\end{equation} 
	It is well known (see, for example, \cite{mickens1991difference}) that this equation has three types of solutions, which are derived from the characteristic equation: $z^2+az+b=0$. 
	These three solutions are as follows:
	\begin{enumerate}[{\bf {Solution} 1.}]
		\item  If $a^2-4b>0$, then there are two distinct real roots denoted as $z_1$ and $z_2$. In this case the two linearly independent solutions to \eqref{eqn:secodk} are $z_1^n$ and $z_2^n$, where $n\in \mathbb{N}_{0}$. The general solution is $D_n^{(t_0, \sigma)}=c_1z_1^n+c_2z_2^n$, where $c_1, c_2 = const$.
		\item  If $a^2-4b=0$, then the roots coincide so $z_1=z_2=z_0$. In this case the two linearly independent solutions to 
		\eqref{eqn:secodk} are $z_0^n$ and $nz_0^n$. The general solution is $D_n^{(t_0, \sigma)}=\left(c_1+c_2n\right)z_0^n$, where $c_1, c_2 = const$.
		\item  If $a^2-4b<0$, the roots are complex 
		\[
		z_{\pm}\ =\ x\pm iy\ =\ r\left(\cos{\theta} \pm i\sin{\theta}\right)\ =\ re^{\pm i\theta},
		\] 
		where $y>0$, $r=\sqrt{x^2+y^2}$, $\cos \theta =x/r$ and $\sin \theta=y/r$ with $\theta \in (0, \pi)$ (since $y>0$). Then the two linearly independent solutions to \eqref{eqn:secodk} are $Re z_{+}^n=r^n \cos{n \theta}$ and $Im z_{+}^n=r^n \sin{n\theta}$. The general solution is 
		\begin{equation*}\label{oscillate}
			D_n^{(t_0, \sigma)}=r^n\left[c_1\cos n\theta+c_2\sin n \theta\right],
		\end{equation*}
		where $c_1, c_2=const.$
	\end{enumerate}
	
	Note that by letting $C=\sqrt{c_1^2+c_2^2}$, $\cos \omega =\frac{c_1}{C}$, $\sin \omega =\frac{c_2}{C}$ and $\omega=\tan^{-1}(\frac{c_2}{c_1})$, Solution 3 can be rewritten as follows:
	\begin{equation*}
		D_n^{(t_0, \sigma)}=r^n\left[c_1\cos n\theta+c_2\sin n \theta\right]=Cr^n\left[\cos \omega \cos n\theta+\sin \omega\sin n \theta\right]=Cr^n\cos(n\theta-\omega).
	\end{equation*}
	Recall that $\theta \in (0, \pi)$. 
	Therefore, Solution 3 can be excluded because it implies multiple changes of sign (it does not satisfy monotonicity).

	%	Note that equation \eqref{eqn:secodk} holds for arbitrary $\sigma$. As demonstrated in \cite[Theorem 2]{castillo1999working} and  \cite{castillo2004functional}, equation \eqref{eqn:secodk} has the same general solution at the grid points as a second order homogeneous differential equation.%	
	%		\footnote{\cor{The conditions under which linear difference, differential and functional equations have the same general solutions on common domain and in the class of differentiable functions are given in \cite{castillo1999working}, \cite{castillo2004functional}.}} 
	%	Moreover, as proved in \cite{potts1982differential}, when $\sigma \to 0$, difference equation \eqref{eqn:secodk} converges to a linear differential equation, whereas the solutions of \eqref{eqn:secodk} converge, respectively to
	
	Note that equation \eqref{eqn:secodk} holds for arbitrary $\sigma>0$ and arbitrary $t_{0}\geq 0$, though the roots $z_{1}$ and $z_{2}$ and the constants $c_{1}$ and $c_{2}$ may depend, in a continuous fashion, on $t_{0}$ and $\sigma$.  Bell \cite{bell1995risk, bell1988one} argues that D must therefore satisfy the corresponding limit of one of these solutions, as $\sigma\rightarrow 0$.%
	\footnote{%
		%We have not been able to find a reference in the literature, nor have we been able to construct a formal proof of this claim. However, it seems intuitively correct and there are a few papers with very closely related results: \cite{castillo2004functional}, \cite{mauldon1964continuous}, \cite{szekelyhidi1989on}, \cite{rado1962characterization}. 
		%		The closest result is by Romanian mathematician Radó \cite{rado1962characterization}, which was recently extended to multidimensional case in \cite{almira2016on}. Suppose that $L_n$ is the set of functions $f\colon \mathbb{R} \to \mathbb{R}$ which satisfy a linear differential equation with constant (real) coefficients: $A_0f+A_1f'+\ldots+A_nf^{(n)}=0$. Radó \cite[Theorem 2]{rado1962characterization} proves that the set of continuous functions $f:\mathbb{R} \to \mathbb{R}$, which satisfy the equation $a_0(\sigma)f(t)+a_1(\sigma)f(t+\sigma)+\ldots+a_n(\sigma)f(t+n\sigma)=0$ with continuous functions $a_i(\sigma)\colon (0, W) \to \mathbb{R}$, where $W>0$ and $i=0,\ldots,n, (a_n(\sigma)\neq 0)$, coincides with the set $L_n$. Note that the elements of $L_{2}$ are all limits of our three solution types.}
		Romanian mathematician Radó \cite{rado1962characterization} proved a more general result which was recently extended to multidimensional case in \cite{almira2016on}. Radó \cite[Theorem 2]{rado1962characterization} proves that the set of continuous functions $f:\mathbb{R} \to \mathbb{R}$, which satisfy the equation $a_0(\sigma)f(t)+a_1(\sigma)f(t+\sigma)+\ldots+a_n(\sigma)f(t+n\sigma)=0$ with continuous functions $a_i(\sigma)\colon (0, W) \to \mathbb{R}$, where $W>0$ and $i=0, \ldots, n$ with $a_n(\sigma)\neq 0$, coincides with the set of functions $f\colon \mathbb{R} \to \mathbb{R}$ which satisfy a linear differential equation $A_0f+A_1f'+\ldots+A_nf^{(n)}=0$ for some real coefficients $A_{0},\ldots, A_{n}$. When $n=2$ it is well-known that the solutions to such a differential equation coincide with the limits of our three solution types. 
	}
	The solutions of \eqref{eqn:secodk} converge, respectively to
	
	\begin{enumerate}[{\bf {Solution} 1.}]	
		\item (Sum of exponentials): $D(t) = c_1e^{r_1t} + c_2e^{r_2t}$, where $r_1\neq r_2$,
		\item (Linear times exponential): $D(t) = (c_1 + c_2 t)e^{r_0t}$.
	\end{enumerate}
	By Proposition \ref{discfun} it follows that:
	\begin{enumerate}[{\bf {Solution} 1.}]
		\item $D(t) = ae^{-bt} + (1-a)e^{-(b+c)t}$, where $a, b, c>0$, and $a\leq b/c+1$;
		\item $D(t) = (1 + c t)e^{-rt}$, where $r \geq c \geq 0$ and $r>0$. Note that this solution includes the exponential discount function as a special case. That is, if $c=0$, then $D(t) = e^{-rt}$, where $r > 0$.
	\end{enumerate}	
\end{proof}

The following corollary summarizes Proposition \ref{two} and Proposition \ref{bellanalogue}.
\begin{corollary}
	Let $\succcurlyeq$ on ${\mathcal A}$ have a DU representation $(u, D)$. Then $\succcurlyeq$ exhibits the one-switch property if and only if $D(t)$ has one of the following forms:
	\begin{itemize}
		\item $D(t)=ae^{-bt}+(1-a)e^{-(b+c)t}$, with $a, b, c>0$ and $a\leq b/c+1$,
		\item $D(t) = (1 + c t)e^{-rt}$, where $r \geq c \geq 0$ and $r>0$.
	\end{itemize}	
\end{corollary}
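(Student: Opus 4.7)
The plan is to adapt Bell's argument \cite{bell1988one, bell1995risk} from the risk setting to the intertemporal setting, exploiting Lemma \ref{1sismon} and Proposition \ref{discfun}. The first step is to fix an arbitrary $\sigma>0$ and manufacture a pair of alternatives that are indifferent at two distinct common delays, specifically $\sigma_1=0$ and $\sigma_2=\sigma$. Concretely, I would look for outcomes $\alpha,\beta,\gamma$ with positive utilities satisfying the two simultaneous equations
\[
u(\alpha)+u(\beta)D(2\sigma)=u(\gamma)D(\sigma), \qquad u(\alpha)D(\sigma)+u(\beta)D(3\sigma)=u(\gamma)D(2\sigma),
\]
which correspond respectively to $((\alpha,\beta),(0,2\sigma))\sim(\gamma,\sigma)$ and the same pair with common delay $\sigma$. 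The feasibility of positive solutions hinges on the signs of $D(\sigma)^{2}-D(2\sigma)$ and $D(3\sigma)-D(\sigma)D(2\sigma)$; to control these signs I invoke Lemma \ref{1sismon}, which forces $\succcurlyeq$ to exhibit either strict DI, strict II, or stationarity, and then invoke Proposition \ref{ratio} to read off monotonicity of the ratios $D((k{+}1)\sigma)/D(k\sigma)$.

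Once two such indifferences are produced at delays $0$ and $\sigma$, the one-switch property (Definition \ref{1s}) forces indifference for every nonnegative common delay $t$, i.e.
\[
u(\alpha)D(t)+u(\beta)D(t+2\sigma)=u(\gamma)D(t+\sigma)\qquad\text{for all }t\geq 0.
\]
Setting $a=-u(\gamma)/u(\beta)$ and $b=u(\alpha)/u(\beta)$ and sampling at $t=t_{0}+n\sigma$ gives the second-order homogeneous linear recurrence $D^{(t_0,\sigma)}_{n+2}+aD^{(t_0,\sigma)}_{n+1}+bD^{(t_0,\sigma)}_{n}=0$. Standard difference-equation theory yields three solution families according to the discriminant $a^{2}-4b$: two distinct real roots (giving a sum of exponentials in $n$), a repeated real root (giving a linear times exponential in $n$), or complex conjugate roots (giving an oscillatory $r^{n}\cos(n\theta-\omega)$). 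The oscillatory family can be discarded immediately, since any such solution changes sign infinitely often and thus violates both the positivity and monotonicity of a discount function.

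Next I pass from the discrete lattice $\{t_{0}+n\sigma\}$ to continuous time. Because the recurrence holds for every $\sigma>0$ and every $t_{0}\geq 0$ and $D$ is continuous, a theorem of Radó \cite{rado1962characterization} (cited in the footnote) implies that $D$ satisfies the associated second-order linear ordinary differential equation, whose solutions are exactly the continuous-time limits of the two surviving discrete families: $D(t)=c_1e^{r_1t}+c_2e^{r_2t}$ with $r_1\neq r_2$, or $D(t)=(c_1+c_2 t)e^{r_0 t}$. Finally, Proposition \ref{discfun} is applied to each form to extract the parameter restrictions required to make $D$ a genuine discount function, producing exactly the two families in the conclusion, with the exponential $D(t)=e^{-rt}$ absorbed as the special case $c=0$ of the linear times exponential.

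The principal obstacle is the first step: the risk-theoretic proof of Bell exploits the constraint that probabilities sum to one to solve a $3\times 3$ system with three unknowns, but in the intertemporal setting no analogous normalisation is available. The workaround is to reduce to a $2\times 2$ system in $u(\beta)$ and $u(\gamma)$ by eliminating $u(\alpha)$, and then to use the strict DI/II/stationarity dichotomy provided by Lemma \ref{1sismon} to guarantee that the resulting solution has $u(\alpha),u(\beta),u(\gamma)>0$. A secondary subtlety is justifying the continuous-time limit rigorously across all $\sigma>0$ simultaneously, which is where the appeal to Radó's theorem is essential rather than merely a convenience.
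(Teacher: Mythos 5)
Your argument for the ``only if'' direction tracks the paper's proof of Proposition~\ref{bellanalogue} essentially step for step: the two indifference equations at delays $0$ and $\sigma$, the use of Lemma~\ref{1sismon} and Proposition~\ref{ratio} to control the signs of $D(\sigma)^2-D(2\sigma)$ and $D(3\sigma)-D(\sigma)D(2\sigma)$ and thereby obtain positive $u(\alpha),u(\beta),u(\gamma)$, the second-order homogeneous recurrence, the elimination of the oscillatory (complex-root) family via monotonicity, the passage to continuous time via Rad\'o's theorem, and the final appeal to Proposition~\ref{discfun} for the parameter restrictions. That half is sound and is the same route the paper takes.

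The gap is that the corollary is a biconditional and your proposal proves only one direction. Nothing in it establishes that a DU representation with $D(t)=(1+ct)e^{-rt}$ or $D(t)=ae^{-bt}+(1-a)e^{-(b+c)t}$ actually satisfies the one-switch property. Deriving the functional form from the one-switch hypothesis does not yield the converse: a priori the one-switch class could be a proper subset of these two families, cut out by further parameter restrictions. The paper closes this direction separately (Proposition~\ref{two}) by computing
\[
\Delta(\sigma)=\sum_{i=1}^{n}D(t_i+\sigma)u(x_i)-\sum_{j=1}^{m}D(s_j+\sigma)u(y_j)
\]
explicitly and factoring out the strictly positive term $e^{-r\sigma}$ (resp.\ $e^{-b\sigma}$), leaving $A+cB+cA\sigma$, an affine function of $\sigma$, in the linear times exponential case, and $a\tilde{A}+(1-a)\tilde{B}e^{-c\sigma}$, a monotone function of $\sigma$, in the sum of exponentials case; each changes sign at most once, which is exactly condition \eqref{eqn:signcondition}. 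You need to supply this sufficiency argument (or invoke Proposition~\ref{two}) to obtain the full equivalence.
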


\section{The weak one-switch property} \label{section:w1s}

\subsection{The weak one-switch property and mixtures of sequences of dated outcomes}
%The one-switch property can be stated in a weaker form, as follows:
%\begin{definition}[\cite{bell1988one}]
%	We say that the preferences $\succcurlyeq$ on ${\mathcal A}$ exhibit the {\em weak one-switch property} if for every pair $({\bf x, t}), ({\bf y, s}) \in {\mathcal A}$ the ranking of $({\bf x}, {\bf t}+\sigma)$ and $({\bf y}, {\bf s}+\sigma)$ is either independent of $\sigma$, or there exists $\sigma^*\geq 0$, such that  
%	\begin{align*}
%		({\bf x}, {\bf t}+\sigma) &\succcurlyeq ({\bf y}, {\bf s}+\sigma), \text{ for any } \sigma<\sigma^*,\\
%		({\bf x}, {\bf t}+\sigma) &\preccurlyeq ({\bf y}, {\bf s}+\sigma), \text{ for any } \sigma>\sigma^*
%	\end{align*}
%	or
%	\begin{align*}
%		({\bf x}, {\bf t}+\sigma) &\preccurlyeq({\bf y}, {\bf s}+\sigma), \text{ for any } \sigma<\sigma^*,\\
%		({\bf x}, {\bf t}+\sigma) &\succcurlyeq ({\bf y}, {\bf s}+\sigma), \text{ for any } \sigma>\sigma^*.
%	\end{align*}
%\end{definition}
%In other words, there do not exist $({\bf x, t}), ({\bf y, s}) \in {\mathcal A}$ and $\sigma, \varepsilon$ with $0<\sigma< \varepsilon$ such that 
%\begin{align*}
%	({\bf x}, {\bf t}) &\succ ({\bf y}, {\bf s}), \\
%	({\bf x}, {\bf t}+\sigma) &\prec ({\bf y}, {\bf s}+\sigma), \\
%	({\bf x}, {\bf t}+\varepsilon) &\succ ({\bf y}, {\bf s}+\varepsilon),
%\end{align*}
%or with all strict preferences reversed.

The one-switch property is equivalent to the weak one-switch property in the risk setting when preferences over lotteries have an expected utility representation \cite{bell1988one}. This equivalence follows from \cite[Lemma 3]{bell1988one}, where mixture linearity and other properties of expected utility are used to show that if two lotteries are indifferent at two wealth levels then they should be indifferent for any wealth level. In the intertemporal framework of this paper a direct adaptation of the proof of this equivalence is not possible, even if we assume that preferences have a DU representation, because a DU representation is, in general, not mixture linear. However, we can adapt the Anscombe and Aumann (AA) setting \cite{anscombe1963definition} to preferences over streams of consumption lotteries. Working in this environment it is possible to establish the equivalence of the weak one-switch property and the one-switch property for time preferences with a discounted {\em expected} utility representation. We first have to adapt the AA framework to continuous time for this purpose.

Assume that $X$ is a {\bf mixture set} (\cite{fishburn1982foundations}); that is, for every $x, y \in X$ and every $\lambda, \mu \in [0, 1]$, %$x_i \lambda y_i = \lambda x_i + (1- \lambda) y_i$,
there exists $x \lambda y \in X$ satisfying: 
\begin{itemize} \renewcommand*\labelitemi{\textbullet}
	\item $x1y=x$, 
	\item $x \lambda y = y (1-\lambda) x$,
	\item$(x \mu y) \lambda y = x (\lambda \mu) y$.
\end{itemize}

We assume that $X$ contains a ``neutral'' outcome, denoted by $0$. We can think of $X$ as a set of lotteries with monetary outcomes, and $0$ corresponds to the lottery which pays $0$ with certainty.

%As before, we assume that $X$ contains a ``neutral'' (status quo) outcome, denoted by $0$. \cor{Note that there is a change in notation compared to  Chapter \ref{Axiomatization} where $x_0$ was used as a ``neutral" outcome. }

We next introduce a mixture operation for sequences of dated outcomes, analogous to the AA mixing operation. To do so, we recall that the neutral outcome obtains at any date not specified in the sequence. 
First, define 
\begin{equation}
	({\bf x, t})\lambda ({\bf y, t})=({\bf x} \lambda {\bf y} , {\bf t}) \text{ for any }({\bf x}, {\bf t}), ({\bf y}, {\bf t}) \in {\mathcal A} \text{ and all } \lambda \in [0, 1],
\end{equation}
where ${\bf x} \lambda {\bf y}$ is defined the usual way (see \cite{fishburn1982foundations}).

Let $({\bf x, t}) \in {\mathcal A}$ with ${\bf t}=(t_1, t_2, \ldots, t_n)$ and let ${\bf s}=(s_1, s_2, \ldots, s_m)$. Define ${\bf t\vert s} = {\bf l} = (l_1, l_2, \ldots, l_k),$ where $l_1<l_2<\ldots<l_k$ and $\{l_1, l_2, \ldots, l_k\}=\{t_1, t_2, \ldots, t_n\} \cup \{s_1, s_2, \ldots, s_m\}$. An example of concatenation procedure for time vectors ${\bf t}=(t_1, t_2, t_3)$ and ${\bf s}=(s_1, s_2, s_3, s_4)$ is given in Figure \ref{fig:concatenation}. 

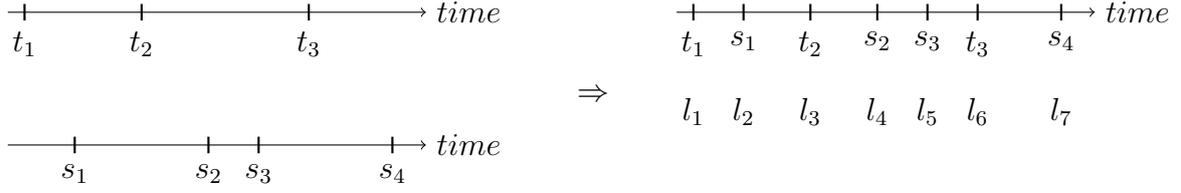
\begin{figure}[!htb] 
	\begin{minipage}{\linewidth}
		\centering
		\begin{tikzpicture}[
		scale=2.2,		axis/.style={very thick, ->, >=stealth', line join=miter},
		every node/.style={color=black},
		dot/.style={circle,fill=black,minimum size=4pt,inner sep=0pt,
			outer sep=-1pt},
		]
		%vector s
		%axis
		\draw[->] (0, 0)--(2.5,0) node(xline)[right] {$time$}; 
		%dashes on xaxis
		\draw [thick] (0.4,-.05) node[below]{$s_1$} -- (0.4,0.05);
		\draw [thick] (1.2,-.05) node[below]{$s_2$} -- (1.2,0.05);
		\draw [thick] (1.5,-.05) node[below]{$s_3$} -- (1.5,0.05);
		\draw [thick] (2.3,-.05) node[below]{$s_4$} -- (2.3,0.05);
		\node[] at (3.5,0.3) {$\Rightarrow$};
		%vector t
		\begin{scope}[yshift=0.8cm]
		\draw[->] (0, 0)--(2.5,0) node(xline)[right] {$time$}; 
		%dashes on xaxis
		\draw [thick] (0.1,-.05) node[below]{$t_1$} -- (0.1,0.05);
		\draw [thick] (0.8,-.05) node[below]{$t_2$} -- (0.8,0.05);
		\draw [thick] (1.8,-.05) node[below]{$t_3$} -- (1.8,0.05);
		\end{scope}
		%concatenation
		\begin{scope}[yshift=0.8cm, xshift=4cm]
		\draw[->] (0, 0)--(2.5,0) node(xline)[right] {$time$}; 
		%dashes on xaxis
		\draw [thick] (0.1,-.05) node[below]{$t_1$} -- (0.1,0.05);
		\draw [thick] (0.4,-.05) node[below]{$s_1$} -- (0.4,0.05);
		\draw [thick] (0.8,-.05) node[below]{$t_2$} -- (0.8,0.05);
		\draw [thick] (1.2,-.05) node[below]{$s_2$} -- (1.2,0.05);
		\draw [thick] (1.5,-.05) node[below]{$s_3$} -- (1.5,0.05);
		\draw [thick] (1.8,-.05) node[below]{$t_3$} -- (1.8,0.05);
		\draw [thick] (2.3,-.05) node[below]{$s_4$} -- (2.3,0.05);
		\node[] at (0.1,-0.6) {$l_1$};
		\node[] at (0.4,-0.6) {$l_2$};
		\node[] at (0.8,-0.6) {$l_3$};
		\node[] at (1.2,-0.6) {$l_4$};
		\node[] at (1.5,-0.6) {$l_5$};
		\node[] at (1.8,-0.6) {$l_6$};
		\node[] at (2.3,-0.6) {$l_7$};
		\end{scope}
		\end{tikzpicture}
	\end{minipage}\\
	\caption{Concatenation of ${\bf t}=(t_1, t_2, t_3)$ and ${\bf s}=(s_1, s_2, s_3, s_4)$}
	\label{fig:concatenation}
\end{figure}

For any $({\bf x, t}), ({\bf y, s}) \in {\mathcal A}$ and any $\lambda \in [0, 1]$ define the mixture operation as follows:
\begin{equation}\label{eqn:mixdef}
	({\bf x, t})\lambda ({\bf y, s})=({\bf z, t\vert s})\lambda ({\bf z', s\vert t}),
\end{equation}
where ${\bf z}$ is defined so that	if $l_j=t_i$, then $z_j=x_i$, otherwise $z_j=0$,
and ${\bf z'}$ is defined so that if $l_j=s_i$, then $z'_j=y_i$, otherwise $z'_j=0$. Note that $({\bf x},{\bf t})$ and $({\bf z, t\vert s})$ are identical sequences of dated outcomes, and likewise $({\bf y},{\bf s})$ and $({\bf z'},{\bf s|t})$ are identical sequences.
Figure \ref{fig:transf} illustrates the transformation of the sequence $({\bf x, t})=((x_1, x_2, x_3), (t_1, t_2, t_3))$ to the sequence $({\bf z}, {\bf t \lvert s})$ and the sequence $({\bf y, s})=((y_1, y_2, y_3, y_4), (s_1, s_2, s_3, s_4))$ to the sequence $({\bf z'}, {\bf s \lvert t})$. Note that ${\bf t} \vert {\bf s}={\bf s} \vert {\bf t}$.

\begin{figure}[!htb]
	\centering
	\begin{minipage}{.4\linewidth}
		\centering
		\begin{tikzpicture}[
		scale=2.2,		axis/.style={very thick, ->, >=stealth', line join=miter},
		every node/.style={color=black},
		dot/.style={circle,fill=black,minimum size=4pt,inner sep=0pt,
			outer sep=-1pt},
		]
		\node[] at (0.8,0.9) {$({\bf x},{\bf t}) \to ({\bf z}, {\bf t \lvert s})$};
		\draw[->] (0, 0)--(2.5,0) node(xline)[right] {$time$}; 
		
		%dashes on xaxis vector t
		\draw [thick] (0.1,-.05) node[below]{$t_1$} -- (0.1,0.05);
		\draw [thick] (0.8,-.05) node[below]{$t_2$} -- (0.8,0.05);
		\draw [thick] (1.8,-.05) node[below]{$t_3$} -- (1.8,0.05);
		
		%vector x
		\node[] at (0.1,0.2) {$x_1$};
		\node[] at (0.8,0.2) {$x_2$};
		\node[] at (1.8,0.2) {$x_3$};
		
		%vector s
		\draw [thick] (0.4,-.05) node[below]{$s_1$} -- (0.4,0.05);
		\draw [thick] (1.2,-.05) node[below]{$s_2$} -- (1.2,0.05);
		\draw [thick] (1.5,-.05) node[below]{$s_3$} -- (1.5,0.05);
		\draw [thick] (2.3,-.05) node[below]{$s_4$} -- (2.3,0.05);
		
		%zeros
		\node[] at (0.4,0.2) {$0$};
		\node[] at (1.2,0.2) {$0$};
		\node[] at (1.5,0.2) {$0$};
		\node[] at (2.3,0.2) {$0$};
		
		%rename
		\node[] at (0.1,-0.6) {$l_1$};
		\node[] at (0.4,-0.6) {$l_2$};
		\node[] at (0.8,-0.6) {$l_3$};
		\node[] at (1.2,-0.6) {$l_4$};
		\node[] at (1.5,-0.6) {$l_5$};
		\node[] at (1.8,-0.6) {$l_6$};
		\node[] at (2.3,-0.6) {$l_7$};
		
		\node[] at (0.1,0.5) {$z_1$};
		\node[] at (0.4,0.5) {$z_2$};
		\node[] at (0.8,0.5) {$z_3$};
		\node[] at (1.2,0.5) {$z_4$};
		\node[] at (1.5,0.5) {$z_5$};
		\node[] at (1.8,0.5) {$z_6$};
		\node[] at (2.3,0.5) {$z_7$};
		\end{tikzpicture}
	\end{minipage}
	\qquad 
	\begin{minipage}{.4\linewidth}
		\centering
		\begin{tikzpicture}[
		scale=2.2,		axis/.style={very thick, ->, >=stealth', line join=miter},
		every node/.style={color=black},
		dot/.style={circle,fill=black,minimum size=4pt,inner sep=0pt,
			outer sep=-1pt},
		]
		
		\node[] at (0.8,0.9) {$({\bf y},{\bf s}) \to ({\bf z'}, {\bf s \lvert t})$};
		\draw[->] (0, 0)--(2.5,0) node(xline)[right] {$time$}; 
		%dashes on xaxis
		%vector s
		\draw [thick] (0.4,-.05) node[below]{$s_1$} -- (0.4,0.05);
		\draw [thick] (1.2,-.05) node[below]{$s_2$} -- (1.2,0.05);
		\draw [thick] (1.5,-.05) node[below]{$s_3$} -- (1.5,0.05);
		\draw [thick] (2.3,-.05) node[below]{$s_4$} -- (2.3,0.05);
		
		%vector y
		\node[] at (0.4,0.2) {$y_1$};
		\node[] at (1.2,0.2) {$y_2$};
		\node[] at (1.5,0.2) {$y_3$};
		\node[] at (2.3,0.2) {$y_4$};
		
		%vector t
		\draw [thick] (0.1,-.05) node[below]{$t_1$} -- (0.1,0.05);
		\draw [thick] (0.8,-.05) node[below]{$t_2$} -- (0.8,0.05);
		\draw [thick] (1.8,-.05) node[below]{$t_3$} -- (1.8,0.05);
		
		%vector 0
		\node[] at (0.1,0.2) {$0$};
		\node[] at (0.8,0.2) {$0$};
		\node[] at (1.8,0.2) {$0$};
		
		\node[] at (0.1,-0.6) {$l_1$};
		\node[] at (0.4,-0.6) {$l_2$};
		\node[] at (0.8,-0.6) {$l_3$};
		\node[] at (1.2,-0.6) {$l_4$};
		\node[] at (1.5,-0.6) {$l_5$};
		\node[] at (1.8,-0.6) {$l_6$};
		\node[] at (2.3,-0.6) {$l_7$};
		
		\node[] at (0.1,0.5) {$z'_1$};
		\node[] at (0.4,0.5) {$z'_2$};
		\node[] at (0.8,0.5) {$z'_3$};
		\node[] at (1.2,0.5) {$z'_4$};
		\node[] at (1.5,0.5) {$z'_5$};
		\node[] at (1.8,0.5) {$z'_6$};
		\node[] at (2.3,0.5) {$z'_7$};
		\end{tikzpicture}
	\end{minipage}
	\caption{Transformation of $({\bf x, t})=((x_1, x_2, x_3), (t_1, t_2, t_3))$ to $({\bf z}, {\bf t \lvert s})$ and $({\bf y, s})=((y_1, y_2, y_3, y_4), (s_1, s_2, s_3, s_4))$ to $({\bf z'}, {\bf s \lvert t})$}
	\label{fig:transf}
\end{figure}
It is not hard to see that ${\mathcal A}$ is a mixture set. 

The utility function $u \colon X \to \mathbb{R}$ is called {\em mixture linear} if for every $x, y \in X$ we have $u(x\lambda y)=\lambda u(x)+(1-\lambda) u(y)$.

We say that preferences in this environment have {\em a DEU (discounted expected utility) representation} if they have a DU representation $(u,D)$ in which $u$ is mixture linear on $X$.  We next show that the induced utility $U$ on sequences of dated outcomes is mixture linear on $\cal{A}$. It follows from \eqref{eqn:mixdef} that 
\begin{equation}
	U(({\bf x}, {\bf t})\lambda({\bf y}, {\bf s})) \ =\  U(({\bf z, t\vert s})\lambda ({\bf z', s\vert t})) \ =\ U(({\bf z, l})\lambda ({\bf z', l})) \ =\ U({\bf z}\lambda {\bf z'}, {\bf l}),
	%\lambda U({\bf z, l})+(1-\lambda)U({\bf z', l}) \\
	%&=\lambda U({\bf x, t\vert s})+(1-\lambda)U({\bf y, s\vert t})=\lambda U({\bf x, t})+(1-\lambda)U({\bf y, s}).
\end{equation}
where ${\bf l}= {\bf t\vert s}={\bf s\vert t}$. Since $u$ is mixture linear it follows that 
\begin{align*}
	U({\bf z}\lambda {\bf z'}, {\bf l})& \ =\  \lambda U({\bf z, l})+(1-\lambda)U({\bf z', l}) \ =\ \lambda U({\bf z, t\vert s})+(1-\lambda)U({\bf z', s\vert t})\\
	& \ =\ \lambda U({\bf x, t})+(1-\lambda)U({\bf y, s}).
\end{align*}
Hence, $U$ is mixture linear on $\cal{A}$.

It is worth mentioning that the problem of finding axiomatic foundations for a DEU representation remains an open question. It is natural to assume that Fishburn and Rubinstein' s axioms \cite{fishburn1982time} should be satisfied for $\succcurlyeq$ restricted to ${\mathcal A_1}$ with $X$ being a mixture set.%
\footnote{For example, Bleichrodt et al. \cite{bleichrodt2009non} assume the existence of DU representation for $\succcurlyeq$ on ${\mathcal A_1}$, where $X$ is not necessarily restricted to reals. Similarly, Rohde \cite{rohde2010hyperbolic} applies the DU representation to $\succcurlyeq$ on ${\mathcal A_1}$ requiring only that $X$ is a connected topological space which contains a ``neutral'' outcome.} 
However, it is beyond the scope of the present paper to address this issue.

The following lemma is a part of \cite[Lemma 3]{bell1988one}. The proof is given for completeness.
\begin{lemma}[cf., \cite{bell1988one}]\label{Step1}
	Let preference $\succcurlyeq$ on $\mathcal{A}$ have a DU representation.
	If $\succcurlyeq$ on ${\mathcal A}$ exhibit the weak one-switch property, then for any $({\bf x, t}), ({\bf y, s}) \in {\mathcal A}$ if there exist $\sigma_1, \sigma_2$ such that $\sigma_1<\sigma_2$ and $({\bf x}, {\bf t}+\sigma_1) \sim ({\bf y}, {\bf s}+\sigma_1)$, and 
	$({\bf x}, {\bf t}+\sigma_2) \sim ({\bf y}, {\bf s}+\sigma_2)$, then $({\bf x}, {\bf t}+\sigma) \sim ({\bf y}, {\bf s}+\sigma)$ for any $\sigma \in (\sigma_1, \sigma_2)$.
\end{lemma}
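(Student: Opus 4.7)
The plan is to argue by contradiction on the induced real-valued function
\[
\Delta(\sigma)\ =\ \sum_{i} D(t_i+\sigma)u(x_i)-\sum_{j} D(s_j+\sigma)u(y_j),
\]
which, under the DU representation, satisfies $\Delta(\sigma_1)=\Delta(\sigma_2)=0$. Suppose for contradiction that some $\sigma_3\in(\sigma_1,\sigma_2)$ has $\Delta(\sigma_3)\neq 0$; after swapping the roles of $({\bf x},{\bf t})$ and $({\bf y},{\bf s})$ if necessary, assume $\Delta(\sigma_3)>0$.

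The key step is to construct, near $({\bf x},{\bf t})$, a slightly inferior alternative whose difference function is strictly \emph{negative} at $\sigma_1,\sigma_2$ but still strictly \emph{positive} at $\sigma_3$; this will replace the two weak indifferences by strict reversed preferences and produce a forbidden configuration. Since $u\ge 0$ on $X=[0,\infty)$ and $\Delta(\sigma_3)>0$, at least one coordinate of ${\bf x}$ is strictly positive; assume $x_1>0$ without loss of generality. For $\delta>0$ let ${\bf x}^\delta$ denote the sequence obtained by replacing $x_1$ with $x_1-\delta$, and set $\eta(\delta):=u(x_1)-u(x_1-\delta)>0$. Continuity of $u$ gives $\eta(\delta)\downarrow 0$ as $\delta\downarrow 0$, and the perturbed difference function is
\[
\Delta^\delta(\sigma)\ =\ \Delta(\sigma)-D(t_1+\sigma)\,\eta(\delta).
\]
Hence $\Delta^\delta(\sigma_1)<0$ and $\Delta^\delta(\sigma_2)<0$ for every $\delta>0$, while $\Delta^\delta(\sigma_3)>0$ for all $\delta$ sufficiently small.

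Fixing such a $\delta$ and translating back through the DU representation yields $({\bf x}^\delta,{\bf t}+\sigma_1)\prec({\bf y},{\bf s}+\sigma_1)$, $({\bf x}^\delta,{\bf t}+\sigma_3)\succ({\bf y},{\bf s}+\sigma_3)$ and $({\bf x}^\delta,{\bf t}+\sigma_2)\prec({\bf y},{\bf s}+\sigma_2)$. Subtracting the common offset $\sigma_1$ from every time, this is precisely the strict pattern $\prec,\succ,\prec$ at delays $0<\sigma_3-\sigma_1<\sigma_2-\sigma_1$, which is the reversed version of the configuration explicitly ruled out by the weak one-switch property (in the alternative formulation stated immediately after its definition). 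The contradiction forces $\Delta\equiv 0$ on $(\sigma_1,\sigma_2)$, i.e.\ $({\bf x},{\bf t}+\sigma)\sim({\bf y},{\bf s}+\sigma)$ for all $\sigma\in(\sigma_1,\sigma_2)$.

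The main obstacle I anticipate is engineering a perturbation that strictly flips \emph{both} boundary indifferences at once while preserving strict preference at the interior point; the observation that a single-coordinate downward shift in ${\bf x}$ changes $\Delta$ only by the continuous, strictly positive factor $D(t_1+\sigma)\eta(\delta)$ is what makes the argument uniform. The mild edge case where every $x_i=0$ is ruled out immediately, since then $\Delta(\sigma_3)=-\sum_j D(s_j+\sigma_3)u(y_j)\le 0$, contradicting $\Delta(\sigma_3)>0$; in the mixture-set framework of Section \ref{section:w1s} the same perturbation can be implemented as the mixture $({\bf x},{\bf t})\,(1-\epsilon)\,({\bf 0},{\bf t})$.
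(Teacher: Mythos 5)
Your proof is correct and follows essentially the same strategy as the paper's: a small monotone perturbation of one of the two sequences turns the two boundary indifferences into strict preferences in one direction while, by continuity, preserving the strict preference at the interior point, yielding a forbidden $\prec,\succ,\prec$ double switch. The only cosmetic difference is that you lower one coordinate of ${\bf x}$ whereas the paper raises every coordinate of ${\bf y}$ by $\varepsilon$; your treatment of the all-zero edge case is a small added nicety.
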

\begin{proof}
	Suppose $({\bf x, t}), ({\bf y, s}) \in {\mathcal A}$ and $\sigma_1, \sigma_2$ are such that $\sigma_1<\sigma_2$ with
	\begin{align*}
		({\bf x}, {\bf t}+\sigma_1) &\sim ({\bf y}, {\bf s}+\sigma_1), \\
		({\bf x}, {\bf t}+\sigma_2) &\sim ({\bf y}, {\bf s}+\sigma_2). 
	\end{align*}
	We need to show that $({\bf x}, {\bf t}+\sigma) \sim ({\bf y}, {\bf s}+\sigma)$ for any $\sigma \in (\sigma_1, \sigma_2)$. The proof is by contradiction. Assume that we can have $\hat{\sigma}\in (\sigma_1, \sigma_2)$ such that $({\bf x}, {\bf t}+\hat{\sigma}) \succ ({\bf y}, {\bf s}+\hat{\sigma})$. Consider ${\bf y'}={\bf y}+\varepsilon$, where $\varepsilon>0$ is sufficiently small so that $({\bf x}, {\bf t}+\hat{\sigma}) \succ ({\bf y'}, {\bf s}+\hat{\sigma})$ (by continuity, such $\varepsilon$ can always be found). However,  $({\bf x}, {\bf t}+\sigma_1) \prec ({\bf y'}, {\bf s}+\sigma_1)$ and $({\bf x}, {\bf t}+\sigma_2) \prec ({\bf y'}, {\bf s}+\sigma_2)$, which implies a double switch. We obtained the desired contradiction. The case $({\bf x}, {\bf t}+\hat{\sigma})\prec ({\bf y}, {\bf s}+\hat{\sigma})$ is similar. Therefore, $({\bf x}, {\bf t}+\sigma) \sim ({\bf y}, {\bf s}+\sigma) \text{ for any }\sigma \in (\sigma_1, \sigma_2)$.
\end{proof}
Lemma \ref{Step1} implies the following corollary:
\begin{corollary}\label{closedinterval}
	The weak one-switch property implies that $\{ \sigma \ \lvert \ \Delta(\sigma)=0\}$ is a closed (possibly empty) interval.	
\end{corollary}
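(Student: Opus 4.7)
The plan is to combine Lemma \ref{Step1} (convexity of the zero set) with continuity of $\Delta$ (closedness of the zero set) to conclude that the zero set is a closed interval.

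First, I would show that $Z := \{\sigma \in \mathbb{R}_+ \mid \Delta(\sigma) = 0\}$ is convex, i.e., an interval. If $Z$ is empty or a singleton, there is nothing to check. Otherwise, pick any $\sigma_1, \sigma_2 \in Z$ with $\sigma_1 < \sigma_2$. Recall that
\[
\Delta(\sigma) = \sum_{i=1}^n D(t_i+\sigma) u(x_i) - \sum_{j=1}^m D(s_j+\sigma) u(y_j),
\]
so $\Delta(\sigma_k) = 0$ is equivalent to $({\bf x}, {\bf t}+\sigma_k) \sim ({\bf y}, {\bf s}+\sigma_k)$ for $k=1,2$. Lemma \ref{Step1} then yields $({\bf x}, {\bf t}+\sigma) \sim ({\bf y}, {\bf s}+\sigma)$, hence $\Delta(\sigma) = 0$, for every $\sigma \in (\sigma_1, \sigma_2)$. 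Thus every $\sigma \in [\sigma_1, \sigma_2]$ lies in $Z$, which proves $Z$ is convex and therefore an interval in $\mathbb{R}_+$.

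Next, I would show $Z$ is closed. Since $D$ is continuous (this is part of the standing DU framework, inherited from the underlying Fishburn--Rubinstein axiomatisation), each term $\sigma \mapsto D(t_i+\sigma)u(x_i)$ and $\sigma \mapsto D(s_j+\sigma)u(y_j)$ is continuous in $\sigma$, so $\Delta \colon \mathbb{R}_+ \to \mathbb{R}$ is continuous. The zero set $Z = \Delta^{-1}(\{0\})$ is therefore the preimage of a closed set under a continuous map, and hence closed in $\mathbb{R}_+$.

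Combining the two steps, $Z$ is a closed, convex subset of $\mathbb{R}_+$, hence a closed interval (possibly empty, possibly a singleton, possibly unbounded above, possibly all of $\mathbb{R}_+$). I do not anticipate a genuine obstacle here; the real content sits in Lemma \ref{Step1}, and the only thing to be explicit about is that continuity of $D$ (already implicit in the DU framework) upgrades ``interval'' to ``closed interval.''
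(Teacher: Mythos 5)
Your proposal is correct and follows exactly the route the paper intends: the paper gives no explicit proof, simply asserting that the corollary follows from Lemma \ref{Step1}, and your argument (convexity of the zero set via Lemma \ref{Step1}, closedness via continuity of $\Delta$) is precisely the fleshed-out version of that assertion. Your explicit remark that continuity of $D$ is needed for the closedness step is a reasonable reading of the framework, since the paper itself invokes continuity of the DU difference elsewhere (e.g.\ in the proofs of Lemma \ref{Step1} and Proposition \ref{alevels}).
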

In the next proposition the mixtures of sequences of dated outcomes will be used. It adapts \cite[Lemma 3]{bell1988one} to the time preference setting. Proposition \ref{alevels} is proved by contradiction. We first need to find two sequences of dated outcomes such that their DEU difference changes its sign as the delay increases. Using Corollary \ref{closedinterval} we then consider two cases depending on whether the just obtained DEU difference equals zero at a unique point or on the interval.
A double switch (contradiction to the weak one-switch property) can be obtained in each case by introducing suitable mixtures of sequences of dated outcomes. The illustrations to support the proof are given in the Appendix. 
\begin{proposition}[{cf., \cite[Lemma 3]{bell1988one}}]\label{alevels} 
	Let preference $\succcurlyeq$ on $\mathcal{A}$ have a DEU representation.
	If $\succcurlyeq$ exhibits the weak one-switch property, then for any $({\bf x, t}), ({\bf y, s}) \in {\mathcal A}$ if $({\bf x}, {\bf t}+\sigma_1) \sim ({\bf y}, {\bf s}+\sigma_1)$ and 
	$({\bf x}, {\bf t}+\sigma_2) \sim ({\bf y}, {\bf s}+\sigma_2)$ for some $\sigma_1\geq 0$ and some $\sigma_2>\sigma_1$, then $({\bf x}, {\bf t}+\sigma) \sim ({\bf y}, {\bf s}+\sigma)$ for any $\sigma\geq0$.
\end{proposition}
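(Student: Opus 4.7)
The plan is to proceed by contradiction. Let $\Delta(\sigma) = U({\bf x},{\bf t}+\sigma) - U({\bf y},{\bf s}+\sigma)$ and suppose $\Delta(\sigma^{*}) \neq 0$ for some $\sigma^{*} \geq 0$. By Lemma \ref{Step1} applied to the hypothesis $\Delta(\sigma_1)=\Delta(\sigma_2)=0$ and by Corollary \ref{closedinterval}, the zero-set $\{\sigma : \Delta(\sigma)=0\}$ is a closed interval $[a,b]$ containing $[\sigma_1,\sigma_2]$, so $a<b$; thus $\sigma^{*} \notin [a,b]$, and without loss of generality $\sigma^{*} > b$ with $\Delta > 0$ on $(b,\infty)$ (by continuity, since $\Delta$ has constant nonzero sign there). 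The goal is to construct, via mixtures of sequences of dated outcomes, a new pair in $\mathcal{A}$ whose DEU difference exhibits the forbidden strict triple $\succ,\prec,\succ$ (or $\prec,\succ,\prec$) at three increasing delays, contradicting the ``in other words'' reformulation of the weak one-switch property recalled just after the definition of that property.

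The first step is to exhibit a pair whose DEU difference already changes sign on $[0,\infty)$, since the original $\Delta$ need not do so. For example, it is consistent with weak one-switch (taking $\sigma^{*}=\infty$) to have $\Delta > 0$ on both $[0,a)$ and $(b,\infty)$ with $\Delta=0$ on $[a,b]$. I would therefore mix $({\bf x},{\bf t})$ with the ``neutral'' sequence that has the same time vector ${\bf t}$ but every outcome equal to the neutral element $0\in X$, at weight $1-\epsilon$; by mixture linearity of the DEU representation the resulting sequence $({\bf x}',{\bf t})$ satisfies $U({\bf x}',{\bf t}+\sigma) = (1-\epsilon)\,U({\bf x},{\bf t}+\sigma)$, and the new difference
\[
\Delta_0(\sigma) = \Delta(\sigma) - \epsilon\,U({\bf x},{\bf t}+\sigma)
\]
is strictly negative on $[a,b]$ (where $\Delta=0$ but $U({\bf x},{\bf t}+\sigma)>0$), while remaining strictly positive at $\sigma^{*}$ for all sufficiently small $\epsilon$. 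Hence $\Delta_0$ changes sign. Applying Corollary \ref{closedinterval} to this new pair yields the two cases announced in the preamble: the zero-set of $\Delta_0$ is either a single point or a proper closed interval.

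The second step is to upgrade the sign-changing pair $(A,B)$ (with difference $\Delta_0$) into one exhibiting a double switch. For a shift $\delta>0$ and weight $\lambda\in(0,1)$, I would consider mixtures of the form
\[
\tilde A = A\,\lambda\,B_\delta, \qquad \tilde B = B\,\lambda\,A_\delta,
\]
where $A_\delta,B_\delta$ denote the $\delta$-delayed copies of $A,B$. Mixture linearity of DEU yields $\tilde\Delta(\sigma) = \lambda\,\Delta_0(\sigma) - (1-\lambda)\,\Delta_0(\sigma+\delta)$, a genuine two-parameter family of available difference functions. By choosing $\delta$ so that $\Delta_0(\cdot+\delta)$ has a fixed sign over the range where $\Delta_0$ switches, and then tuning $\lambda$, one arranges that $\tilde\Delta$ takes values of signs $+,-,+$ (or the reverse) at three increasing delays $\tau_1<\tau_2<\tau_3$, producing the forbidden triple. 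The two sub-cases differ only in whether $\delta$ is tuned relative to a unique zero of $\Delta_0$ or to the endpoints of its zero-interval. The main obstacle is the rigorous verification of this sign alternation in each of the two cases: it requires the intermediate value theorem combined with a quantitative comparison of $|\Delta_0(\sigma)|$ and $|\Delta_0(\sigma+\delta)|$ at the chosen delays, and the accompanying illustrations in the Appendix are intended to make the parameter selection transparent.
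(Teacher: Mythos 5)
Your overall architecture (argue by contradiction, invoke Lemma \ref{Step1} and Corollary \ref{closedinterval}, split into the point-zero and interval-zero cases, and manufacture a forbidden strict triple by mixing with reflected/shifted copies) matches the paper's, but the construction you defer in your second step is precisely where the proof lives, and as stated it does not close. The family you propose, $\tilde\Delta(\sigma)=\lambda\,\Delta_0(\sigma)-(1-\lambda)\,\Delta_0(\sigma+\delta)$, need not double-switch for \emph{any} choice of $(\lambda,\delta)$. For instance, if $\Delta_0(\sigma)=g(\sigma)e^{-r\sigma}$ with $g$ affine --- exactly the shape a DEU difference takes under a linear-times-exponential discount function --- then $\tilde\Delta(\sigma)=e^{-r\sigma}\bigl[\lambda g(\sigma)-(1-\lambda)e^{-r\delta}g(\sigma+\delta)\bigr]$ has an affine bracket and hence at most one sign change, whatever $\lambda$ and $\delta$ are. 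The two requirements you need --- $(1-\lambda)\lvert\Delta_0(\sigma+\delta)\rvert>\lambda\lvert\Delta_0(\sigma)\rvert$ at a point below the zero set and $\lambda\,\Delta_0(\sigma)>(1-\lambda)\,\Delta_0(\sigma+\delta)$ at a point above it --- pull $\lambda$ in opposite directions, and no quantitative comparison of $\lvert\Delta_0\rvert$ rescues this in general. The ingredient you have discarded is the positive-length interval $[\sigma_1,\sigma_2]$ on which the \emph{original} difference $\Delta_1$ vanishes identically. The paper's proof mixes the perturbed, sign-changing difference $\Delta_3$ (your $\Delta_0$) back with $\Delta_1$ itself: because $\Delta_1\equiv 0$ on $[\sigma_1,\sigma_2]$, the sign of the mixture at $\sigma_1$ and $\sigma_2$ is pinned by the reflected-and-shifted copy alone, independently of $\lambda$, while $\Delta_1(\sigma')>0$ supplies the third sign for small $\lambda$. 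Your step one destroys this interval (making $\Delta_0$ strictly negative there) and your step two never reuses $\Delta_1$, so the exact-vanishing control that drives the contradiction is gone. In the interval-zero-set sub-case you also lack the paper's final mixture with a uniformly ranked pair ($-\Delta_2$ in the paper's notation), which is needed to upgrade the pattern $(\le 0,\,=0,\,\le 0)$ near the zero interval into a strict $(-,+,-)$ triple; without it you only ever exhibit weak reversals, which the weak one-switch property permits.

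A secondary, fixable issue: your first step needs $U({\bf x},{\bf t}+\sigma)>0$ on the indifference interval so that subtracting $\epsilon\,U({\bf x},{\bf t}+\sigma)$ pushes $\Delta_0$ strictly below zero there. Since Section \ref{section:w1s} takes $X$ to be an arbitrary mixture set with a neutral outcome, $u$ may take nonpositive values and this is not guaranteed. The paper avoids the issue by perturbing with a pair $(a,q),(b,p)$ chosen so that $U(a,q+\sigma)-U(b,p+\sigma)<0$ for \emph{all} $\sigma$; you should do the same rather than mixing one side with the neutral sequence.
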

\begin{proof}
	Suppose $0 \leq \sigma_1<\sigma_2$ and
	\begin{align*}
		({\bf x}, {\bf t}+\sigma_1) &\sim ({\bf y}, {\bf s}+\sigma_1), \\
		({\bf x}, {\bf t}+\sigma_2) &\sim ({\bf y}, {\bf s}+\sigma_2). 
	\end{align*}
	By Lemma \ref{Step1} it follows that 	
	\begin{equation}\label{c3}
		({\bf x}, {\bf t}+\sigma) \sim ({\bf y}, {\bf s}+\sigma) \text{ for any }\sigma \in (\sigma_1, \sigma_2).
	\end{equation}
	The weak one-switch property implies weak preference in one direction above $\sigma_2$ and in the other direction below $\sigma_1$. We assume that $({\bf x}, {\bf t}+\sigma)$ is weakly preferred to $({\bf y}, {\bf s}+\sigma)$ above  $\sigma_2$. The other case can be treated similarly.
	
	Suppose that \eqref{c3} is satisfied but 
	\begin{equation}\label{an}
		({\bf x}, {\bf t}+\sigma') \succ ({\bf y}, {\bf s}+\sigma') \text{ for some }\sigma'> \sigma_2. 
	\end{equation}
	We will prove that a contradiction follows.
	
	Let $\Delta_1(\sigma)=U({\bf x}, {\bf t}+\sigma)-U({\bf y}, {\bf s}+\sigma)$. Therefore,
	\[
	\Delta_1(\sigma)=0, \text{ if } \sigma \in [\sigma_1, \sigma_2], \text{ and } \Delta_1(\sigma')>0.
	\]
	Given the DEU representation we can always find $a, b\in X$ and $q>p$ such that 
	\[
	\Delta_2(\sigma)=U(a, q+\sigma)-U(b, p+\sigma)<0 \text{ for all } \sigma.
	\] 
	Let $({\bf z_1, t_1})=(a, q) \lambda ({\bf x}, {\bf t})$, where $\lambda \in (0, 1)$. Analogously, define $({\bf z_2, t_2}) = (b, p)\lambda({\bf y}, {\bf s})$. Consider the function
	\begin{align*}
		\Delta_3(\sigma)& \ =\ U({\bf z_1, t_1}+\sigma)-U({\bf z_2, t_2}+\sigma)\\
		& \ =\ \lambda \left(U(a, q+\sigma)-U(b, p+\sigma)\right)+(1-\lambda) \left(U({\bf x}, {\bf t}+\sigma)-U({\bf y}, {\bf s}+\sigma)\right)\\
		& \ =\ \lambda \Delta_2(\sigma)+(1-\lambda)\Delta_1(\sigma).
	\end{align*}
	Then for $\lambda$ sufficiently small we have (see Figure \ref{Delta3} in the Appendix%
	\footnote{Note that for all figures in Appendix only sign is relevant in the vertical dimension.}%
	):
	\[
	\Delta_3(\sigma_1)<0, \Delta_3(\sigma_2)<0, \text{ and } \Delta_3(\sigma')>0.
	\]		
	Therefore, by continuity, there must exist at least one $\sigma^* \in (\sigma_2, \sigma')$ such that $\Delta_3(\sigma^*)=0$. By Corollary \ref{closedinterval} there are two possible cases: either $\sigma^*$ is unique, or $\Delta_3(\sigma)=0$ for $\sigma \in [\sigma^*_1, \sigma^*_2] \subset (\sigma_2, \sigma')$ (see Figure \ref{Two_options} in the Appendix). 
	
	{\em Case 1}. Assume first that $\sigma^*$ is unique; i.e., (using the one-switch property) $\Delta_3(\sigma)<0$ if $\sigma<\sigma^*$ and $\Delta_3(\sigma)>0$ if $\sigma>\sigma^*$. 
	Consider the reflection of $\Delta_3(\sigma)$ about the $\sigma$-axis; i.e., $\hat{\Delta}_3(\sigma)=-\Delta_3(\sigma)$. Then $\hat{\Delta}_3(\sigma)>0$ if $\sigma<\sigma^*$ and $\hat{\Delta}_3(\sigma)<0$ if $\sigma>\sigma^*$. Next, choose $\hat{\sigma}\in (\sigma_1, \sigma_2)$ and shift the function $\hat{\Delta}_3(\sigma)$ to the left as follows
	\begin{align*}
		\tilde{\Delta}_3(\sigma)& \ =\ \hat{\Delta_3}\left(\sigma+(\sigma^*-\hat{\sigma})\right) \ =\ -\Delta_3\left(\sigma+(\sigma^*-\hat{\sigma})\right) \\
		& \ =\ U({\bf z_2, t_2}+\sigma^*-\hat{\sigma}+\sigma)-U({\bf z_1, t_1}+\sigma^*-\hat{\sigma}+\sigma),
	\end{align*}
	so that for the shifted function (see Figure \ref{Delta3tilde} in the Appendix):
	\[
	\tilde{\Delta}_3(\sigma_1)>0, \tilde{\Delta}_3(\hat{\sigma})=0, \tilde{\Delta}_3(\sigma_2)<0, \text{ and }\tilde{\Delta}_3(\sigma')<0.
	\]
	Define the mixtures $({\bf w_1, l_1}) = ({\bf z_2}, {\bf t_2}+\sigma^*-\hat{\sigma})\lambda({\bf x}, {\bf t})$ and $({\bf w_2, l_2}) = ({\bf z_1}, {\bf t_1}+\sigma^*-\hat{\sigma})\lambda({\bf y}, {\bf s})$.
	Analyse the function
	\begin{align*}
		\Delta_4(\sigma) \ =\ &U({\bf w_1, l_1}+\sigma)-U({\bf w_2, l_2}+\sigma)\\
		\ =\ &\lambda\left(U({\bf z_2}, {\bf t_2}+\sigma^*-\hat{\sigma}+\sigma)-U ({\bf z_1}, {\bf t_1}+\sigma^*-\hat{\sigma}+\sigma)\right)\\
		&+(1-\lambda)\left(U({\bf x}, {\bf t}+\sigma)-U({\bf y}, {\bf s}+\sigma) \right)\\
		\ =\ &\lambda \tilde{\Delta}_3(\sigma)+(1-\lambda) \Delta_1(\sigma).
	\end{align*}
	Choosing $\lambda$ sufficiently small we obtain (see Figure \ref{Delta4} in the Appendix):
	\[
	\Delta_4(\sigma_1)>0, \Delta_4(\sigma_2)<0, \text{ and } \Delta_4(\sigma')>0.
	\]
	Therefore, we have arrived at a contradiction to the one-switch property.
	
	{\em Case 2}. We next assume that there exist $\sigma^*_1<\sigma^*_2$ such that $\Delta_3(\sigma)=0$ if and only if $\sigma \in [\sigma^*_1, \sigma^*_2] \subset (\sigma_2, \sigma')$. Also, by the one-switch property, we must have $\Delta_3(\sigma)<0$ if $\sigma<\sigma_1^*$, and  $\Delta_3(\sigma)>0$ if $\sigma>\sigma_2^*$. We consider the reflection of $\Delta_3(\sigma)$ about the $\sigma$-axis and shift it to the left by a small amount $\epsilon>0$ to obtain $\bar{\Delta_{3}}(\sigma)$ as follows
	\[
	\bar{\Delta_3}(\sigma) \ =\ -\Delta_3(\sigma+\varepsilon) \ =\  U({\bf z_2, t_2}+\varepsilon+\sigma)-U({\bf z_1, t_1}+\varepsilon+\sigma).
	\]
	Then (see Figure \ref{Delta3bar} in the Appendix):
	\[
	\bar{\Delta_3}(\sigma)>0 \text{ if } \sigma<\sigma_1^*-\varepsilon, \bar{\Delta_3}(\sigma)=0 \text{ if } \sigma \in [\sigma_1^*-\varepsilon, \sigma_2^*-\varepsilon], \text { and } \bar{\Delta_3}(\sigma)<0 \text{ if } \sigma>\sigma_2^*-\varepsilon.
	\]
	Define the mixtures $({\bf v_1, k_1}) = ({\bf z_2}, {\bf t_2}+\varepsilon)\lambda({\bf z_1}, {\bf t_1})$ and $({\bf v_2, k_2}) = ({\bf z_1}, {\bf t_1}+\varepsilon)\lambda({\bf z_2}, {\bf t_2})$.
	Analyse the function
	\begin{align*}
		\bar{\Delta_4}(\sigma) \ =\  &U({\bf v_1, k_1}+\sigma)-U({\bf v_2, k_2}+\sigma)\\
		\ =\  &\lambda\left(U({\bf z_2}, {\bf t_2}+\varepsilon+\sigma)-U ({\bf z_1}, {\bf t_1}+\varepsilon+\sigma)\right)+\\
		&(1-\lambda)\left(U({\bf z_1}, {\bf t_1}+\sigma)-U({\bf z_2}, {\bf t_2}+\sigma) \right)\\
		\ =\  &\lambda \bar{\Delta}_3(\sigma)+(1-\lambda) \Delta_3(\sigma).
	\end{align*}
	Choosing $\lambda$ sufficiently small we obtain (see Figure \ref{Delta4bar} in the Appendix):
	\[
	\bar{\Delta_4}(\sigma_1)<0, \bar{\Delta_4}(\sigma_1^*)=0, \text{ and } \bar{\Delta_4}(\sigma_2^*)<0.
	\]
	Finally, mixing $(b, p)\lambda({\bf v_1}, {\bf k_1})$ and $(a, q)\lambda({\bf v_2}, {\bf k_2})$ we have 
	\begin{align*}
		\Delta_5(\sigma)& \ =\ \lambda \left(U(b, p+\sigma) -U(a, q+\sigma) \right)+(1-\lambda) \left(U({\bf v_1}, {\bf k_1}+\sigma) -U({\bf v_2}, {\bf k_2}+\sigma) \right)\\
		& \ =\ -\lambda \Delta_2(\sigma)+(1-\lambda)\bar{\Delta_4}(\sigma).
	\end{align*}
	Recall that $\Delta_2(\sigma)<0$ for all $\sigma$. Letting $\lambda$ be sufficiently small we obtain a double switch (see Figure \ref{Delta5} in the Appendix):
	\[
	\Delta_5(\sigma_1)<0, \Delta_5(\sigma_1^*)>0, \text{ and } \Delta_5(\sigma_2^*)<0,	
	\]
	which is a contradiction.
\end{proof}
The implication of Proposition \ref{alevels} is as follows:
\begin{corollary}
	If preferences $\succcurlyeq$ on $\mathcal{A}$ have a DEU representation, then the one-switch property is equivalent to the weak one-switch property.
\end{corollary}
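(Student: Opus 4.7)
The plan is to dispatch the easy direction first---strict preference implies weak preference, so the strong one-switch property trivially entails the weak one-switch property---and then concentrate on the converse.

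For the converse, I would assume the weak one-switch property, fix arbitrary $({\bf x, t}), ({\bf y, s}) \in \mathcal{A}$, set
\[
\Delta(\sigma) \ =\ U({\bf x}, {\bf t}+\sigma) - U({\bf y}, {\bf s}+\sigma),
\]
and let $Z = \{\sigma \geq 0 : \Delta(\sigma) = 0\}$. Corollary \ref{closedinterval} already tells me $Z$ is a closed interval, and Proposition \ref{alevels} upgrades this: if $Z$ contains two distinct points, then $Z = [0, \infty)$. Hence $Z$ must be one of three types: empty, a singleton, or all of $[0,\infty)$.

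The main step is a three-case verification. If $Z = [0, \infty)$, the ranking is uniform indifference, which falls in the ``independent of $\sigma$'' branch of the strong one-switch definition. If $Z$ is empty, continuity of $\Delta$ (which follows from continuity of $u$ and $D$ together with the finite-sum form of the DEU representation) forces $\Delta$ to have a constant nonzero sign, giving uniform strict preference, again under ``independent of $\sigma$''. If $Z = \{\sigma^*\}$, continuity of $\Delta$ forces it to have constant strict sign on each of $[0,\sigma^*)$ and $(\sigma^*,\infty)$. I would then invoke the weak one-switch hypothesis to show that these two signs must be \emph{opposite}: if they coincided, the ranking would neither be independent of $\sigma$ (since indifference occurs only at the interior point $\sigma^*$, breaking uniformity) nor match either configuration in the weak-one-switch definition---a quick case check on the hypothetical switch point $\sigma'$ relative to $\sigma^*$ produces a contradiction either way. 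Opposite strict signs on the two sides of $\sigma^*$ is exactly the strong one-switch pattern, with switching delay $\sigma^*$.

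I anticipate no genuine obstacle, since the heavy lifting has been done in Lemma \ref{Step1} and Proposition \ref{alevels}; what remains is an organisational case split plus a small appeal to continuity of $\Delta$. The only spot requiring more than a one-line check is the singleton case, where one must exclude matching signs on the two sides of $\sigma^*$, and this is dispatched almost immediately by the form of the weak one-switch definition.
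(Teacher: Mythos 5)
Your reduction to the three possible forms of $Z=\{\sigma\geq 0 : \Delta(\sigma)=0\}$ (empty, singleton, or all of $[0,\infty)$, via Corollary \ref{closedinterval} and Proposition \ref{alevels}) is the right skeleton, and the trivial direction and the empty and full cases are fine. The gap is in the singleton case. Suppose $\Delta>0$ on $[0,\sigma^*)$ and on $(\sigma^*,\infty)$ with $\Delta(\sigma^*)=0$ and $\sigma^*>0$. Contrary to your claim, this configuration \emph{does} match one of the branches of the weak one-switch definition for that pair: take the switch point to be $0$, so that the clause ``$\preccurlyeq$ for $\sigma<0$'' is vacuous and ``$\succcurlyeq$ for $\sigma>0$'' holds because $\Delta\geq 0$ everywhere. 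Equivalently, in the paper's reformulation the weak property only forbids a strict $\succ/\prec/\succ$ pattern, and $\succ/\sim/\succ$ is not such a pattern. So your ``quick case check on the hypothetical switch point'' does not yield a contradiction, and the matching-signs tangency cannot be excluded by inspecting the given pair of sequences alone.

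To close this case you need the mixture machinery that drives Proposition \ref{alevels}, which is exactly where the DEU hypothesis (rather than mere DU) is doing work. Pick $(a,q),(b,p)$ with $\Delta_2(\sigma)=U(a,q+\sigma)-U(b,p+\sigma)<0$ for all $\sigma$, and consider the mixtures $(a,q)\lambda({\bf x},{\bf t})$ versus $(b,p)\lambda({\bf y},{\bf s})$; by mixture linearity of $U$ their utility difference is $\lambda\Delta_2+(1-\lambda)\Delta$. For $\lambda$ small this is still strictly positive at points $\sigma_1<\sigma^*<\sigma_2$ where $\Delta>0$, but strictly negative at $\sigma^*$, producing a strict double switch for a (different) pair of alternatives and hence contradicting the weak one-switch property. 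The case where $\Delta<0$ on both sides is symmetric, mixing with the pair in the opposite order. With that perturbation step inserted, your case analysis is complete; note that the paper states this corollary without proof as a direct implication of Proposition \ref{alevels}, and the tangency case above is precisely the piece of content that a full proof must supply beyond that proposition.
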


\subsection{The weak one-switch property for dated outcomes and impatience}

In this section we consider preferences $\succcurlyeq$ on ${\mathcal A_1}$ and return to the initial assumption that $X=[0, \infty)$.
\begin{definition}
	We say that $\succcurlyeq$ exhibits the {\em weak one-switch property for dated outcomes} if $\succcurlyeq$ exhibits the weak one-switch property on $\mathcal A_1$.
\end{definition}
The following proposition gives a partial characterisation of the weak one-switch property for dated outcomes.
\begin{proposition}\label{DIimpliesSwitch}
	Let $\succcurlyeq$ restricted to ${\mathcal A_1}$ has a DU representation. Then preferences $\succcurlyeq$ exhibit the weak one-switch property for dated outcomes if $\succcurlyeq$ exhibit DI or II.
\end{proposition}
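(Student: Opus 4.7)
The plan is to fix an arbitrary pair $(x,t),(y,s) \in \mathcal{A}_1$ and analyse the sign of
\[
\Delta(\sigma) \;=\; D(t+\sigma)u(x) - D(s+\sigma)u(y)
\]
directly. First I would dispose of the degenerate cases: if $x=0$ or $y=0$, then one of $u(x),u(y)$ vanishes and $\Delta$ has constant sign; if $t=s$ then $\Delta(\sigma)=D(t+\sigma)(u(x)-u(y))$ also has constant sign because $D>0$. After possibly interchanging the two pairs (which simply flips the direction of the weak preference and does not affect whether the weak one-switch property holds), it therefore suffices to treat the case $u(x),u(y)>0$ and $t<s$.

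In that case I would introduce the ratio $g(\sigma)=D(s+\sigma)/D(t+\sigma)$ and factor
\[
\Delta(\sigma) \;=\; D(t+\sigma)\,u(y)\!\left(\frac{u(x)}{u(y)}-g(\sigma)\right),
\]
so that $\operatorname{sign}\Delta(\sigma)$ coincides with the sign of $u(x)/u(y)-g(\sigma)$. The crucial step is to show that DI (respectively II) forces $g$ to be non-decreasing (respectively non-increasing) on $[0,\infty)$. Indeed, for any $0\le\sigma_1<\sigma_2$, applying Proposition \ref{ratio} to the times $t'=t+\sigma_1$, $s'=s+\sigma_1$ (note $t'<s'$) and the common delay $\tau=\sigma_2-\sigma_1>0$ gives, under DI,
\[
\frac{D(t+\sigma_1)}{D(t+\sigma_2)} \;\ge\; \frac{D(s+\sigma_1)}{D(s+\sigma_2)},
\]
which rearranges to $g(\sigma_1)\le g(\sigma_2)$; under II the inequality is reversed and one obtains $g(\sigma_1)\ge g(\sigma_2)$.

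Because $g$ is monotonic, $u(x)/u(y)-g(\sigma)$ is monotonic of the opposite type and hence changes sign at most once (possibly while being zero on a sub-interval). In the DI case this yields some $\sigma^{*}\ge 0$ with $\Delta(\sigma)\ge 0$ for $\sigma<\sigma^{*}$ and $\Delta(\sigma)\le 0$ for $\sigma>\sigma^{*}$, i.e.\ $(x,t+\sigma)\succcurlyeq(y,s+\sigma)$ weakly switches to $(x,t+\sigma)\preccurlyeq(y,s+\sigma)$; under II the direction of the switch is the opposite. Either outcome is exactly the weak one-switch condition for the pair $(x,t),(y,s)$, so the weak one-switch property for dated outcomes follows.

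I do not foresee a serious obstacle here, since the whole argument reduces to the monotonicity of $\sigma\mapsto D(s+\sigma)/D(t+\sigma)$, which is essentially a re-packaging of Proposition \ref{ratio}. The only minor care required is the case bookkeeping (swap the pairs when $t>s$; constant-sign argument when $t=s$ or when one outcome is $0$) and noting that the weak version of the one-switch property already tolerates $\Delta$ vanishing on a sub-interval, so that only weak monotonicity of $g$ is needed.
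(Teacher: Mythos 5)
Your proof is correct, but it takes a genuinely different route from the one in the paper. The paper argues by contradiction at the level of preferences: it supposes a double switch $(x,t)\succ(y,s)$, $(x,t+\sigma)\prec(y,s+\sigma)$, $(x,t+\varepsilon)\succ(y,s+\varepsilon)$, uses continuity of $u$ to manufacture an indifferent pair (replacing $s$ by some $s'\in(t,s)$ in the II case, or by some $s''>s$ in the DI case), and then applies the definition of II/DI together with impatience and transitivity to contradict one of the three rankings. You instead work directly at the level of the representation: you reduce everything to the monotonicity of $g(\sigma)=D(s+\sigma)/D(t+\sigma)$, which is an immediate rearrangement of the ratio characterisation in Proposition \ref{ratio}, and then read off that $\operatorname{sign}\Delta(\sigma)$ is weakly monotone, which is exactly the weak one-switch condition (correctly allowing the zero set of $\Delta$ to be an interval). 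Your route is shorter and more transparent about \emph{where} the switch occurs and in which direction (under DI the earlier-dated outcome is weakly preferred for small common delays and weakly dispreferred for large ones; under II the reverse); the cost is that it leans on Proposition \ref{ratio}, whose derivation from the preference-based definition of DI/II itself requires the richness of $u(X)$, whereas the paper's argument invokes continuity only locally and stays closer to the primitive definitions. Your handling of the degenerate cases ($u(x)=0$ or $u(y)=0$, and $t=s$) and the observation that only weak monotonicity of $g$ is needed are both accurate, so I see no gap.
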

\begin{proof} We show that if $\succcurlyeq$ exhibits II or DI it must also exhibit the weak one-switch property for dated outcomes.
	The proof is by contradiction.
	Suppose that for some $(x,t), (y, s) \in  {\mathcal A_1}$ and some $\sigma, \varepsilon$ such that $0<\sigma< \varepsilon$ we can have: 
	\begin{align}
		\label{first} (x, t) &\succ (y, s), \\
		\label{second} (x, t+\sigma) &\prec (y, s+\sigma),\\
		\label{third} (x, t+\varepsilon) &\succ (y, s+\varepsilon).
	\end{align}
	W.l.o.g. assume that $t<s$. Then we also must have that $x<y$, otherwise \eqref{second} contradicts impatience and monotonicity.
	By continuity we can find $s'\in (t, s)$ such that $(x, t) \sim (y, s')$. Therefore, by II we have $(x, t+\sigma) \succcurlyeq (y, s'+\sigma)$. By impatience,  $(y, s'+\sigma) \succ (y, s+\sigma)$, hence, by transitivity,
	$(x, t+\sigma) \succ (y, s+\sigma)$, which is a contradiction to \eqref{second}. Therefore, we have shown that if $\succcurlyeq$ does not satisfy the one-switch property for dated outcomes it also does not exhibit II.
	
	The proof for DI is analogous. Indeed, consider $(x, t+\sigma) \prec (y, s+\sigma)$, with $t<s$, $x<y$, as before. By continuity we can find $s''>s$ such that $(x, t+\sigma) \sim (y, s''+\sigma)$. 
	Since $\varepsilon>\sigma$, let $\varepsilon=\sigma+\gamma$ for some $\gamma>0$. From the equivalence $(x, t+\sigma) \sim (y, s''+\sigma)$ it follows by DI that $(x, t+\sigma+\gamma) \preccurlyeq (y, s''+\sigma+\gamma)$, or, equivalently, $(x, t+\varepsilon) \preccurlyeq (y, s''+\varepsilon)$. Since $s''>s$, impatience implies that $(y, s''+\varepsilon)\prec (y, s+\varepsilon)$. Therefore, we must have $(x, t+\varepsilon) \prec (y, s+\varepsilon)$, which brings us to a contradiction to \eqref{third}.
	%	\ \cor{{\em Sufficiency.} If $\succcurlyeq$ on ${\mathcal A_1}$ exhibit the weak one-switch property for dated outcomes, then $\succcurlyeq$ exhibits either DI or II.
	%	We show that if preferences do not exhibit DI or II then they do not satisfy the one-switch property for dated outcomes.
	%	Suppose that the preference order $\succcurlyeq$ does not satisfy DI or II. That is, for some $0<x<y$ and $t<s$ if $(x, t) \sim (y, s)$, then there exist $\sigma>0$ and $\gamma>0$ such that $(x, t+\sigma) \succ (y, s+\sigma)$ and $(x, t+\gamma) \prec (y, s+\gamma)$.
	%	Let $\sigma<\gamma$. 
	%	By continuity we can find $y'=y+\varepsilon$ with small $\varepsilon>0$ such that $(x, t+\sigma) \succ (y', s+\sigma)$. It follows by monotonicity that $(y, s) \prec (y', s)$ and $(y, s+\gamma) \prec (y', s+\gamma)$.
	%	Therefore,
	%	\begin{align*}
	%	(x, t) &\prec (y', s), \\
	%	(x, t+\sigma) &\succ (y', s+\sigma),\\
	%	(x, t+\gamma) &\prec (y', s+\gamma).
	%	\end{align*}
	%	We have obtained a double switch, which is a contradiction.
	%	
	%	Suppose next that $\gamma<\sigma$. By continuity we can find $x'=x+\varepsilon$ with small $\varepsilon>0$ such that $(x', t+\gamma) \prec (y, s+\gamma)$. Monotonicity implies that $(x',t) \succ (x, t)$ and $(x',t+\sigma) \succ (x, t+\sigma)$. Hence, we have a double switch:
	%	\begin{align*}
	%	(x', t) &\succ (y, s), \\
	%	(x', t+\gamma) &\prec (y, s+\gamma),\\
	%	(x', t+\sigma) &\succ (y, s+\sigma),
	%	\end{align*}
	%	which is a contradiction.}
\end{proof}

We have not been able to establish the converse to Proposition \ref{DIimpliesSwitch}. The arguments used in Lemma \ref{1sismon} do not adapt straightforwardly to the present situation.
However, Proposition \ref{DIimpliesSwitch}, together with previous characterisation of the one-switch property for dated outcomes (Lemma \ref{1sismon}), already imply that the one-switch property for dated outcomes and the weak one-switch property for dated outcomes are not equivalent for intertemporal preferences with a DU representation.

\section{Discussion}\label{section:discOSP}

This paper fills a gap in Bell's original characterisation \cite[Proposition 8]{bell1988one} of discount functions compatible with the one-switch property for sequences of dated outcomes. We showed that functions of the linear times exponential form also have this property and that such discount functions exhibit strictly increasing impatience. Although decreasing impatience is commonly found in experiments \cite{frederick2002time}, there is also much evidence for increasing impatience (see, for example, \cite{attema2010time}, \cite{takeuchi2011non}). To the best of our knowledge, the linear times exponential function has not been used in the literature on time preferences before. Therefore, we introduce a new type of a discount function that accommodates strictly increasing impatience and the one-switch property. 

With regard to sums of exponentials, there has recently been some interest in this type of discount function. In their experiments, McClure et al. \cite{mcclure2007time} used magnetic neuro-images of individuals' brains to study intertemporal choice. They found that making a decision is a result of the interaction of two separate neural systems with different levels of impatience. To describe the involvement of these two brain areas in discounting, they suggested sum of exponential discount functions which they refer to as {\em double exponential} discounting. The recent empirical study by Cavagnaro et al. \cite{cavagnaro2016functional} demonstrates that double exponential discounting provides a better fit to actual time preferences than exponential, quasi-hyperbolic, proportional hyperbolic and generalized hyperbolic discounting.

A second contribution of this paper is to clarify the definition of the original one-switch rule introduced by Bell \cite{bell1988one}, by considering two definitions: the weak one-switch property and the (strong) one-switch property. We demonstrate that if $X$ is a mixture set and if preferences have a DEU representation, then the one-switch property is equivalent to the weak one-switch property.

Third, we prove that if preferences have a DU representation, then preferences exhibit the one-switch property for dated outcomes if and only if they exhibit either strictly increasing impatience, or strictly decreasing impatience, or constant impatience. A partial analogue is obtained for the weak one-switch property for dated outcomes. That is, the preferences exhibit the weak one-switch property for dated outcomes if they exhibit increasing impatience or decreasing impatience.
%\cor{...}
%The analogous characterization is obtained for the weak one-switch property for dated outcomes. That is, the preferences exhibit the weak one-switch property for dated outcomes if and only if they exhibit either increasing impatience, or decreasing impatience, or constant impatience. 

\section*{Acknowledgments} 
I thank my supervisor Matthew Ryan for valuable suggestions and detailed comments. I am grateful to participants of 6th Microeconomic Theory Workshop at Victoria University of Wellington for helpful comments. Financial support from the University of Auckland is gratefully acknowledged.

\appendix
\counterwithin{figure}{section}
\section{Appendix}
\subsection{Illustrations for the proof of Proposition \ref{alevels}}

In this appendix we provide the illustrations to accompany the proof of Proposition \ref{alevels}. Note that for all the figures in the appendix only the {\em sign} (but not the value) of a vertical coordinate of a point is relevant.

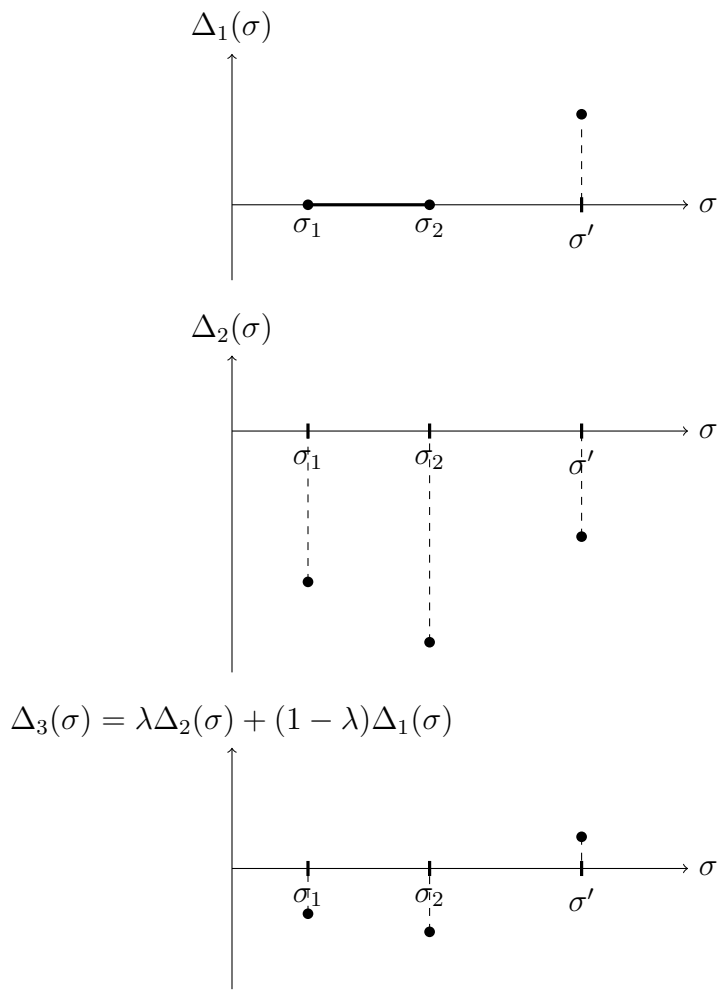
\begin{figure}[htb!] 
	\centering
	\begin{tikzpicture}[
	scale=2,
	every node/.style={color=black},
	dot/.style={circle,fill=black,minimum size=4pt,inner sep=0pt,
		outer sep=-1pt},
	]
	%		Delta_1 diagram
	% axis
	\draw[->] (0, 0)--(3,0) node(xline)[right] {$\sigma$}; 
	\draw[->] (0,-0.5)--(0,1) node(yline)[above] {$\Delta_1(\sigma)$};
	%dots
	\node[dot,label=below:$\sigma_1$] at (0.5,0) {};
	\node[dot,label=below:$\sigma_2$] at (1.3,0) {};
	\node[dot] at (2.3,0.6) {};
	%dashes on xaxis
	\draw [very thick] (2.3,-.05) node[below]{$\sigma'$} -- (2.3,0.05);
	%projections
	\draw[dashed] (2.3,0.5)--(2.3,0);
	%line
	\draw [very thick] (0.5,0)--(1.3,0);
	%Delta_2 diagram
	\begin{scope}[yshift=-1.5cm]
	% axis
	\draw[->] (0, 0)--(3,0) node(xline)[right] {$\sigma$}; 
	\draw[->] (0,-1.6)--(0,0.5) node(yline)[above] {$\Delta_2(\sigma)$};
	%dots	
	\node[dot] at (0.5,-1) {};
	\node[dot] at (1.3,-1.4) {};
	\node[dot] at (2.3,-0.7) {};
	%dashes on xaxis	
	\draw [very thick] (0.5,-.05) node[below]{$\sigma_1$} -- (0.5,0.05);
	\draw [very thick] (1.3,-.05) node[below]{$\sigma_2$} -- (1.3,0.05);
	\draw [very thick] (2.3,-.05) node[below]{$\sigma'$} -- (2.3,0.05);
	%projections		
	\draw[dashed] (0.5,-1)--(0.5,0);
	\draw[dashed] (1.3,-1.4)--(1.3,0);
	\draw[dashed] (2.3,-0.7)--(2.3,0);
	\end{scope}
	%Delta_3 diagram (mixture)
	\begin{scope}[yshift=-4.4cm]
	% axis
	\draw[->] (0, 0)--(3,0) node(xline)[right] {$\sigma$}; 
	\draw[->] (0,-0.8)--(0,0.8) node(yline)[above] {$\Delta_3(\sigma)=\lambda \Delta_2(\sigma)+(1-\lambda)\Delta_1(\sigma)$};
	%dots	
	\node[dot] at (0.5,-0.3)  {};
	\node[dot] at (1.3,-0.42)  {};
	\node[dot] at (2.3,0.21)  {};
	%dashes on xaxis		
	\draw [very thick] (0.5,-.05) node[below]{$\sigma_1$} -- (0.5,0.05);
	\draw [very thick] (1.3,-.05) node[below]{$\sigma_2$} -- (1.3,0.05);
	\draw [very thick] (2.3,-.05) node[below]{$\sigma'$} -- (2.3,0.05);
	%projections	
	\draw[dashed] (0.5,-0.3)--(0.5,0);
	\draw[dashed] (1.3,-0.42)--(1.3,0);
	\draw[dashed] (2.3,0.21)--(2.3,0);
	\end{scope}
	\end{tikzpicture}
	\caption{The mixture of $\Delta_2(\sigma)$ and $\Delta_1(\sigma)$ (with small $\lambda$)}
	\label{Delta3}
\end{figure}	
%\newpage
%Delta_3 diagram - one dot
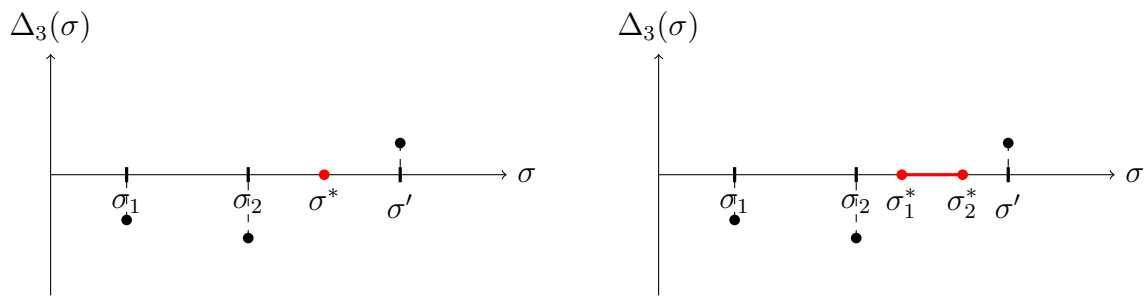
\begin{figure}[htb!] 
	\begin{tikzpicture}[
	scale=2,
	every node/.style={color=black},
	dot/.style={circle,fill=black,minimum size=4pt,inner sep=0pt,
		outer sep=-1pt},
	]
	% axis
	\draw[->] (0, 0)--(3,0) node(xline)[right] {$\sigma$}; 
	\draw[->] (0,-0.8)--(0,0.8) node(yline)[above] {$\Delta_3(\sigma)$};
	%dots
	\node[dot] at (0.5,-0.3)  {};
	\node[dot] at (1.3,-0.42)  {};
	\node[dot] at (2.3,0.21)  {};
	\node[dot,red, label=below:$\sigma^*$] at (1.8,0)  {};
	%dashes on xaxis
	\draw [very thick] (0.5,-.05) node[below]{$\sigma_1$} -- (0.5,0.05);
	\draw [very thick] (1.3,-.05) node[below]{$\sigma_2$} -- (1.3,0.05);
	\draw [very thick] (2.3,-.05) node[below]{$\sigma'$} -- (2.3,0.05);
	%projections
	\draw[dashed] (0.5,-0.3)--(0.5,0);
	\draw[dashed] (1.3,-0.42)--(1.3,0);
	\draw[dashed] (2.3,0.21)--(2.3,0);
	
	%Delta_3 diagram - interval
	\begin{scope}[xshift=4cm]
	% axis
	\draw[->] (0, 0)--(3,0) node(xline)[right] {$\sigma$}; 
	\draw[->] (0,-0.8)--(0,0.8) node(yline)[above] {$\Delta_3(\sigma)$};
	%dots
	\node[dot] at (0.5,-0.3)  {};
	\node[dot] at (1.3,-0.42)  {};
	\node[dot,red, label=below:$\sigma_1^*$] at (1.6,0)  {};
	\node[dot,red, label=below:$\sigma_2^*$] at (2.0,0)  {};
	\node[dot] at (2.3,0.21)  {};
	%dashes on xaxis
	\draw [very thick] (0.5,-.05) node[below]{$\sigma_1$} -- (0.5,0.05);
	\draw [very thick] (1.3,-.05) node[below]{$\sigma_2$} -- (1.3,0.05);
	\draw [very thick] (2.3,-.05) node[below]{$\sigma'$} -- (2.3,0.05);
	%projections	
	\draw[dashed] (0.5,-0.3)--(0.5,0);
	\draw[dashed] (1.3,-0.42)--(1.3,0);
	\draw[dashed] (2.3,0.21)--(2.3,0);
	%line
	\draw [very thick, red] (1.6,0)--(2.0,0);
	\end{scope}
	\end{tikzpicture}	
	\caption{Function $\Delta_3(\sigma)$ equals zero at one point or at the interval}
	\label{Two_options}
\end{figure}

\clearpage
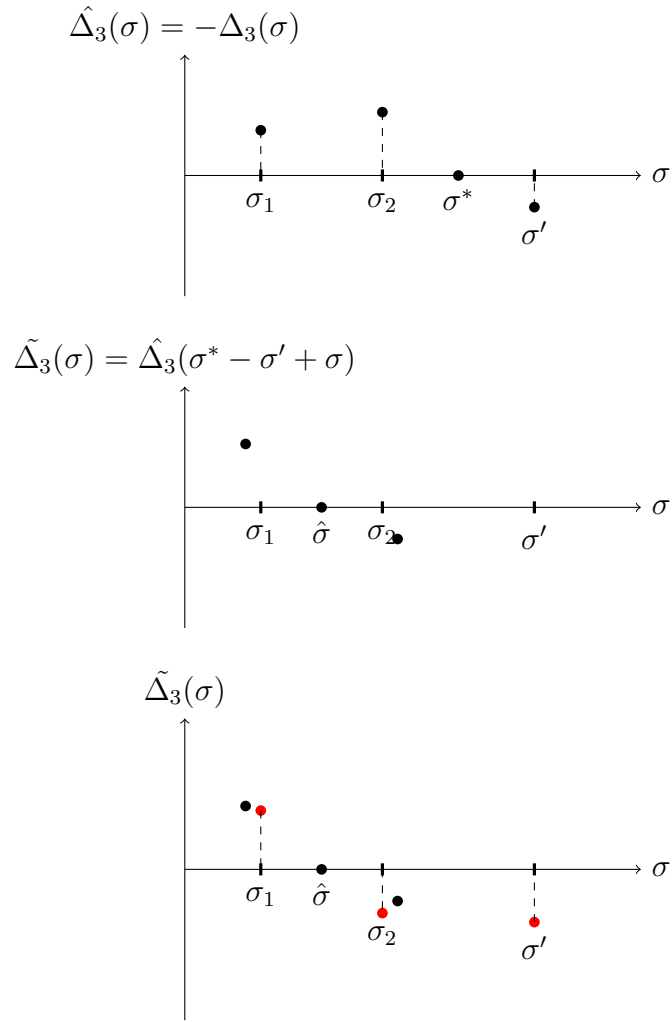
\begin{figure}[htb!] 
	\centering
	\begin{tikzpicture}[
	scale=2,
	every node/.style={color=black},
	dot/.style={circle,fill=black,minimum size=4pt,inner sep=0pt,
		outer sep=-1pt},
	]
	
	% axis
	\draw[->] (0, 0)--(3,0) node(xline)[right] {$\sigma$}; 
	\draw[->] (0,-0.8)--(0,0.8) node(yline)[above] {$\hat{\Delta_3}(\sigma)=-\Delta_3(\sigma)$};
	%dots
	\node[dot] at (0.5,0.3) (sigma1) {};
	\node[dot] at (1.3,0.42)  {};
	\node[dot,label=below:$\sigma^*$] at (1.8,0) (sigstar) {};
	\node[dot, label=below:$\sigma'$] at (2.3,-0.21)  {};
	%dashes on xaxis
	\draw [very thick] (0.5,-.04) node[below]{$\sigma_1$} -- (0.5,0.04);
	\draw [very thick] (1.3,-.04) node[below]{$\sigma_2$} -- (1.3,0.04);
	\draw [very thick] (2.3,-.04) -- (2.3,0.04);
	%projections	
	\draw[dashed] (0.5,0.3)--(0.5,0);
	\draw[dashed] (1.3,0.42)--(1.3,0);
	\draw[dashed] (2.3,-0.21)--(2.3,0);
	
	%\node[align=left] at (0,-1) {1. Take the opposite of $\Delta_3$};
	% shifted diagram (shift everything by 0.9=(1.8-0.9))
	\begin{scope}[yshift=-2.2cm]
	% axis
	\draw[->] (0, 0)--(3,0) node(xline)[right] {$\sigma$}; 
	\draw[->] (0,-0.8)--(0,0.8) node(yline)[above] {$\tilde{\Delta_3}(\sigma)=\hat{\Delta_3}(\sigma^*-\sigma'+\sigma)$};
	%dots
	%\node[dot] at (-0.4,0.3) {}; %shifted dot
	\node[dot] at (0.4,0.42)  {}; %shifted dot
	\node[dot, label=below:$\hat{\sigma}$] at (0.9,0) (sigstar) {};
	\node[dot] at (1.4,-0.21)  {}; %shifted dot
	%dashes on xaxis
	\draw [very thick] (0.5,-.04) node[below]{$\sigma_1$} -- (0.5,0.04);
	\draw [very thick] (1.3,-.04) node[below]{$\sigma_2$} -- (1.3,0.04);
	\draw [very thick] (2.3,-.04) node[below]{$\sigma'$} -- (2.3,0.04);
	
	%	\node[] at (0,-1) {2. Shift $\hat{\Delta_3}$ to the left with $\tilde{\Delta_3}(\hat{\sigma})=0$.};
	\end{scope}
	
	%calculate new values at points
	\begin{scope}[yshift=-4.6cm]
	% axis
	\draw[->] (0, 0)--(3,0) node(xline)[right] {$\sigma$}; 
	\draw[->] (0,-1)--(0,1) node(yline)[above] {$\tilde{\Delta_3}(\sigma)$};
	%dots
	\node[dot] at (0.4,0.42)  {}; %shifted dot
	\node[dot, red] at (0.5,0.39) {}; %recalculated value at sigma_1
	\node[dot, label=below:$\hat{\sigma}$] at (0.9,0) (sigstar) {};
	\node[dot, red, label=below:$\sigma_2$] at (1.3,-0.29)  {}; %recalculated dot at sigma_2
	\node[dot] at (1.4,-0.21)  {}; %shifted dot
	\node[dot, red,label=below:$\sigma'$] at (2.3,-0.35)  {}; %recalculated dot at sigma'
	%dashes on xaxis
	\draw [very thick] (0.5,-.04) node[below]{$\sigma_1$} -- (0.5,0.04);
	\draw [very thick] (1.3,-.04) -- (1.3,0.04);
	\draw [very thick] (2.3,-.04) -- (2.3,0.04);
	%projections
	\draw[dashed] (0.5,0.39)--(0.5,0);
	\draw[dashed] (1.3,0)--(1.3,-0.29);
	\draw[dashed] (2.3,-0.35)--(2.3,0);
	
	%\node[] at (0,-1.2) {3. Calculate the values of $\tilde{\Delta_3}$ at $\sigma_1, \sigma_2, \sigma'$.};
	\end{scope}
	\end{tikzpicture}
	\caption{Transformation of $\Delta_3(\sigma)$ to $\tilde{\Delta_3}(\sigma)$, and then with the new values at indicated $\sigma_1, \sigma_2, \sigma'$}	
	\label{Delta3tilde}
\end{figure}

\newpage
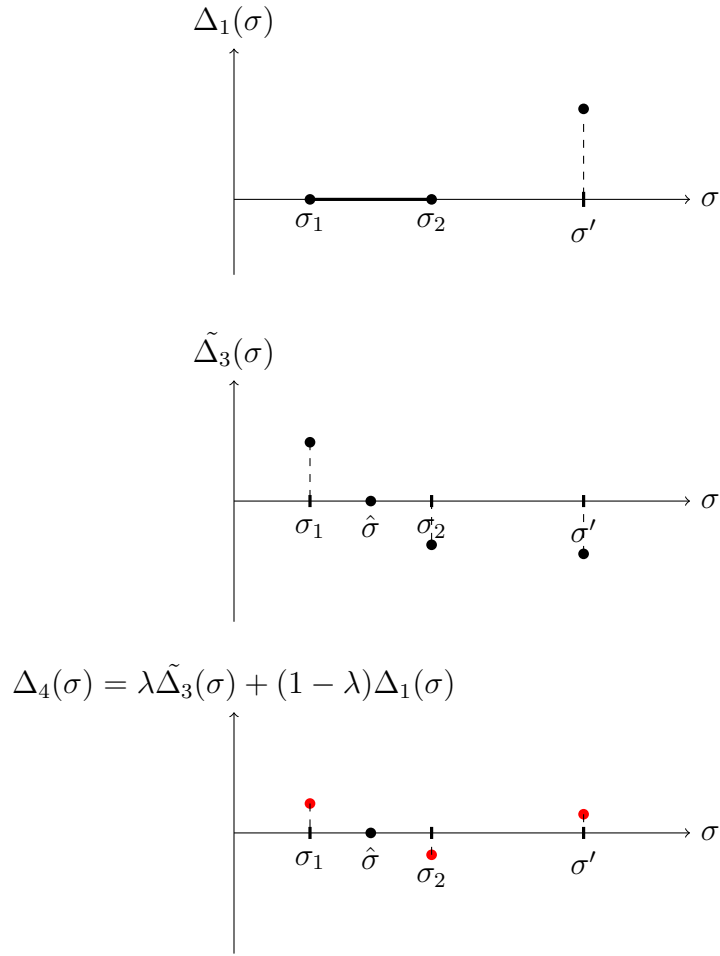
\begin{figure}[htb!]
	\centering
	\begin{tikzpicture}[
	scale=2,
	every node/.style={color=black},
	dot/.style={circle,fill=black,minimum size=4pt,inner sep=0pt,
		outer sep=-1pt},
	]
	
	%Delta_1 again
	%axis
	\draw[->] (0, 0)--(3,0) node(xline)[right] {$\sigma$}; 
	\draw[->] (0,-0.5)--(0,1) node(yline)[above] {$\Delta_1(\sigma)$};
	%dots
	\node[dot,label=below:$\sigma_1$] at (0.5,0)  {};
	\node[dot,label=below:$\sigma_2$] at (1.3,0)  {};
	\node[dot] at (2.3,0.6)  {};
	%dashes on xaxis
	\draw [very thick] (2.3,-.05) node[below]{$\sigma'$} -- (2.3,0.05);
	%projections
	\draw[dashed] (2.3,0.5)--(2.3,0);
	%line
	\draw [very thick] (0.5,0)--(1.3,0);	
	
	%\Delta_3 transformed
	\begin{scope} [yshift=-2cm]
	%axis
	\draw[->] (0, 0)--(3,0) node(xline)[right] {$\sigma$}; 
	\draw[->] (0,-0.8)--(0,0.8) node(yline)[above] {$\tilde{\Delta_3}(\sigma)$};
	%dots
	\node[dot] at (0.5,0.39) {}; %recalculated value at sigma_1
	\node[dot, label=below:$\hat{\sigma}$] at (0.9,0) (sigstar) {};
	\node[dot] at (1.3,-0.29)  {}; %recalculated dot at sigma_2
	\node[dot] at (2.3,-0.35)  {}; %recalculated dot at sigma'
	%dashes on xaxis
	\draw [very thick] (0.5,-.04) node[below]{$\sigma_1$} -- (0.5,0.04);
	\draw [very thick] (1.3,-.04) node[below]{$\sigma_2$} -- (1.3,0.04);
	\draw [very thick] (2.3,-.04) node[below]{$\sigma'$} -- (2.3,0.04);
	%projections
	\draw[dashed] (0.5,0.39)--(0.5,0);
	\draw[dashed] (1.3,0)--(1.3,-0.29);
	\draw[dashed] (2.3,-0.35)--(2.3,0);
	
	\end{scope}
	
	%Delta_4 - mixture
	\begin{scope}[yshift=-4.2cm]
	%axis
	\draw[->] (0, 0)--(3,0) node(xline)[right] {$\sigma$}; 
	\draw[->] (0,-0.8)--(0,0.8) node(yline)[above] {$\Delta_4(\sigma)=\lambda\tilde{\Delta_3}(\sigma)+(1-\lambda)\Delta_1(\sigma)$};
	%dots
	\node[dot, red] at (0.5,0.195) (sigma1) {};
	\node[dot, label=below:$\hat{\sigma}$] at (0.9,0) (sigstar) {};
	\node[dot, red,label=below:$\sigma_2$] at (1.3,-0.145)  {};
	\node[dot, red] at (2.3,0.124)  {};
	%dashes on xaxis
	\draw [very thick] (0.5,-.04) node[below]{$\sigma_1$} -- (0.5,0.04);
	\draw [very thick] (1.3,-.04) -- (1.3,0.04);
	\draw [very thick] (2.3,-.04) node[below]{$\sigma'$} -- (2.3,0.04);
	%projections
	\draw[dashed] (0.5,0.195)--(0.5,0);
	\draw[dashed] (1.3,-0.145)--(1.3,0);
	\draw[dashed] (2.3,0.124)--(2.3,0);
	
	\end{scope}
	% shifted diagram (shift everything by 0.9=(1.8-0.9))
	\end{tikzpicture}
	\caption{The mixture of $\tilde{\Delta_3}(\sigma)$ and $\Delta_1(\sigma)$ (with small $\lambda$) produces a double switch}	
	\label{Delta4}
\end{figure}	

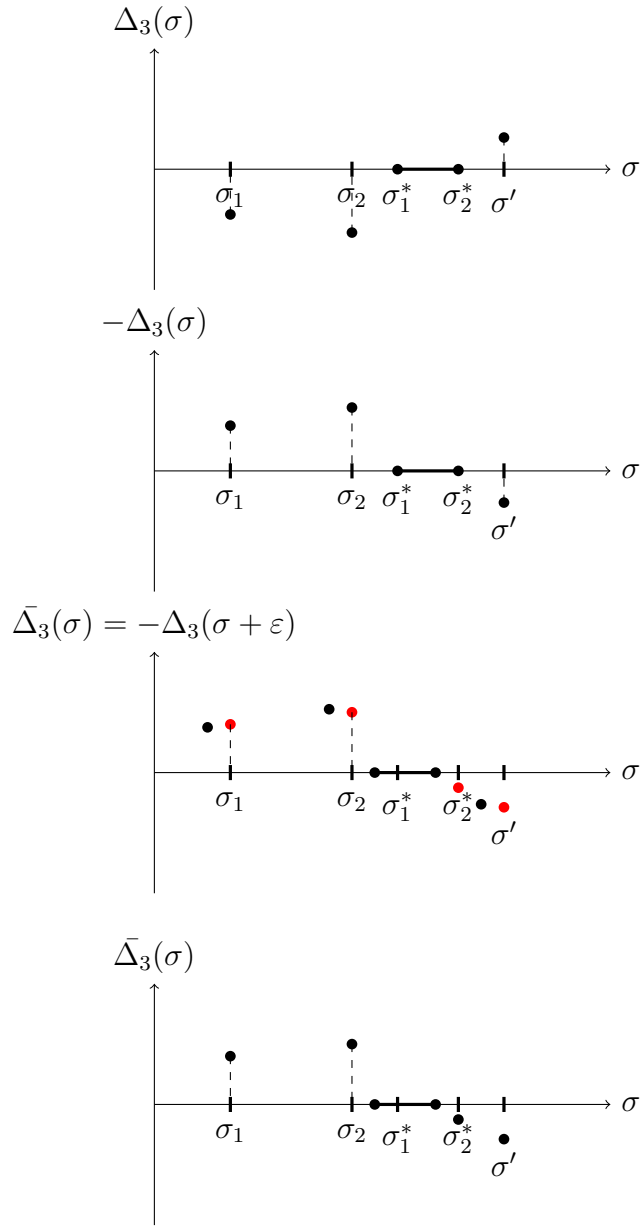
\begin{figure}[H] 
	\centering
	\begin{tikzpicture}[
	scale=2,
	every node/.style={color=black},
	dot/.style={circle,fill=black,minimum size=4pt,inner sep=0pt,
		outer sep=-1pt},
	]
	%Delta_3 again
	%axis
	\draw[->] (0, 0)--(3,0) node(xline)[right] {$\sigma$}; 
	\draw[->] (0,-0.8)--(0,0.8) node(yline)[above] {$\Delta_3(\sigma)$};
	%dots
	\node[dot] at (0.5,-0.3)  {};
	\node[dot] at (1.3,-0.42)  {};
	\node[dot, label=below:$\sigma_1^*$] at (1.6,0)  {};
	\node[dot, label=below:$\sigma_2^*$] at (2.0,0)  {};
	\node[dot] at (2.3,0.21)  {};
	%dashes on xaxis
	\draw [very thick] (0.5,-.05) node[below]{$\sigma_1$} -- (0.5,0.05);
	\draw [very thick] (1.3,-.05) node[below]{$\sigma_2$} -- (1.3,0.05);
	\draw [very thick] (2.3,-.05) node[below]{$\sigma'$} -- (2.3,0.05);
	%projections
	\draw[dashed] (0.5,-0.3)--(0.5,0);
	\draw[dashed] (1.3,-0.42)--(1.3,0);
	\draw[dashed] (2.3,0.21)--(2.3,0);
	%line
	\draw [very thick] (1.6,0)--(2.0,0);
	
	% flip Delta_3
	\begin{scope} [yshift=-2cm]
	%axis
	\draw[->] (0, 0)--(3,0) node(xline)[right] {$\sigma$}; 
	\draw[->] (0,-0.8)--(0,0.8) node(yline)[above] {$-\Delta_3(\sigma)$};
	%dots
	\node[dot] at (0.5,0.3)  {};
	\node[dot] at (1.3,0.42)  {};
	\node[dot, label=below:$\sigma_1^*$] at (1.6,0)  {};
	\node[dot, label=below:$\sigma_2^*$] at (2.0,0)  {};
	\node[dot, label=below:$\sigma'$] at (2.3,-0.21)  {};
	%dashes on xaxis
	\draw [very thick] (0.5,-.05) node[below]{$\sigma_1$} -- (0.5,0.05);
	\draw [very thick] (1.3,-.05) node[below]{$\sigma_2$} -- (1.3,0.05);
	\draw [very thick] (2.3,-.05) -- (2.3,0.05);
	%projections
	\draw[dashed] (0.5,0.3)--(0.5,0);
	\draw[dashed] (1.3,0.42)--(1.3,0);
	\draw[dashed] (2.3,-0.21)--(2.3,0);
	%line
	\draw [very thick] (1.6,0)--(2.0,0);
	\end{scope}	
	
	%calculate new values
	\begin{scope} [yshift=-4cm]
	%axis
	\draw[->] (0, 0)--(3,0) node(xline)[right] {$\sigma$}; 
	\draw[->] (0,-0.8)--(0,0.8) node(yline)[above] {$\bar{\Delta_3}(\sigma)=-\Delta_3(\sigma+\varepsilon)$};
	%dots
	\node[dot] at (0.35,0.3)  {}; %shift dot by epsilon 0.15
	\node[dot, red] at (0.5,0.32)  {}; %new value
	\node[dot] at (1.15,0.42)  {};
	\node[dot, red] at (1.3,0.4)  {};%new value
	\node[dot] at (2.15,-0.21)  {};
	\node[dot] at (1.45,0)  {};
	%	\node[dot, label=below:$\sigma_1^*$ , red] at (1.6,0)  {};
	\node[dot] at (1.85,0)  {};
	\node[dot, red] at (2,-0.1)  {};
	\node[dot, label=below:$\sigma'$, red] at (2.3,-0.23)  {};
	%dashes on xaxis
	\draw [very thick] (0.5,-.05) node[below]{$\sigma_1$} -- (0.5,0.05);
	\draw [very thick] (1.3,-.05) node[below]{$\sigma_2$} -- (1.3,0.05);
	\draw [very thick] (1.6,-.05) node[below]{$\sigma_1^*$} -- (1.6,0.05);
	\draw [very thick] (2,-.05) node[below]{$\sigma_2^*$}-- (2,0.05);
	\draw [very thick] (2.3,-.05) -- (2.3,0.05);		
	%projections	
	\draw[dashed] (0.5,0.32)--(0.5,0);	
	\draw[dashed] (1.3,0.4)--(1.3,0);
	%line
	\draw [very thick] (1.45,0)--(1.85,0);
	\end{scope}	
	
	\begin{scope} [yshift=-6.2cm]
	%axis
	\draw[->] (0, 0)--(3,0) node(xline)[right] {$\sigma$}; 
	\draw[->] (0,-0.8)--(0,0.8) node(yline)[above] {$\bar{\Delta_3}(\sigma)$};
	%dots
	\node[dot] at (0.5,0.32)  {}; %new value
	
	\node[dot] at (1.3,0.4)  {};%new value
	\node[dot] at (1.45,0)  {};
	\node[dot] at (1.85,0)  {};
	\node[dot] at (2,-0.1)  {};
	\node[dot, label=below:$\sigma'$] at (2.3,-0.23)  {};
	%dashes on xaxis
	\draw [very thick] (0.5,-.05) node[below]{$\sigma_1$} -- (0.5,0.05);
	\draw [very thick] (1.3,-.05) node[below]{$\sigma_2$} -- (1.3,0.05);
	\draw [very thick] (1.6,-.05) node[below]{$\sigma_1^*$} -- (1.6,0.05);
	\draw [very thick] (2,-.05) node[below]{$\sigma_2^*$}-- (2,0.05);
	\draw [very thick] (2.3,-.05) -- (2.3,0.05);
	%projections
	\draw[dashed] (0.5,0.32)--(0.5,0);
	\draw[dashed] (1.3,0.4)--(1.3,0);
	%line
	\draw [very thick] (1.45,0)--(1.85,0);
	\end{scope}	
	\end{tikzpicture}
	\caption{Transformation of $\Delta_3(\sigma)$ to $\bar{\Delta_3}(\sigma)$  with the new values at $\sigma_1, \sigma_2, \sigma_2^*, \sigma'$}
	\label{Delta3bar}
\end{figure}

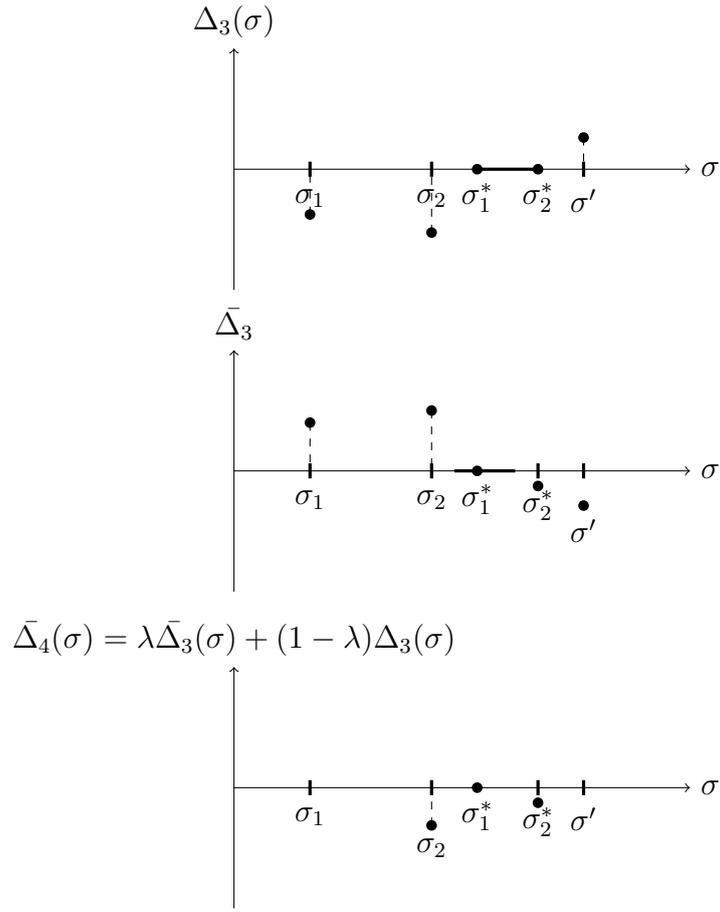
\begin{figure}[H] 
	\centering
	\begin{tikzpicture}[
	scale=2,
	every node/.style={color=black},
	dot/.style={circle,fill=black,minimum size=4pt,inner sep=0pt,
		outer sep=-1pt},
	]
	%Delta 3 again - main weight component
	%axis
	\draw[->] (0, 0)--(3,0) node(xline)[right] {$\sigma$}; 
	\draw[->] (0,-0.8)--(0,0.8) node(yline)[above] {$\Delta_3(\sigma)$};
	%dots
	\node[dot] at (0.5,-0.3)  {};
	\node[dot] at (1.3,-0.42)  {};
	\node[dot, label=below:$\sigma_1^*$] at (1.6,0)  {};
	\node[dot, label=below:$\sigma_2^*$] at (2.0,0)  {};
	\node[dot] at (2.3,0.21)  {};
	%dashes on xaxis
	\draw [very thick] (0.5,-.05) node[below]{$\sigma_1$} -- (0.5,0.05);
	\draw [very thick] (1.3,-.05) node[below]{$\sigma_2$} -- (1.3,0.05);		
	\draw [very thick] (2.3,-.05) node[below]{$\sigma'$} -- (2.3,0.05);
	%projections
	\draw[dashed] (0.5,-0.3)--(0.5,0);
	\draw[dashed] (1.3,-0.42)--(1.3,0);
	\draw[dashed] (2.3,0.21)--(2.3,0);
	%line
	\draw [very thick] (1.6,0)--(2.0,0);
	
	%Delta3 bar - less weight component
	\begin{scope}[yshift=-2cm]
	%axis
	\draw[->] (0, 0)--(3,0) node(xline)[right] {$\sigma$}; 
	\draw[->] (0,-0.8)--(0,0.8) node(yline)[above] {$\bar{\Delta_3}$};
	%dots
	\node[dot] at (0.5,0.32)  {};
	\node[dot] at (1.3,0.4)  {};
	\node[dot, label=below:$\sigma_1^*$ ] at (1.6,0)  {};
	\node[dot] at (2,-0.1)  {};
	\node[dot, label=below:$\sigma'$] at (2.3,-0.23)  {};
	%dashes on xaxis
	\draw [very thick] (0.5,-.05) node[below]{$\sigma_1$} -- (0.5,0.05);
	\draw [very thick] (1.3,-.05) node[below]{$\sigma_2$} -- (1.3,0.05);
	\draw [very thick] (2,-.05) node[below]{$\sigma_2^*$}-- (2,0.05);
	\draw [very thick] (2.3,-.05) -- (2.3,0.05);
	%projections
	\draw[dashed] (0.5,0.32)--(0.5,0);
	\draw[dashed] (1.3,0.4)--(1.3,0);
	%line		
	\draw [very thick] (1.45,0)--(1.85,0);
	\end{scope}
	
	%Mixture delta_3 bar and delta 3
	\begin{scope}[yshift=-4.1cm]
	%axis
	\draw[->] (0, 0)--(3,0) node(xline)[right] {$\sigma$}; 
	\draw[->] (0,-0.8)--(0,0.8) node(yline)[above] {$\bar{\Delta_4}(\sigma)=\lambda\bar{\Delta_3}(\sigma)+(1-\lambda)\Delta_3(\sigma)$};
	%dots
	\node[dot, label=below:$\sigma_2$] at (1.3,-0.25)  {};
	\node[dot, label=below:$\sigma_1^*$] at (1.6,0)  {};
	\node[dot] at (2,-0.1)  {};
	%dashes on xaxis
	\draw [very thick] (0.5,-.05) node[below]{$\sigma_1$} -- (0.5,0.05);
	\draw [very thick] (1.3,-.05) -- (1.3,0.05);
	\draw [very thick] (2,-.05) node[below]{$\sigma_2^*$}-- (2,0.05);
	\draw [very thick] (2.3,-.05)node[below]{$\sigma'$}-- (2.3,0.05);
	%projections
	\draw[dashed] (1.3,-0.25)--(1.3,0);
	\end{scope}
	\end{tikzpicture}		
	\caption{The mixture of $\bar{\Delta_3}(\sigma)$ and $\Delta_3(\sigma)$ (with small $\lambda$) at the three points $\sigma_2, \sigma_1^*, \sigma_2^*$}
	\label{Delta4bar}		
\end{figure}

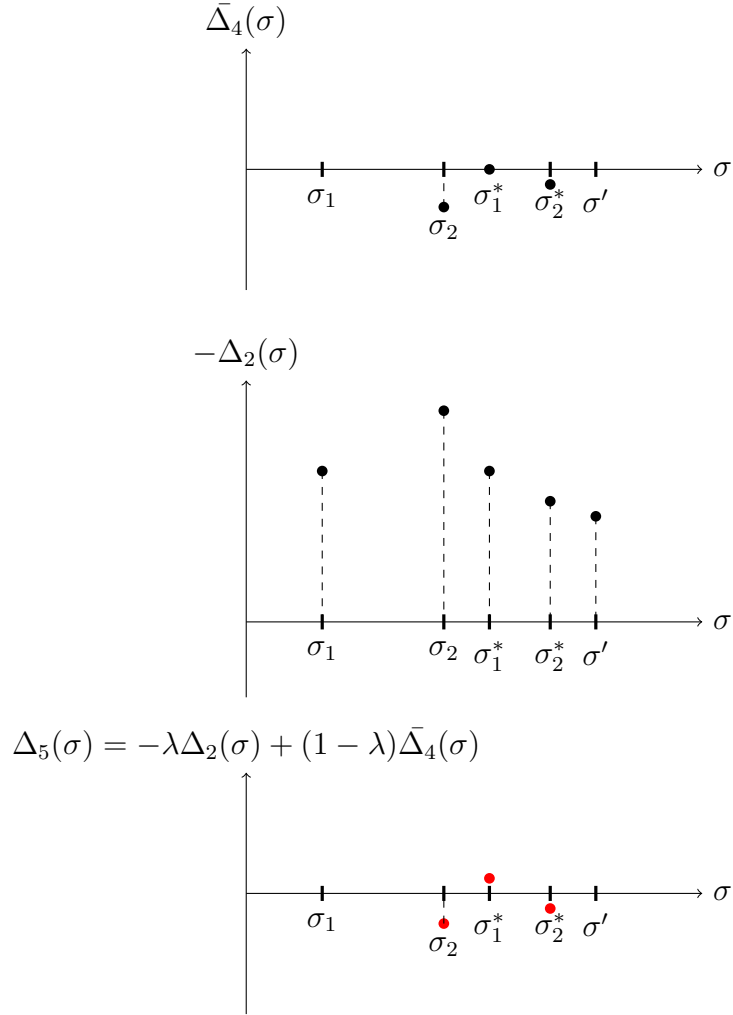
\begin{figure}[H] 
	\centering
	\begin{tikzpicture}[
	scale=2,
	every node/.style={color=black},
	dot/.style={circle,fill=black,minimum size=4pt,inner sep=0pt,
		outer sep=-1pt},
	]
	%Delta 4 bar again - main weight component
	%axis
	\draw[->] (0, 0)--(3,0) node(xline)[right] {$\sigma$}; 
	\draw[->] (0,-0.8)--(0,0.8) node(yline)[above] {$\bar{\Delta_4}(\sigma)$};
	%dots
	\node[dot, label=below:$\sigma_2$] at (1.3,-0.25)  {};
	\node[dot, label=below:$\sigma_1^*$] at (1.6,0)  {};
	\node[dot] at (2,-0.1)  {};
	%dashes on xaxis
	\draw [very thick] (0.5,-.05) node[below]{$\sigma_1$} -- (0.5,0.05);
	\draw [very thick] (1.3,-.05) -- (1.3,0.05);
	\draw [very thick] (2,-.05) node[below]{$\sigma_2^*$}-- (2,0.05);
	\draw [very thick] (2.3,-.05)node[below]{$\sigma'$}-- (2.3,0.05);
	%projections
	\draw[dashed] (1.3,-0.25)--(1.3,0);
	
	%-Delta2 - less weight component
	\begin{scope}[yshift=-3cm]
	%axis
	\draw[->] (0, 0)--(3,0) node(xline)[right] {$\sigma$}; 
	\draw[->] (0,-0.5)--(0,1.6) node(yline)[above] {$-\Delta_2(\sigma)$};
	%dots
	\node[dot] at (0.5,1)  {};
	\node[dot] at (1.3,1.4)  {};	
	\node[dot] at (1.6,1)  {};
	\node[dot] at (2,0.8)  {};
	\node[dot] at (2.3,0.7)  {};
	%dashes on xaxis	
	\draw [very thick] (0.5,-.05) node[below]{$\sigma_1$} -- (0.5,0.05);
	\draw [very thick] (1.3,-.05) node[below]{$\sigma_2$} -- (1.3,0.05);
	\draw [very thick] (1.6,-.05) node[below]{$\sigma_1^*$} -- (1.6,0.05);
	\draw [very thick] (2,-.05) node[below]{$\sigma_2^*$} -- (2,0.05);
	\draw [very thick] (2.3,-.05) node[below]{$\sigma'$} -- (2.3,0.05);
	%projections	
	\draw[dashed] (0.5,1)--(0.5,0);	
	\draw[dashed] (1.3,1.4)--(1.3,0);
	\draw[dashed] (1.6,1)--(1.6,0);
	\draw[dashed] (2,0.8)--(2,0);
	\draw[dashed] (2.3,0.7)--(2.3,0);
	\end{scope}
	
	\begin{scope}[yshift=-4.8cm]
	%axis
	\draw[->] (0, 0)--(3,0) node(xline)[right] {$\sigma$}; 
	\draw[->] (0,-0.8)--(0,0.8) node(yline)[above] {$\Delta_5(\sigma)=-\lambda\Delta_2(\sigma)+(1-\lambda)\bar{\Delta_4}(\sigma)$};
	%dots
	\node[dot, label=below:$\sigma_2$, red] at (1.3,-0.2) {};
	\node[dot, red] at (1.6,0.1) {};
	\node[dot, red] at (2,-0.1) {};
	%dashes on xaxis
	\draw [very thick] (0.5,-.05) node[below]{$\sigma_1$} -- (0.5,0.05);
	\draw [very thick] (1.3,-.05) -- (1.3,0.05);
	\draw [very thick] (1.6,-.05) node[below]{$\sigma_1^*$}-- (1.6,0.05);
	\draw [very thick] (2,-.05) node[below]{$\sigma_2^*$}-- (2,0.05);
	\draw [very thick] (2.3,-.05)node[below]{$\sigma'$}-- (2.3,0.05);
	%projections
	\draw[dashed] (1.3,-0.2)--(1.3,0);
	\end{scope}
	\end{tikzpicture}		
	\caption{The mixture of $-\Delta_2(\sigma)$ and $\bar{\Delta_4}(\sigma)$ (with small $\lambda$) produces a double switch. It is sufficient to consider the value of $\Delta_5(\sigma)$ at the three points $\sigma_2, \sigma_1^*, \sigma_2^*$}
	\label{Delta5}		
\end{figure}

\bibliographystyle{plain}
\bibliography{Biblio}

\begin{thebibliography}{10}

\bibitem{abbas2011one}
A.~E. Abbas and D.~E. Bell.
\newblock One-switch independence for multiattribute utility functions.
\newblock {\em Operations Research}, 59(3):764--771, 2011.

\bibitem{abbas2012methods}
A.~E. Abbas and D.~E. Bell.
\newblock Methods-one-switch conditions for multiattribute utility functions.
\newblock {\em Operations Research}, 60(5):1199--1212, 2012.

\bibitem{abbas2015ordinal}
A.~E. Abbas and D.~E. Bell.
\newblock Ordinal one-switch utility functions.
\newblock {\em Operations Research}, 63(6):1411--1419, 2015.

\bibitem{almira2016on}
J.~M. Almira.
\newblock On {P}opoviciu-{I}onescu functional equation.
\newblock {\em Annales Mathematicae Silesianae}, 30(1):5--15, 2016.

\bibitem{anchugina2016simple}
N.~Anchugina.
\newblock A simple framework for the axiomatization of exponential and
  quasi-hyperbolic discounting.
\newblock {\em Theory and Decision}, 2016.
\newblock \mbox{doi}:10.1007/s11238-016-9566-8.

\bibitem{anchugina2016aggregating}
N.~Anchugina, M.~Ryan, and A.~Slinko.
\newblock Aggregating time preferences with decreasing impatience.
\newblock {\em arXiv preprint arXiv:1604.01819}, 2016.

\bibitem{anscombe1963definition}
F.~J. Anscombe and R.~J. Aumann.
\newblock A definition of subjective probability.
\newblock {\em Annals of Mathematical Statistics}, 34(1):199--205, 1963.

\bibitem{attema2010time}
A.~E. Attema, H.~Bleichrodt, K.~I.~M. Rohde, and P.~P. Wakker.
\newblock Time-tradeoff sequences for analyzing discounting and time
  inconsistency.
\newblock {\em Management Science}, 56(11):2015--2030, 2010.

\bibitem{bell1988one}
D.~E. Bell.
\newblock One-switch utility functions and a measure of risk.
\newblock {\em Management Science}, 34(12):1416--1424, 1988.

\bibitem{bell1995risk}
D.~E. Bell.
\newblock Risk, return, and utility.
\newblock {\em Management Science}, 41(1):23--30, 1995.

\bibitem{bell2000utility}
D.~E. Bell and P.~C. Fishburn.
\newblock Utility functions for wealth.
\newblock {\em Journal of Risk and Uncertainty}, 20(1):5--44, 2000.

\bibitem{bell2001strong}
D.~E. Bell and P.~C. Fishburn.
\newblock Strong one-switch utility.
\newblock {\em Management Science}, 47(4):601--604, 2001.

\bibitem{bleichrodt2016measurement}
H.~Bleichrodt, Y.~Gao, and K.~I.~M. Rohde.
\newblock A measurement of decreasing impatience for health and money.
\newblock {\em Journal of Risk and Uncertainty}, 52(3):213--231, 2016.

\bibitem{bleichrodt2009non}
H.~Bleichrodt, K.~I.~M. Rohde, and P.~P. Wakker.
\newblock Non-hyperbolic time inconsistency.
\newblock {\em Games and Economic Behavior}, 66(1):27--38, 2009.

\bibitem{cavagnaro2016functional}
D.~R. Cavagnaro, G.~J. Aranovich, S.~M. McClure, M.~A. Pitt, and J.~I. Myung.
\newblock On the functional form of temporal discounting: an optimized adaptive
  test.
\newblock {\em Journal of Risk and Uncertainty}, 52(3):233--254, 2016.

\bibitem{fishburn1982foundations}
P.~C. Fishburn.
\newblock {\em The foundations of expected utility}, volume~31.
\newblock Dordrecht: Reidel Publishing Co, 1982.

\bibitem{fishburn1982time}
P.~C. Fishburn and A.~Rubinstein.
\newblock Time preference.
\newblock {\em International Economic Review}, 23(3):677--694, 1982.

\bibitem{frederick2002time}
S.~Frederick, G.~Loewenstein, and T.~O'Donoghue.
\newblock Time discounting and time preference: a critical review.
\newblock {\em Journal of Economic Literature}, 40(2):351--401, 2002.

\bibitem{harvey1995proportional}
C.~M. Harvey.
\newblock Proportional discounting of future costs and benefits.
\newblock {\em Mathematics of Operations Research}, 20(2):381--399, 1995.

\bibitem{mcclure2007time}
S.~M. McClure, K.~M. Ericson, D.~I. Laibson, G.~Loewenstein, and J.~D. Cohen.
\newblock Time discounting for primary rewards.
\newblock {\em The Journal of Neuroscience}, 27(21):5796--5804, 2007.

\bibitem{mickens1991difference}
R.~Mickens.
\newblock {\em Difference equations}.
\newblock New York: Van Nostrand Reinhold, 1991.

\bibitem{prelec2004decreasing}
D.~Prelec.
\newblock Decreasing impatience: a criterion for non-stationary time preference
  and hyperbolic discounting.
\newblock {\em Scandinavian Journal of Economics}, 106(3):511--532, 2004.

\bibitem{rado1962characterization}
F.~Radó.
\newblock Caractérisation de l’ensemble des intégrales des équations
  différentielles linéaires homogènes à coefficients constants d’ordre
  donné.
\newblock {\em Mathematica (Cluj)}, 4(27):131--143, 1962.

\bibitem{rohde2010hyperbolic}
K.~I.~M. Rohde.
\newblock The hyperbolic factor: a measure of time inconsistency.
\newblock {\em Journal of Risk and Uncertainty}, 41(2):125--140, 2010.

\bibitem{sayman2009investigation}
S.~Sayman and A.~{\"O}nc{\"u}ler.
\newblock An investigation of time inconsistency.
\newblock {\em Management Science}, 55(3):470--482, 2009.

\bibitem{takeuchi2011non}
K.~Takeuchi.
\newblock Non-parametric test of time consistency: present bias and future
  bias.
\newblock {\em Games and Economic Behavior}, 71(2):456--478, 2011.

\end{thebibliography}

\end{document}